\theoremstyle{plain}
\newtheorem{theorem}{Theorem}
\newtheorem{lemma}{Lemma}
\newtheorem{proposition}{Proposition}
\newtheorem{assumption}{Assumption}
\newtheorem{remark}{Remark}
\newtheorem{definition}{Definition}
\newtheorem{corollary}{Corollary}
\newtheorem{example}{Example}
\begin{document}

\title{Implicit transaction costs and the fundamental theorems of asset pricing  }
\author{Erindi Allaj \\ University of Rome Tor Vergata, Via Columbia 2, 00133 Rome, Italy \\erindi.allaj@uniroma2.it}
\maketitle
\begin{abstract}
This paper studies arbitrage pricing theory in financial markets with implicit transaction costs.  We extend the existing theory to include the more realistic possibility that the price at which the investors trade is dependent on the traded volume. The investors in the market always buy at the ask and sell at the bid price. Implicit transaction costs are composed of two terms, one is able to capture the bid-ask spread, and the second the price impact. Moreover, a new definition of a self-financing portfolio is obtained. The self-financing condition suggests that continuous trading is possible, but is restricted to predictable trading strategies  having c\'adl\'ag (right-continuous with left limits) and c\'agl\'ad (left-continuous with right limits) paths of bounded quadratic variation and of finitely many jumps. That is, c\'adl\'ag and c\'agl\'ad predictable trading strategies of infinite variation, with finitely many jumps and of finite quadratic variation are allowed in our setting. Restricting ourselves to c\'agl\'ad predictable trading strategies, we show that the existence of an equivalent probability measure is equivalent to the absence of arbitrage opportunities, so that the first fundamental theorem of asset pricing (FFTAP) holds. It is also shown that the use of continuous and bounded variation trading strategies can improve the efficiency of hedging in a market with implicit transaction costs. To better understand how to apply the theory proposed we provide an example of an implicit transaction cost economy that is linear and non-linear in the order size. 
\\\textbf{Keywords} Arbitrage pricing theory $\cdot$ Transaction costs $\cdot$ Fundamental theorems of arbitrage
\\\textbf{JEL Classification} G12 . G13 
\end{abstract}
\section{Introduction}
The subject of this paper is the study of the arbitrage pricing theory in continuous time markets with implicit transaction costs. These implicit costs are measured by comparing marginal prices with the marginal benchmark price, given by the marginal mid-price, and by the comovement of the ask and the bid price with the trading strategy. To each trade corresponds a different price, the ask and the bid price, depending on whether the trade is a buy or a sell. The trade prices are assumed to depend on the order size of the trade. 

The standard arbitrage pricing theory assumes, among other things, that every asset in the market can be traded without any transaction costs. A large body of  theoretical research in economics and finance has relaxed the assumption of no transaction costs in continuous time setting. The impact of transaction costs on the investment decision making  has been analyzed on several papers in the economic literature such as Amihud \& Mendelson (1986), Constantinides (1986), Davis \& Norman (1990), Dumas \& Luciano (1991),  Gerhold \textit{et al.} (2013), and many other papers.  

Several papers in the finance literature also analyze the effects of the introduction of (implicit) transaction costs on the standard arbitrage pricing theory. For example, Guasoni (2006) introduces a simple criterion for the absence of arbitrage opportunities under proportional transaction costs and under some additional assumptions regarding the return process. Several other valuable papers try to prove different versions of the first fundamental theorem of asset pricing (FFTAP) under transaction costs. Guasoni \textit{et al.} (2010) prove a version of the FFTAP for continuous time market models with small proportional transaction costs. The paper by Denis \textit{et al.} (2012) proves also a version of the FFTAP with transaction costs where the bid and the ask prices are assumed to be locally bounded c\'adl\'ag (right-continuous with left limits) processes. Other papers including also the discrete time case are given, for example, by Cherny (2007), Denis \& Kabanov (2012), Jouini \& Kallal (1995), and L\'epinette \& Tran (2017). In particular, under the discrete time setting, the last paper outlines some no arbitrage criteria for market models with general non-proportional transaction costs. 

There is also a considerable literature dealing with the problem of the derivative contract hedging in the presence of transaction costs. In a market with proportional transaction costs, Leland (1995) introduces a sophisticated method in order to hedge the European call option on a discrete time scale. Kabanov \& Safarian (1997) show that the value of the replicating portfolio converges to the payoff of the European call option for arbitrary small transaction costs. For other related literature on the hedging problem, see also Bensaid \textit{et al.} (1992), Hodges \& Neuberger (1989), and Soner \textit{et al.} (1995).  

In reviewing the finance literature studying the arbitrage pricing theory with transaction costs and in a continuous time setting, three things are noted. First, in all of the studies, admissible trading strategies are restricted to trading strategies that are of bounded variation. Second, the FFTAP generally states the equivalence between no arbitrage opportunities and the existence of a consistent price system. Lastly, the ask and the bid price are supposed to not depend on the traded volume. 

To overcome these limits, we propose an implicit transaction cost economy made up with one risky and one riskless asset, where the investors buy and sell the risky asset at the ask and the bid price.  We suppose these prices depend not only on time, but also on the traded volume, i.e. $A(t,y)$ and $B(t,y)$, where $y>0$ gives a buy order and $y<0$ a sell order. The influence of the traded volume on the trade price is evidenced in the literature by different authors, for example, Almgren \& Chriss (2001), Bank \& Baum (2004), Barclay \& Warner (1993), Bertsimas \& Lo (1998), Blais \& Protter (2010), Engle \& Patton (2004), Guasoni \& R\'asonyi (2015), Hasbrouck (1991), and Schied \& Sch\"{o}neborn (2009). A common feature of all these studies is that traded volume has an impact on the trade price. The only difference being the size of the traded volume. 

The prices $A(t,y)$ and $B(t,y)$ are supposed to be equal to $M(t,y)+\frac{1}{2}P(t,y)$ and $M(t,y)-\frac{1}{2}P(t,y)$. $M(t,y)$ and $P(t,y)$ are assumed to be non-negative and $C^2$ in the traded volume and $M(t,0)$, $P(t,0)$ c\'adl\'ag locally bounded semimartingales. $M(t,0)$ and $P(t,0)$ give the marginal mid-price and the marginal bid-ask spread corresponding to an infinitesimal purchase or sale. The admissible trading strategies are allowed to be c\'adl\'ag predictable and c\'agl\'ad adapted with bounded quadratic variation and finitely many jumps.  The self-financing condition is composed of the usual standard self-financing condition, bid-ask spread part,  and the price impact measured by the changes in the price that are created when trading on a given asset. 

Our approach is strickly related to the works of Bank \& Baum (2004), \c{C}etin \textit{et al.} (2004), 
Jarrow (1992) and Jarrow (1994). A common feature of these works is that the trades of the large traders have an impact on prices. Here, as in \c{C}etin \textit{et al.} (2004), prices are assumed to depend only on the investors' current trade. 

The no arbitrage (NA) property is still true in our framework. By limiting ourselves to c\'agl\'ad adapted trading strategies, we also prove the no free lunch with vanishing risk (NFLVR) property. The proof of the NA property proceeds more or less along the same lines as in the standard proof of the NA property. On the other hand, the proof of the FFTAP starts by first proving that the set of all admissible portfolios is bounded in $L^0$. This also permits us to conclude that  the set composed of the continuous part of the quadratic variations of all admissible trading strategies is bounded in $L^0$. Next, supposing the NFLVR holds, we prove a compactness result which shows that every sequence of bounded quadratic variation, admissible trading strategies has a sequence of convex combinations which converges a.s. to a bounded quadratic variation trading strategy. More importantly, we show that any sequence of bounded quadratic variation, admissible trading strategies has a subsequence which converges a.s. to a bounded quadratic variation trading strategy. Using this important result, we then prove that the set of all admissible portfolios that can be dominated by an admissible portfolio is Fatou-closed. As a corollary of the above proofs, we show that by restricting the set of admissible portfolios to trading strategies that are continuous and of bounded variation, the set becomes a convex cone and is  not only Fatou-closed but also weak*-closed. Finally, equipped with these results we give the proof of the FFTAP. 

It is worth mentioning that the recent paper by L\'epinette \& Tran (2017) proves that, under some conditions on the structure of the transaction costs, the FFTAP stated as in the standard arbitrage pricing theory is still valid under a discrete time setting and under general non-linear transaction costs. The equivalent probability measure rules out arbitrage opportunities  that hold also in financial models with linear transaction costs. From this point of view, the models are equivalent, in the sense that a risk-neutral probability measure in our setting corresponds to zero arbitrage opportunities in the model with no bid-ask spread part (linear transaction costs part) and price impact part, and zero arbitrage opportunities in the model with only the bid-ask spread part (linear transaction costs part). 

Although the proposed economy satisfies the FFTAP, it is incomplete. Even though the market is incomplete (in an $L^2$-sense), we show that the use of continuous and bounded variation trading strategies is the best choice when dealing with the problem of hedging. The use of these strategies allows investors to remove the price impact from portfolio's value. This result suggests that investors can split a large trade into infinitesimally small trades in order to avoid the price impact. To prove this, we use arguments similar  to those used by \c{C}etin \textit{et al.} (2004) in the proof of the second fundamental theorem of asset pricing (SFTAP).  The authors develop a mathematical model including liquidity risk into the standard arbitrage pricing theory. The basic idea of the proof is to assume that there exists a second type of economy, called the standard economy, without implicit transaction costs in addition to the implicit transaction cost economy. 

Taken together, these results make three principal contributions to arbitrage pricing theory. First, they 
show that the FFTAP is valid even in the presence of implicit transaction costs. Indeed, NFLVR is equivalent to the existence of an equivalent probability measure. When this probability measure is unique, the results show that in markets with implicit transaction costs trading with continuous trading strategies of bounded variation can improve hedging. Last but not least, they are proved under the realistic assumption that the trade price depends on the traded volume. 

Another important result shown in the paper is that continuous trading is not only limited to bounded variation trading strategies. In particular, infinite variation trading strategies with bounded quadratic variation and finitely many jumps are allowed in our setting. A notable example is given by the replicating trading strategy of the European call/put option. 

Overall, we believe that these findings will improve the understanding of the effects of the transaction costs on the arbitrage pricing theory. 

The structure of the paper is as follows. Sec. \ref{the model} presents the model, describing 
its main assumptions. Sec. \ref{sec:3} derives the self-financing portfolio for predictable c\'adl\'ag  and c\'agl\'ad trading strategies having bounded quadratic variation and finitely many jumps. Sec. \ref{na} is dedicated to the proof of the FFTAP. Sec. \ref{sec2} studies market completeness. Sec. \ref{bs} provides an example of a linear and a non-linear implicit transaction cost economy by extending the obtained results to the case of the Black-Scholes (BS) model.  Sec. \ref{conc} concludes the paper.   
\section{The market model\label{the model}}
For a fixed time trading horizon $[0, T]$, consider the filtered probability space $(\Omega, \mathcal{F}, \textbf{F}, \mathbb{P})$. The filtration $\textbf{F}=(\mathcal{F}_{t})_{(0 \leq t \leq T)}$ satisfies the usual conditions of completeness and right-continuity and $\mathbb{P}$ denotes the reference probability measure. The sigma algebra $\mathcal{F}$ is generated by $\cup_{t \in [0, T]} \mathcal{F}_{t}$, and  $\mathcal{F}_{0}$ is trivial, i.e. $\mathcal{F}_{0}=\left\{\emptyset, \Omega\right\}$. 

The reference economy is composed of two assets, one riskless and one risky. The riskless asset plays the role of the numeraire, and for simplicity, it is assumed to be constant, i.e. has a zero rate of return. The risky asset is a stock. The price at which shares of the stock can be bought or sold is different. 
The representative investor builds a portfolio by combining an investment mix of stock and riskless asset. 

\subsection{Trading strategies\label{trading}}
\begin{definition}\label{tradin}
A trading strategy, or portfolio, is given by $Z_t=(Z^0_{t}, Z^1_{t})_{t \in [0,T]}$, where $Z^0_{t}$ and $Z^1_{t}$ denote the  number of units held at time $t$ of the riskless asset and  the risky asset. The trading strategies can take one of the following forms
\begin{itemize}
\item[1.] $Z^1$ is a c\'adl\'ag predictable process, with finitely many jumps, and of bounded quadratic variation on $[0,T]$, and $Z^0$ an optional process on $[0,T]$;
\item[2.] $Z^1$ is a c\'agl\'ad adapted process, with finitely many jumps, and of bounded quadratic variation on $[0,T]$, and $Z^0$ a l\'adl\'ag (right-limited and left-limited) adapted process on $[0,T]$.
\end{itemize}
\end{definition}
The following definition specifies the concept of the quadratic variation of a real-valued stochastic process $X$. 
\begin{definition}\label{quadratic}
Let $\sigma_{n}:0=\tau_{0}^{n}\leq\tau_{1}^{n}\leq...\leq\tau_{i_{n}}^{n}=T$ be a sequence of random partitions of the interval $[0,T]$ tending to identity (see Protter (2004)), where $\tau_{i}^{n}$'s are stopping times. We then say that $X$ has bounded quadratic variation $[X,X]_{T}$ on a given interval $[0,T]$ if for every sequence $\{\sigma_{n}\}$  the quantity
\begin{equation}
[X,X]_{T}=\lim_{n\rightarrow\infty}\sum_{i\geq1}|X_{\tau_{i}^{n}}-X_{\tau_{i-1}^{n}}|^{2}
\end{equation}
exists and is finite a.s..
\end{definition}
\subsection{Transaction costs}  
Trading in the stock, that is, buying or selling the stock, incurs transaction costs. According to Keim \& Madhavan (1995), these costs can be divided in explicit and implicit. Examples of explicit costs include brokerage commissions, administrative costs, and transaction securities taxes. On the other hand, the measurement of the implicit transaction costs is usually concerned with comparing the transaction price, namely, the price that the investor pays or receives for the stock, and the price that would prevail without the trade happening.  While explicit costs are easy to track in practice, the measurement of implicit costs would require a benchmark price against which to compare the transaction and the non-transaction price. 

This paper ignores the explicit costs and focus only on the implicit costs. In this subsection, we look at how these costs can be explicitly measured. The next section shows how these costs can be included in a self-financing portfolio.

Let $Z^1_{\tau_{i-1}}$ be the amount of stock held by the investor on the interval $(\tau_{i-1}, \tau_{i}]$, and suppose that at time $\tau_{i}$ the investor adjusts his holdings from $Z^1_{\tau_{i-1}}$ to $Z^1_{\tau_{i}}$ the amount to be held over the interval $(\tau_{i}, \tau_{i+1}]$. Further suppose the price the investor could sell and buy the stock is $B(\tau_{i})$ and $A(\tau_{i})$, respectively the bid and the ask price, with ask price being greater or equal than the bid price. The size of the transaction amounts to  $Z^1_{\tau_{i}}-Z^1_{\tau_{i-1}}$, while the monetary value to $(Z^1_{\tau_{i}}-Z^1_{\tau_{i-1}})A(\tau_{i})$ or $(Z^1_{\tau_{i}}-Z^1_{\tau_{i-1}})B(\tau_{i})$. 

By convention, a  positive sign of $Z^1_{t_{i}}-Z^1_{\tau_{i-1}}$ indicates a buy, a negative sign a sale, and $Z^1_{\tau_{i}}-Z^1_{\tau_{i-1}} = 0$ indicates no trading in the risky asset at time $\tau_{i}$.
\begin{assumption}\label{mid}
The prices $A$ and $B$ are assumed to depend on time and on the order size in the following form,  i.e. $A=A(t,y)=M(t,y)+\frac{1}{2}P(t,y)$ and $B=B(t,y)=M(t,y)-\frac{1}{2}P(t,y)$, where $M(t,y)=\frac{1}{2}[A(t,y)+B(t,y)]$ and $P(t,y)=A(t,y)-B(t,y)$. The non-negative quantities $M(t,y)$ and $P(t,y)$ are adapted to the filtration  $\textbf{F}$, $M(t,0)$ and $P(t,0)$ are c\'adl\'ag locally bounded semimartingales adapted to $\textbf{F}$ with decomposition $X= N +H$, where $N$ is an $\textbf{F}$-local martingale and $H$ a process of bounded variation. Moreover, $M(t,y)$ and $P(t,y)$ are $C^2$ in $y \in\mathbb{R}$, and $M'(t,y)$, $P'(t,y)$ together with $M''(t,y)$, $P''(t,y)$ are c\'adl\'ag locally bounded in $t$, where $\frac{\partial X(t,y)}{\partial y}=X'(t,y)$ and $\frac{\partial^2 X(t,y)}{\partial^2 y}=X''(t,y)$.  
\end{assumption}
To justify the functional forms choosen for $A$ and $B$ we exploit the properties of $M$ and $P$ to write the prices $A$ and $B$ as
\begin{eqnarray}
A(t,y)&=&[M(t,0)+\frac{1}{2}P(t,0)] + [M'(t,0)+\frac{1}{2}P'(t,0)]y\nonumber\\&+&o^{M}(t,y)+\frac{1}{2}o^{P}(t,y)\\
B(t,y)&=&[M(t,0)-\frac{1}{2}P(t,0)] + [M'(t,0)-\frac{1}{2}P'(t,0)]y\nonumber\\&+&o^{M}(t,y)-\frac{1}{2}o^{P}(t,y)
\end{eqnarray}
where  $A(t, 0)$, $B(t, 0)$, $M(t, 0)$, and $P(t, 0)$ are the marginal ask, bid, mid-price, and bid-ask spread. Taylor formulas show that the ask and the bid price are the sum of the marginal ask and the marginal bid price corresponding to an economy without price impact, the sensitivity of the ask and the bid price on the trade size, and the errors in approximating these relationships.  

Explicit transaction costs are easy quantifiable in practice since they involve direct monetary payments. Implicit transaction costs by contrast are more difficult to measure because they do not lead to a physical exchange of money. Commonly in finance literature and by most professional traders, they are determined as the product between the trade size and the difference between the transaction prices and a benchmark price. A widely used benchmark is given by the quotation mid-point. 

In our setting, an obvious measure of the implicit transaction costs can be obtained as follows. 
\begin{assumption}\label{ass2}
Given a transaction of monetary value $(Z^1_{\tau_{i}}-Z^1_{\tau_{i-1}})A(\tau_{i}, Z^1_{\tau_{i}}-Z^1_{\tau_{i-1}})$ $(or \quad (Z^1_{\tau_{i}}-Z^1_{\tau_{i-1}})B(\tau_{i}, Z^1_{\tau_{i}}-Z^1_{\tau_{i-1}}))$, the implicit transaction costs are computed as $(Z^1_{\tau_{i}}-Z^1_{\tau_{i-1}})[A(\tau_{i}, 0)-M(\tau_{i}, 0)]+(Z^1_{\tau_{i}}-Z^1_{\tau_{i-1}})[A(\tau_{i}, Z^1_{\tau_{i}}-Z^1_{\tau_{i-1}})-A(\tau_{i}, 0)]$ $(or \quad -(Z^1_{\tau_{i}}-Z^1_{\tau_{i-1}})[M(\tau_{i}, 0)-B(\tau_{i}, 0)] +(Z^1_{\tau_{i}}-Z^1_{\tau_{i-1}})[B(\tau_{i}, Z^1_{\tau_{i}}-Z^1_{\tau_{i-1}})-B(\tau_{i}, 0)])$.
\end{assumption}
Note that $(Z^1_{\tau_{i}}-Z^1_{\tau_{i-1}})[A(\tau_{i}, 0)-M(\tau_{i}, 0)]$ or $-(Z^1_{\tau_{i}}-Z^1_{\tau_{i-1}})[M(\tau_{i}, 0)-B(\tau_{i}, 0)]$ is able to capture the implicit transaction costs for orders that are infinitesimally small or the implicit transaction costs in an economy without price impact. In fact, $A(\tau_{i}, 0)-M(\tau_{i}, 0)$ and $M(\tau_{i}, 0)-B(\tau_{i}, 0)$ are equal to $\frac{1}{2}P(\tau_{i},0)$ which corresponds to the classical one-half bid-ask spread used in an implicit transaction cost economy without price impact.

By further manipulations it is possible to show that the implicit transaction costs can be written as $(Z^1_{\tau_{i}}-Z^1_{\tau_{i-1}})[A(\tau_{i}, Z^1_{\tau_{i}}-Z^1_{\tau_{i-1}}) -M(\tau_i,0)]$ and $-(Z^1_{\tau_{i}}-Z^1_{\tau_{i-1}})[M(\tau_{i},0) -B(\tau_i, Z^1_{\tau_{i}}-Z^1_{\tau_{i-1}})]$ which show that implicit transaction costs are measured simply as the product between the trade size and the difference between the transaction prices and the benchmark price given by the marginal mid-price. 

Assumption \ref{mid} emphasizes the important role of the order size on the transaction price.  
The standard arbitrage pricing theory is based on the assumption that the order size does not influence   the trade price. While mathematically convenient, in practice the effect of the traded volume on prices cannot be neglected. Indeed, this is what happens in quote-driven markets where a market maker or more than one market maker post their bids and  ask prices based on the amount of a given security. Examples of these markets include the London SEAQ system and NASDAQ.  

The dependence of the ask and the bid price on the traded volume is also present in an order driven market, like the NYSE market. Blume \& Goldstein (1997) provides an excellent introduction to this market. Many empirical works conclude that trade size matters in determining the ask and the bid price, but  only up to a certain interval of the traded volume. For example,  Engle \& Patton (2004) analyze the quote price dynamics of 100 NYSE stocks through an error-correction model, and find that a trade with volume between $1,000$ and $10,000$ shares has a significant influence on the ask and the bid price. On the contrary, Hasbrouck (1991)  using a vector autoregressive model for a sample of NYSE stocks  concludes that large trades increase the bid-ask spread and the mid-price more than the small trades.  

For a good overview of the market microstructure foundations see Hautsch (2012).
\begin{example}\label{ex1}
Consider a setting in which the order size of the investor is  $\Delta Z^1_{\tau_{i}}=Z^1_{\tau_{i}}-Z^1_{\tau_{i-1}}$ with monetary value given by $\Delta Z^1_{\tau_{i}}A(\tau_{i},\Delta Z^1_{\tau_{i}})$, when the investor buys, and $\Delta Z^1_{\tau_{i}}B(\tau_{i},\Delta Z^1_{\tau_{i}})$ when he sells. In this case, $A(\tau_{i},\Delta Z^1_{\tau_{i}})$ and $B(\tau_{i},\Delta Z^1_{\tau_{i}})$ give the price for the purchase (sale) of $\Delta Z^1_{\tau_{i}}$ units of the stock. It also follows that the implicit transaction costs amount to $\frac{1}{2}\Delta Z^1_{\tau_{i}}P(\tau_{i},0)+\Delta Z^1_{\tau_{i}}[A(\tau_i,\Delta Z^1_{\tau_{i}})-A(\tau_i,0)]$ and $-\frac{1}{2}\Delta Z^1_{\tau_{i}}P(\tau_{i},0)+\Delta Z^1_{\tau_{i}}[B(\tau_i, \Delta Z^1_{\tau_{i}})-B(\tau_i,0)]$.
\end{example}
\section{Portfolio dynamics}\label{sec:3}  
\subsection{Discrete case}
The portfolio value process at time $\tau_{i}$ is given by
\begin{equation}\label{eq1}
V_{\tau_{i}}^{Z^1} = Z^0_{\tau_{i}} + Z^1_{\tau_{i}} M(\tau_{i},0)
\end{equation}
Recalling what $Z^1$ and $Z^0$ mean, the last equation says that the value of the portfolio has to be evaluated using the marginal mid-price. 

We want to study self-financing portfolios, i.e. portfolios without exogenous infusion or withdrawal of money. The notion of self-financing portfolio becomes more apparent with  simple predictable trading strategies. We choose to work in this paragraph with c\'agl\'ad trading strategies  of the form $Z^1_{u}=Z^1_{0}\textbf{1}_{\{0\}}(u)+\sum_{i=1}^{n} Z^1_{\tau_{i-1}}\textbf{1}_{(\tau_{i-1}, \tau_{i}]}(u)$, where $0=\tau_{0} < \tau_{1}<...<\tau_{n}= T$ is a finite sequence of stopping times and $Z^1_{\tau_{i-1}}$ a $\mathcal{F}_{\tau_{i-1}}$-measurable random variable with $|Z^1_{\tau_{i-1}}| < \infty$. We also include in this trading strategy the  amount of the stock to be held after time $T$, $Z^1_{\tau_{n}}=Z^1_{T}\textbf{1}_{(T, \tau_{n+1}]}(u)$ with $\tau_{n+1}$ finite and $Z^1_{T}$ bounded and $\mathcal{F}_{T}$-measurable. In finance literature it is  often supposed that $Z^1_{0}=0$ and $Z^1_{T}=0$. That is the investor has zero initial holdings in the stock and zero holdings after the trading horizon $T$. The derivation of the self-financing portfolio in the c\'adl\'ag case works in the same way. 
\begin{definition}
A portfolio $Z$ is said to be self-financing if for each $i=1,2,...,n$
\begin{equation}\label{reb}
\Delta Z^0_{\tau_{i}} = -M(\tau_{i},\Delta Z^1_{\tau_{i}})\Delta Z^1_{\tau_{i}}-\frac{P(\tau_{i},\Delta Z^1_{\tau_{i}})}{2}\mbox{sgn}(\Delta Z^1_{\tau_{i}})\Delta Z^1_{\tau_{i}}  
\end{equation}
\end{definition}
where $\Delta Z^0_{\tau_{i}}=Z^0_{\tau_{i}}-Z^0_{\tau_{i-1}}$ and $sgn(X)=1$ for $X>0$, $0$ for $X=0$ and $-1$ for $X <0$. Eq. (\ref{reb}) can be easily obtained by considering two distinct cases, $\Delta Z^1_{\tau_{i}}\geq 0$ and $\Delta Z^1_{\tau_{i}}<0$. If $\Delta Z^1_{\tau_{i}}\geq 0$, the amount of the riskless asset (cash) exchanged by the investor is
\begin{equation}
\Delta Z^0_{\tau_{i}} = -(M(\tau_{i},\Delta Z^1_{\tau_{i}}) + \frac{P(\tau_{i},\Delta Z^1_{\tau_{i}})}{2})\Delta Z^1_{\tau_{i}}
\end{equation}
while
\begin{equation}
\Delta Z^0_{\tau_{i}}=-(M(\tau_{i},\Delta Z^1_{\tau_{i}}) - \frac{P(\tau_{i},\Delta Z^1_{\tau_{i}})}{2})\Delta Z^1_{\tau_{i}} 
\end{equation}
for $\Delta Z^1_{\tau_{i}}<0$. 

That is,  the investor pays the ask price when he buys and the bid price when he sells. The self-financing equation can thus be stated as follows
\begin{equation}\label{self}
\Delta Z^0_{\tau_{i}} = -M(\tau_{i},\Delta Z^1_{\tau_{i}})\Delta Z^1_{\tau_{i}}-\frac{1}{2}P(\tau_{i},\Delta Z^1_{\tau_{i}})\mbox{sgn}(\Delta Z^1_{\tau_{i}})\Delta Z^1_{\tau_{i}} 
\end{equation}
\begin{remark}
Note that there is a distinction between the self-financing condition in Eq. (\ref{reb}) and the self-financing condition proposed by \c{C}etin \textit{et al.} (2004) and Guasoni \textit{et al.} (2010). The first authors  consider only the first part of the Eq. (\ref{reb}), i.e. $-M(\tau_{i},\Delta Z^1_{\tau_{i}})\Delta Z^1_{\tau_{i}}$. This can be easily seen by assuming that $A(\tau_{i},\Delta Z^1_{\tau_{i}})=B(\tau_{i},\Delta Z^1_{\tau_{i}})$. In a nutshell, they are just incorporating the liquidity risk into the self-financing equation, measured by the impact of the traded volume on the unique price. In doing so, the self-financing equation misses the implicit transaction costs generated over a given interval by trading at a different buy and sell price. The second authors instead  consider both terms of  Eq. (\ref{reb}), but neglecting the dependence of prices  on the traded volume. We can thus say that the self-financing condition proposed here accounts for both the implicit transaction costs generated by the liquidity risk and the bid-ask spread. 
\end{remark}
Using Eqs. (\ref{eq1}) and  (\ref{self}), the self-financing portfolio  becomes 
\begin{equation}
V_{\tau_{i}}^{Z^1} = Z^0_{\tau_{i-1}} -M(\tau_{i},\Delta Z^1_{\tau_{i}})\Delta Z^1_{\tau_{i}}-\frac{1}{2}P(\tau_{i},\Delta Z^1_{\tau_{i}})\mbox{sgn}(\Delta Z^1_{\tau_{i}})\Delta Z^1_{\tau_{i}}+ Z^1_{\tau_{i}} M(\tau_{i},0)
\end{equation}
which again is 
\begin{eqnarray}
V_{\tau_{i}}^{Z^1} &=& Z^0_{\tau_{i-1}}+Z^1_{\tau_{i-1}} M(\tau_{i-1},0) -M(\tau_{i},\Delta Z^1_{\tau_{i}})\Delta Z^1_{\tau_{i}}\nonumber\\&-&\frac{1}{2}P(\tau_{i},\Delta Z^1_{\tau_{i}})\mbox{sgn}(\Delta Z^1_{\tau_{i}})\Delta Z^1_{\tau_{i}}+ Z^1_{\tau_{i}} M(\tau_{i},0)-Z^1_{\tau_{i-1}} M(\tau_{i-1},0)
\end{eqnarray}
Recursively substituting, the value of the portfolio at time $T$ assumes the following form
\begin{eqnarray}
V_{T}^{Z^1} &=& V_{0}^{Z^1} -\sum_{i=1}^{n}M(\tau_{i},\Delta Z^1_{\tau_{i}})\Delta Z^1_{\tau_{i}}\nonumber\\&-& \sum_{i=1}^{n}\frac{1}{2}P(\tau_{i},\Delta Z^1_{\tau_{i}})\mbox{sgn}(\Delta Z^1_{\tau_{i}})\Delta Z^1_{\tau_{i}}+Z^1_{T} M(T,0)-Z^1_{0} M(0,0)
\end{eqnarray}
which after some manipulations becomes
\begin{eqnarray}\label{port}
V_{T}^{Z^1} &=& V_{0}^{Z^1}+\sum_{i=1}^{n}Z^1_{\tau_{i-1}}[M(\tau_{i},0)-M(\tau_{i-1},0)]\nonumber\\&-&\sum_{i=1}^{n}[M(\tau_{i},\Delta Z^1_{\tau_{i}})-M(\tau_{i},0)]\Delta Z^1_{\tau_{i}}\nonumber\\&+&\sum_{i=1}^{n}\frac{1}{2}Z^1_{\tau_{i-1}}[\mbox{sgn}(\Delta Z^1_{\tau_{i}})P(\tau_{i},0)-\mbox{sgn}(\Delta Z^1_{\tau_{i-1}})P(\tau_{i-1},0)]\nonumber \\&-&\sum_{i=1}^{n}\frac{1}{2}[P(\tau_{i},\Delta Z^1_{\tau_{i}})-P(\tau_{i},0)]\mbox{sgn}(\Delta Z^1_{\tau_{i}})\Delta Z^1_{\tau_{i}}\nonumber\\&+&\frac{1}{2}P(0,0)\mbox{sgn}(\Delta Z^1_{0})Z^1_{0}-\frac{1}{2}P(T,0)\mbox{sgn}(\Delta Z^1_{T})Z^1_{T}
\end{eqnarray}
where conventionally we set $Z^1_{u}$ equal to $Z^1_{0}$ for all $u<0$.
\begin{remark}\label{value}
The right-hand side of Eq. (\ref{port}) accounts for the portfolio's value in the implicit transaction cost economy. $Z^{0}_{0}+Z^1_{0}M(0,0)$ gives the initial value of the portfolio in the standard (implicit transaction cost) economy. The second term gives the capital gain in the standard economy (\ref{app1}). This term together with 
\begin{eqnarray}
&&\sum_{i=1}^{n}\frac{1}{2}Z^1_{\tau_{i-1}}[\mbox{sgn}(\Delta Z^1_{\tau_{i}})P(\tau_{i},0)-\mbox{sgn}(\Delta Z^1_{\tau_{i-1}})P(\tau_{i-1},0)]\nonumber\\&&+\frac{1}{2}P(0,0)\mbox{sgn}(\Delta Z^1_{0})Z^1_{0}-\frac{1}{2}P(T,0)\mbox{sgn}(\Delta Z^1_{T})Z^1_{T}
\end{eqnarray}
give the capital gain in an economy with only bid-ask spread (e.g. Guasoni \textit{et al.} (2010)). It can be easily seen by writing it as 
\begin{eqnarray}\label{vl}
&-&\sum_{i=1}^{n}M(\tau_{i},0)\Delta Z^1_{\tau_{i}}-\sum_{i=1}^{n}\frac{1}{2}P(\tau_{i},0)\mbox{sgn}(\Delta Z^1_{\tau_{i}})\Delta Z^1_{\tau_{i}} \nonumber\\ &+&M(T,0)Z^1_{T}-M(0,0)Z^1_{0}
\end{eqnarray}
This means that if we assume that prices are independent on the traded volume, $Z^{0}_{0}+Z^1_{0}M(0,0)$ together with Eq. (\ref{vl}) give the self-financing portfolio. One also see that when the trading strategy $Z^1$ has bounded variation, the above equation has a continuous version. Note also that $\sum_{i=1}^{n}\frac{1}{2}$ $P(\tau_{i},0)\mbox{sgn}(\Delta Z^1_{\tau_{i}})\Delta Z^1_{\tau_{i}}$  gives the implicit transaction costs registered over the interval $[0,T]$ in an economy without price impact. The term $\sum_{i=1}^{n}[M(\tau_{i},\Delta Z^1_{\tau_{i}})-M(\tau_{i},0)]\Delta Z^1_{\tau_{i}}$ combined with the term $\sum_{i=1}^{n}\frac{1}{2}$ $[P(\tau_{i},\Delta Z^1_{\tau_{i}})-P(\tau_{i},0)]\mbox{sgn}(\Delta Z^1_{\tau_{i}})\Delta Z^1_{\tau_{i}}$ capture the so-called price impact, the price impact that an investor can create by trading on an asset (see, for example, Huberman \& Stanzl (2005) for a more detailed discussion of this concept). For instance, a positive value of these terms means that the prices move up when the investor buys, and \textit{vice} \textit{versa}. 
\end{remark}
\subsection{Continuous case}\label{cont}
\ref{app} shows the derivation of the self-financing portfolio in the continuous case. More specifically, for any sequence of random partitions  $\sigma_{n}: 0=\tau_{0}^n\leq\tau_{1}^n\leq...\leq \tau_{i_n}^n =t$ tending to the identity, $t\in[0,T]$, the self-financing portfolio in continuous time reads 
\begin{eqnarray}\label{zari}
V_{t}^{Z^1}&=&V_{0}^{Z^1}+\int_{0}^{t}Z^1_{s}dM(s,0)+\frac{1}{2}\int_{0}^{t}Z^1_{s}d\tilde{P}^{Z^1}(s,0)-\int_{0}^{t}M'(s,0)d[Z^1,Z^1]^{c}_{s}\nonumber\\&-&\sum_{0<s\leq t}[M(s,Z^1_{s}-Z^1_{s-})-M(s,0)](Z^1_{s}-Z^1_{s-}) \nonumber\\&-&\frac{1}{2}\int_{0}^{t}P'(s,0)\mbox{sgn}(\Delta Z^1_{s})d[Z^1,Z^1]^{c}_{s}\nonumber\\&-&\frac{1}{2}\sum_{0 <s \leq t}[P(s,Z^1_{s}-Z^1_{s-})-P(s,0)]\mbox{sgn}(Z^1_{s}-Z^1_{s-})(Z^1_{s}-Z^1_{s-})\nonumber\\&-&\frac{1}{2}P(t,0)\mbox{sgn}(Z^1_{t}-Z^1_{t-})Z^1_{t}
\end{eqnarray}
when $Z^1$ is c\'adl\'ag, and 
\begin{eqnarray}\label{zari1}
V_{t}^{Z^1}&=&V_{0}^{Z^1}+\int_{0}^{t}Z^1_{s}dM(s,0)+\frac{1}{2}\int_{0}^{t}Z^1_{s}d\tilde{P}^{Z^1}(s,0) -\int_{0}^{t}M'(s,0)d [Z^1,Z^1]^{c}_{s}\nonumber\\&-&\sum_{0\leq s< t}[M(s,Z^1_{s+}-Z^1_{s})-M(s,0)](Z^1_{s+}-Z^1_{s})\nonumber\\&-&\frac{1}{2}\int_{0}^{t}P'(s,0)\mbox{sgn}(\Delta Z^1_{s})d[Z^1,Z^1]^{c}_{s}\nonumber\\&-&\frac{1}{2}\sum_{0\leq s< t}[P(s,Z^1_{s+}-Z^1_{s})-P(s,0)]\mbox{sgn}(Z^1_{s+}-Z^1_{s})(Z^1_{s+}-Z^1_{s})
\end{eqnarray}
when $Z^1$ is c\'agl\'ad. 

The continuous version of the self-financing portfolio was proved by making the following assumption. 
\begin{assumption}\label{assnew}
We assume that $M(t,y)+\frac{1}{2}\mbox{sgn}(y)P(t,y)$ is non-decreasing in $y$. That is, $x\leq y$ implies $M(t,x)+\frac{1}{2}\mbox{sgn}(x)P(t,x)\leq M(t,y)+\frac{1}{2}\mbox{sgn}(y)P(t,y)$ a.s. $\mathbb{P}$, a.e. $t$.
\end{assumption}
An immediate consequence of the above assumption is that $B(t,y)\leq M(t,0)$ for $y<0$ and $A(t,y)\geq M(t,0)$ for $y> 0$. This observation says that the ask and the bid price are always greater or equal and smaller or equal than the mid-price in an economy without price impact.

The random process $\tilde{P}^{Z^1}(s,0)$ is defined as $\mbox{sgn}(\Delta Z^1_{s})P(s,0)$. Note that $\tilde{P}^{Z^1}$ is a locally bounded semimartingale. This is because $\mbox{sgn}(\Delta Z^1_{\cdot})$ has bounded variation on compacts due to the finite jumps assumption, and thus is a semimartingale. The result follows by noticing that the product of semimartingales is again a semimartingale. 

As noted in the two equations above, $Z^0$ is uniquely determined by $Z^1$. In particular, when $Z^1$ is c\'adl\'ag and continuous, $V^{Z^1}$ is c\'adl\'ag. Thus, $Z^0$ being optional is justified in this case. 
Note also that when the filtration $\mathbb{F}$ satisfies the usual conditions, every semimartingale has a 
c\'adl\'ag version.  Henceforth, semimartingales can be taken in their c\'adl\'ag versions. 
Similarly, choosing $Z^0$ l\'adl\'ag adapted is justified when $Z^1$ is c\'agl\'ad.   
\begin{remark}
The continuous version of the self-financing portfolio illustrates important differences with respect to the usual self-financing portfolios in the finance literature.  With regard to the literature analysing arbitrage pricing theory under transaction costs, it supposes that prices depend not only on time and randomness, but also on the traded volume. On the other side, it is different from the one proposed by  
\c{C}etin \textit{et al.} (2004) since now we are assuming different prices for buying and selling. Here, the sign of the transaction is important since it determines the price at which the investor trade. This also explain why we have additional terms 
in the self-financing portfolio. 
\end{remark}
\section{First fundamental theorem under transaction costs}\label{na}
The goal of the present section is to give conditions for the absence of arbitrage opportunities in the model illustrated in the previous sections. As a first step, we will first show that there is NA in the implicit transaction cost economy if there exists an  equivalent local martingale measure. Then, by focusing only on c\'agl\'ad adapted trading strategies of bounded quadratic variation and with finitely many jumps, we will prove that there is NFLVR in the implicit transaction cost economy if and only if  there exists an equivalent local martingale measure. 

Before presenting the results, it is necessary to introduce and formalize two key concepts, portfolio admissibility and arbitrage opportunity. 
\begin{definition}
The process $Z$ is called $\alpha$-admissible if the trading strategy $Z^1$ defined in Definition \ref{tradin} is $\alpha$-admissible. Keeping this in mind, a self-financing strategy $Z^1$ is $\alpha$-admissible if  there exists a constant $\alpha> 0$ such that
\begin{equation}\label{eqport2}
V_{0}^{Z^1}=0, \quad V_{t}^{Z^1}  \geq - \alpha  \qquad \mbox{a.s. } \mathbb{P},  \mbox{a.e $t \in [0, T]$}
\end{equation} 
\end{definition}
Intuitively, the idea of an arbitrage opportunity is relatively easy to understand - a portfolio is an arbitrage opportunity when it allows investors to make money for no risk. Formalizing the concept, however, is more complicated than one might think.  With regard to  arbitrage opportunities, one has to make a distinction between two standard concepts, NA and NFLVR. The distinction between NA and NFLVR is important since the former is not sufficient to exclude approximate arbitrage opportunities.

The reader familiar with the standard arbitrage pricing theory will at this point recall that the standard definition of an arbitrage is as follows. 
\begin{definition}
An $\alpha$-admissible strategy $Z^1$ is called an arbitrage on $[0, T]$ if
\begin{equation}\label{arb}
V_{0}^{Z^1}=0, \qquad V_{T}^{Z^1} \geq 0 \qquad \mbox{a.s. $\mathbb{P}$} \qquad \mbox{and} \qquad \mathbb{P} (V_{T}^{Z^1} >0)>0 
\end{equation}
A market is arbitrage-free if there are no $\alpha$-admissible trading strategies $Z^1$ satisfying  (\ref{arb}). 
\end{definition}
The goal now is to prove that the implicit transaction cost economy excludes trading strategies which satisfy the above definition. In order to prove the result, we will need the following lemma.
\begin{lemma}\label{transac}
Given Assumption \ref{assnew} and $Z^1$ c\'adl\'ag and c\'agl\'ad respectively, the processes
\begin{eqnarray}\label{t}
T_{t}^{Z^1}&=&\int_{0}^{t}M'(s,0)d[Z^1,Z^1]^{c}_{s}+\sum_{0<s\leq t}[M(s,Z^1_{s}-Z^1_{s-})-M(s,0)](Z^1_{s}-Z^1_{s-}) \nonumber\\&+&\frac{1}{2}\int_{0}^{t}P'(s,0)\mbox{sgn}(\Delta Z^1_{s})d[Z^1,Z^1]^{c}_{s}\nonumber\\&+&\frac{1}{2}\sum_{0 <s \leq t}[P(s,Z^1_{s}-Z^1_{s-})-P(s,0)]\mbox{sgn}(Z^1_{s}-Z^1_{s-})(Z^1_{s}-Z^1_{s-})
\end{eqnarray}
and
\begin{eqnarray}\label{t1}
T_{t}^{Z^1}&=&\int_{0}^{t}M'(s,0)d[Z^1,Z^1]^{c}_{s}+\sum_{0\leq s< t}[M(s,Z^1_{s+}-Z^1_{s})-M(s,0)](Z^1_{s+}-Z^1_{s}) \nonumber\\&+&\frac{1}{2}\int_{0}^{t}P'(s,0)\mbox{sgn}(\Delta Z^1_{s})d[Z^1,Z^1]^{c}_{s}\nonumber\\&+&\frac{1}{2}\sum_{0\leq s< t}[P(s,Z^1_{s+}-Z^1_{s})-P(s,0)]\mbox{sgn}(Z^1_{s+}-Z^1_{s})(Z^1_{s+}-Z^1_{s})
\end{eqnarray}
are non-negative and non-decreasing on the interval $[0,T]$. 
\end{lemma}
\begin{proof}
The fact that  $T^{Z^1}_{t}$ is non-negative and non-decreasing in $t$ follows  directly from Assumption \ref{assnew}.  
\end{proof}
Note that the process $T^{Z^1}_{t}$ gives the price impact in the continuous time setting.  

We are now ready to state the following theorem which is crucial in the arbitrage pricing theory.
\begin{lemma}\label{th1}
If there exists a probability $\mathbb{Q}$, equivalent to $\mathbb{P}$, such that $M(\cdot,0)$ is a $\mathbb{Q}$-local martingale, then the implicit transaction cost economy has NA.   
\end{lemma}
\begin{proof}
The proof of the theorem is derived by using classical arguments as in the standard theory of arbitrage pricing. Let $\mathbb{Q}$ be an equivalent local martingale measure and suppose there exist an arbitrage trading strategy $Z^1$ such that $V_{0}^{Z^1}=0,  V_{T}^{Z^1} \geq 0$ a.s. $\mathbb{P}$ and  $\mathbb{P}(V_{T}^{Z^1} >0)>0$. Then, since $\mathbb{Q}$ is equivalent with respect to $\mathbb{P}$, one easily deduces that also $V_{T}^{Z^1}\geq 0$  a.s. $\mathbb{Q}$  and $\mathbb{Q}(V_{T}^{Z} >0)>0$. Recalling that $Z^1$ is $\alpha$-admissible we have that $V_{t}^{Z^1}=\int_{0}^{t}Z^1_{s}dM(s,0)+\frac{1}{2}\int_{0}^{t}Z^1_{s}d\tilde{P}^{Z^1}(s,0)-T_{t}^{Z^1}-\frac{1}{2}P(t,0)\mbox{sgn}(Z^1_{t}-Z^1_{t-})Z^1_{t}\geq -\alpha$ for some $\alpha>0$. Using the fact that $T_{t}^{Z^1}$ and $-\frac{1}{2}\int_{0}^{t}Z^1_{s}d\tilde{P}^{Z^1}(s,0)+\frac{1}{2}P(t,0)\mbox{sgn}(Z^1_{t}-Z^1_{t-})Z^1_{t}$ are non-negative  for every self-financing trading strategy and $t\in[0,T]$ (see Lemma \ref{transac} for what concerns $T_{t}^{Z^1}$), we have also that $\int_{0}^{t}Z^1_{s}dM(s,0)$ is bounded from below by the constant $-\alpha$. It follows that  $\int_{0}^{t}Z^1_{s}dM(s,0)$ is a $\mathbb{Q}$-local martingale. One can then show using Fatou lemma that, since $\int_{0}^{t}Z^1_{s}dM(s,0)$ is a local martingale,  it is also  a $\mathbb{Q}$-supermartingale. Moreover, the processes $T_{t}^{Z^1}$ and $-\frac{1}{2}\int_{0}^{t}Z^1_{s}d\tilde{P}^{Z^1}(s,0)+\frac{1}{2}P(t,0)\mbox{sgn}(Z^1_{t}-Z^1_{t-})Z^1_{t}$ besides being non-negative are also non-decreasing for every self-financing trading strategy. This implies that $V_{t}^{Z^1}$ is a supermartingale. Therefore,   $0\geq E_{\mathbb{Q}}(V_{T}^{Z^1})$. This is a contradiction to the arbitrage definition and thus showing that an arbitrage  cannot exists. The c\'agl\'ad case works in the same manner. 
\end{proof}
Putting the focus on c\'agl\'ad adapted processes $Z^1$ with bounded quadratic  variation and of finitely many jumps, the rest of this section provides necessary and sufficient conditions for the implicit transaction cost economy  to be free of arbitrage opportunities, thus excluding free lunch with vanishing risk (FLVR). Moreover, it provides an extension of the FFTAP to the economy with implicit transaction costs. Before proceeding further, we will first provide a definition of what we mean by FLVR. 
\begin{definition}\label{def4}
A FLVR  is  a sequence of $\alpha_{n}$-admissible trading strategies $Z^1_{n}$, where $V_{T}^{Z^1_{n}}\geq-\alpha_{n}$, $V^{Z^1_{n}}_{0}=0$, and the sequence $\alpha_{n}$ tends to zero, we have that $V_{T}^{Z^1_{n}}$ converges a.s. to a non-negative random variable $V$, with $V$ not identically zero. We say the implicit transaction cost economy or $M(\cdot,0)$ satisfies the NFLVR property if for any sequence of $\alpha_{n}$-admissible trading strategies $Z^1_{n}$, $V^{Z^1_{n}}_{0}=0$, and the sequence $\alpha_{n}$ tends to zero, we have $V_{T}^{Z^1_{n}}$ converges a.s. to some non-negative limit $V$, then $V = 0$ a.s..
\end{definition}
Let us recall an important result that will be used often in the rest of the paper. 
\begin{proposition}\label{proe}
Let $\theta_{n}$ be a sequence of $[0,\infty)$-valued measurable random variables on a probability space $(\Omega,\mathcal{F},\mathbb{P})$. There exists $\psi_{n}\in convex(\theta_{n},\theta_{n+1},...)$ a sequence, such that $\psi_n$ converges a.s. to a $[0,\infty]$-valued function $\psi$. Moreover, if $convex(\theta_{n},\theta_{n+1},...)$ is bounded in $L^0$, then $\psi<\infty$ a.s.. If $\mathbb{P}(\theta_n>\beta)>\delta$ for all $n$ with $\beta>0$ and $\delta>0$, then $\mathbb{P}(\psi>0)>0$.   
\end{proposition}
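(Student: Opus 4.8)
This is the familiar lemma on forced convergence of convex combinations of non-negative random variables, and the plan is to prove it by an approximate--maximizer argument run against a bounded, strictly concave test function. I would fix $\phi\colon[0,\infty]\to[0,1]$, $\phi(x)=1-e^{-x}$ (with $\phi(\infty)=1$): it is continuous, increasing, bounded and strictly concave, and $\phi(\theta)$ is automatically integrable for any $\theta\ge 0$, so no change of measure or integrability reduction is needed. First I would set $v_n=\sup\{\mathbb{E}[\phi(\psi)]:\psi\in\mathrm{conv}(\theta_n,\theta_{n+1},\dots)\}\in[0,1]$; since the convex hulls decrease with $n$, the sequence $(v_n)$ is non-increasing, hence $v_n\downarrow v_\infty$ for some $v_\infty\ge 0$. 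Then I would pick, for each $n$, a convex combination $\psi_n\in\mathrm{conv}(\theta_n,\theta_{n+1},\dots)$ with $\mathbb{E}[\phi(\psi_n)]>v_n-2^{-n}$.

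The engine of the argument is the elementary identity $\phi\!\left(\tfrac{a+b}{2}\right)-\tfrac12\bigl(\phi(a)+\phi(b)\bigr)=\tfrac12\bigl(e^{-a/2}-e^{-b/2}\bigr)^2$, valid for $a,b\in[0,\infty]$, which turns the concavity gap into an exact square. For $m\ge n$ the midpoint $\tfrac12(\psi_n+\psi_m)$ still belongs to $\mathrm{conv}(\theta_n,\theta_{n+1},\dots)$, so $\mathbb{E}\bigl[\phi(\tfrac12(\psi_n+\psi_m))\bigr]\le v_n$; integrating the identity and inserting the near--optimality of $\psi_n$ and $\psi_m$ yields $\tfrac12\,\mathbb{E}\bigl[(e^{-\psi_n/2}-e^{-\psi_m/2})^2\bigr]\le\tfrac12(v_n-v_\infty)+2^{-n}$, which tends to $0$ as $n\to\infty$ uniformly in $m\ge n$. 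Hence $(e^{-\psi_n/2})_n$ is Cauchy in $L^2(\mathbb{P})$, so it converges in $L^2$ to some $h$ with $0\le h\le 1$; I would then pass to a subsequence $(n_k)$ along which the convergence is almost sure and set $\psi:=-2\log h$ (with value $+\infty$ where $h=0$), so that $\psi_{n_k}\to\psi$ a.s. Since $n_k\ge k$, the relabelled sequence $\bar\psi_k:=\psi_{n_k}$ lies in $\mathrm{conv}(\theta_k,\theta_{k+1},\dots)$ and converges a.s.\ to $\psi$, which gives the first assertion. The main obstacle, and the one genuinely subtle point, is precisely here: the near--maximizers $\psi_n$ themselves need not converge pathwise, and the fix is to notice that the quadratic concavity gap forces the transformed sequence $e^{-\psi_n/2}$ to be $L^2$--Cauchy, after which it suffices to retain an almost surely convergent subsequence, a subsequence of convex combinations still being an admissible sequence of convex combinations.

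For the two refinements I would argue as follows. If $\mathrm{conv}(\theta_1,\theta_2,\dots)$ is bounded in $L^0$, then for every $\varepsilon>0$ there is $K$ with $\sup_k\mathbb{P}(\bar\psi_k>K)<\varepsilon$; on the other hand, on $\{\psi=\infty\}$ one has $\bar\psi_k\to\infty$, so $\mathbf{1}_{\{\bar\psi_k>K\}}\to 1$ there, and Fatou's lemma gives $\liminf_k\mathbb{P}(\bar\psi_k>K)\ge\mathbb{P}(\psi=\infty)$ for every $K$; letting $K\to\infty$ forces $\mathbb{P}(\psi=\infty)=0$. For the last statement, assume $\mathbb{P}(\theta_n>\beta)>\delta$ for all $n$ with $\beta,\delta>0$. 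Pointwise concavity of $\phi$ gives $\phi\bigl(\sum_i\lambda_i\theta_{m_i}\bigr)\ge\sum_i\lambda_i\phi(\theta_{m_i})$, hence $\mathbb{E}[\phi(\psi)]\ge\sum_i\lambda_i\,\mathbb{E}[\phi(\theta_{m_i})]\ge\phi(\beta)\,\delta$ for every $\psi\in\mathrm{conv}(\theta_n,\dots)$, so that $v_n\ge\phi(\beta)\delta$ for all $n$ and therefore $v_\infty\ge\phi(\beta)\delta>0$. If $\psi=0$ a.s., then $\phi(\bar\psi_k)\to 0$ a.s.\ and, being bounded by $1$, also $\mathbb{E}[\phi(\bar\psi_k)]\to 0$ by dominated convergence; but $\mathbb{E}[\phi(\bar\psi_k)]=\mathbb{E}[\phi(\psi_{n_k})]>v_{n_k}-2^{-n_k}\ge v_\infty-2^{-n_k}$, which is eventually bounded below by $v_\infty/2>0$, a contradiction. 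Hence $\mathbb{P}(\psi>0)>0$, completing the plan.
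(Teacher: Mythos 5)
Your proof is correct. The paper gives no argument of its own, deferring to Delbaen and Schachermayer (1994, Lemma A1.1); your proof is essentially the one in that reference --- near-maximizers of $\mathbb{E}[1-e^{-\,\cdot\,}]$ over shrinking convex hulls, the exact quadratic concavity gap forcing $e^{-\psi_n/2}$ to be $L^2$-Cauchy, and extraction of an a.s.\ convergent subsequence --- so the approach coincides with the cited source.
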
 
\begin{proof}
See Delbaen \& Schachermayer (1994), Lemma A1.1.
\end{proof}
Now we proceed with the proof of the FFTAP. 
\begin{lemma}\label{lemmbou}
Define the set $\mathcal{K}_{0}=\{V_{T}^{Z^1}| Z^1 \quad\text{is}\quad \alpha-\text{admissible}\}$. If $M(\cdot,0)$ satisfies NFLVR then the set $\mathcal{K}_{0}$ is bounded in $L^{0}$, that is, in probability.
\end{lemma}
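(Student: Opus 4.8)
The plan is to argue by contradiction, in the spirit of the Delbaen--Schachermayer proof that NFLVR forces the attainable payoffs to be bounded in probability. Assume $\mathcal{K}_{0}$ is not bounded in $L^{0}$. Since admissibility bounds each element of $\mathcal{K}_{0}$ from below, unboundedness can only come from the right tail: there are $\delta>0$ and admissible strategies $Z^{1}_{n}$ --- say $\alpha_{n}$-admissible, normalised so that $V^{Z^{1}_{n}}_{0}=0$ --- with $\mathbb{P}(V^{Z^{1}_{n}}_{T}>n)>\delta$ for every $n$ (passing to a subsequence if needed).

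The first step is to split off the part of the value process that is homogeneous in the strategy. Using Equation (\ref{zari1}) and the definition of $T^{Z^{1}}$ in Lemma (\ref{transac}), write $V^{Z^{1}}_{t}=G^{Z^{1}}_{t}-T^{Z^{1}}_{t}$, where $G^{Z^{1}}_{t}:=\int_{0}^{t}Z^{1}_{s}\,dM(s,0)+\tfrac12\int_{0}^{t}Z^{1}_{s}\,d\widetilde{P}(s,0)$ and $T^{Z^{1}}\geq 0$ is non-decreasing. The map $Z^{1}\mapsto G^{Z^{1}}$ is positively homogeneous --- note that $\widetilde{P}(t,0)=\mbox{sgn}(\Delta Z^{1}_{t})P(t,0)$ is unchanged when $Z^{1}$ is multiplied by a positive constant --- and $G^{Z^{1}_{n}}_{t}=V^{Z^{1}_{n}}_{t}+T^{Z^{1}_{n}}_{t}\geq-\alpha_{n}$. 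Since $T^{Z^{1}_{n}}\geq0$, on $\{V^{Z^{1}_{n}}_{T}>n\}$ one has $G^{Z^{1}_{n}}_{T}>n$. Rescale by putting $W_{n}:=\tfrac1n Z^{1}_{n}$ (when $\alpha_{n}\to\infty$ one divides instead by $n\vee\alpha_{n}$ and argues along the appropriate subsequence, in the standard way): then $G^{W_{n}}_{t}=\tfrac1n G^{Z^{1}_{n}}_{t}\geq-\tfrac{\alpha_{n}}{n}$, while $G^{W_{n}}_{T}>1$ on a set of probability exceeding $\delta$.

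The step I expect to be the main obstacle is to show that the genuinely nonlinear transaction-cost term $T^{W_{n}}$ becomes negligible after this rescaling --- ideally $T^{W_{n}}_{T}\leq\tfrac1n T^{Z^{1}_{n}}_{T}$, or at least that it spoils neither the vanishing downside nor the positive part above. The tools are: $[W_{n},W_{n}]=\tfrac1{n^{2}}[Z^{1}_{n},Z^{1}_{n}]$ and $\Delta W_{n}=\tfrac1n\Delta Z^{1}_{n}$; the positivity, inherited from Assumption (\ref{assnew}), of $M'(\cdot,0)$ and of $M'(\cdot,0)+\tfrac12\mbox{sgn}(\Delta Z^{1})P'(\cdot,0)$, which equals $A'(\cdot,0)$ or $B'(\cdot,0)$, whence the integrals of $T^{W_{n}}$ against the positive measure $d[W_{n},W_{n}]$ scale by $n^{-2}\leq n^{-1}$; and the $C^{2}$-in-$y$ regularity with local boundedness in $t$ of $M',P',M'',P''$ from Assumption (\ref{mid}), which --- after localising along stopping times that keep these coefficients and $[Z^{1}_{n},Z^{1}_{n}]_{T}$ bounded --- controls the jump-correction pieces via a mean-value estimate on $M'(s,\Delta W_{n})-M'(s,0)$ and $P'(s,\Delta W_{n})-P'(s,0)$. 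Granting this, $V^{W_{n}}_{t}=G^{W_{n}}_{t}-T^{W_{n}}_{t}\geq-\varepsilon_{n}$ with $\varepsilon_{n}\to0$, and there is $\beta>0$ with $V^{W_{n}}_{T}\geq\beta$ on a set of probability exceeding $\delta$.

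The argument then closes with the standard convexity step. Apply Proposition (\ref{proe}) to $\theta_{n}:=V^{W_{n}}_{T}+\varepsilon_{n}\geq0$, which satisfies $\mathbb{P}(\theta_{n}>\beta)>\delta$: there is $\psi_{n}\in\mathrm{convex}(\theta_{n},\theta_{n+1},\dots)$ with $\psi_{n}\to\psi$ a.s.\ and $\mathbb{P}(\psi>0)>0$. Writing $\psi_{n}=\sum_{k}\lambda^{n}_{k}\theta_{m_{k}}$ and $U_{n}:=\sum_{k}\lambda^{n}_{k}W_{m_{k}}$, one reads off from Equation (\ref{zari1}) --- using positive homogeneity of the $\int Z^{1}\,dM(\cdot,0)$ part, the pathwise convexity $[U_{n},U_{n}]\leq\sum_{k}\lambda^{n}_{k}[W_{m_{k}},W_{m_{k}}]$, and again the coefficient regularity together with the local boundedness of $P(\cdot,0)$ to absorb both the transaction-cost convexity gap and the non-additivity of the sign-dependent spread term --- that $V^{U_{n}}_{T}\geq\psi_{n}-o(1)$, while $V^{U_{n}}_{0}=0$ and $U_{n}$ is admissible with admissibility constant tending to $0$. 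After a further standard convex-combination refinement upgrading this lower bound to genuine a.s.\ convergence of $V^{U_{n}}_{T}$, the sequence $(U_{n})$ is a free lunch with vanishing risk in the sense of Definition (\ref{def4}), contradicting the hypothesis. Hence $\mathcal{K}_{0}$ is bounded in $L^{0}$.
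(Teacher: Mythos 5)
Your proposal follows the same Delbaen--Schachermayer template as the paper's own argument: negate $L^0$-boundedness, rescale by $1/n$, apply Proposition~(\ref{proe}) to extract a.s.\ convergent convex combinations, and exhibit a free lunch with vanishing risk. The difference is what is rescaled. The paper rescales and truncates the \emph{terminal payoffs}, setting $H_n=\tfrac1n V^{Z^1_n}_T\wedge\alpha$, then cites Proposition~(\ref{proe}) and Egorov and declares a contradiction, without ever tying the $H_n$ or their convex combinations back to outcomes of admissible strategies. You instead rescale the \emph{strategies}, $W_n=\tfrac1n Z^1_n$, and you correctly flag the one point that is genuinely new relative to the frictionless case: because $V^{\cdot}_T$ is not homogeneous, one has to check that the transaction-cost functional $T^{W_n}$ decays at least as fast as $\tfrac1n T^{Z^1_n}$. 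Your observations that $[W_n,W_n]=n^{-2}[Z^1_n,Z^1_n]$, that $\mathrm{sgn}(\Delta W_n)=\mathrm{sgn}(\Delta Z^1_n)$ is unaffected by the positive scaling, and that Assumption~(\ref{assnew}) makes the leading integrands nonnegative, are exactly what makes the rescaled $W_n$ roughly $\tfrac{\alpha}{n}$-admissible with $V^{W_n}_T\geq\tfrac1n V^{Z^1_n}_T$. This is an implicit ingredient the paper's proof relies on (its $H_n$ are only useful because $H_n\leq V^{W_n}_T$) but never states; you make it explicit, which is an improvement.

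The one place where neither you nor the paper closes the argument is the passage from convex combinations of \emph{payoffs} to convex combinations of \emph{strategies}. Definition~(\ref{def4}) requires an actual sequence of admissible trading strategies, so one needs $V^{U_n}_T\geq\sum_k\lambda^n_k V^{W_{m_k}}_T-o(1)$ for $U_n=\sum_k\lambda^n_k W_{m_k}$. The quadratic-variation pieces cooperate, as you say, via the pathwise inequality $[U_n,U_n]\leq\sum_k\lambda^n_k[W_{m_k},W_{m_k}]$ and the positivity of the integrands. But the sign-dependent pieces do not: the spread integral involves $\mathrm{sgn}(\Delta U_n)$, which is not the convex combination of the $\mathrm{sgn}(\Delta W_{m_k})$, and your assertion that local boundedness of $P(\cdot,0)$ ``absorbs'' the discrepancy is not by itself a proof --- a sup bound on $P(\cdot,0)$ does not control $\int\cdot\,d\tilde P(\cdot,0)$, which sees the variation of $P$, not its size. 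The same applies to the $P'$-weighted terms in $T^{U_n}$. To be clear, the paper's printed proof is entirely silent on this step (it only says ``Using Egorov's theorem we obtain a contradiction''), so this is an inherited gap rather than one you introduced; but you should treat it as an open estimate rather than a remark in passing. Everything up to and including the construction, admissibility, and lower bound for $V^{W_n}_T$ is sound and more carefully argued than the printed proof.
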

\begin{proof}
By definition, $\mathcal{K}_{0}$ contains the trading strategies which are continuous, $\alpha$-admissible and of bounded variation, it is therefore convenient first to prove the result for these particular strategies. The arguments of the proof are similar to those used by Delbaen \& Schachermayer (1994) in Proposition 3.1. By contradiction, suppose that the set $\{V_{T}^{Z^1}| Z^1 \quad \text{is $\alpha$-admissible continuous of bounded variation}\}$  is not bounded in $L^0$. This means there exists a sequence $Z^1_{n}$ of $\alpha$-admissible, continuous of bounded variation  integrands and $\epsilon > 0$ such that $P(V_{T}^{Z^1_{n}}\geq \frac{\alpha}{\alpha_{n}})>\epsilon$ with $\alpha_{n}>0$ and $\lim_{n\rightarrow \infty}\alpha_{n}=0$. Take now the sequence $H_n$ defined as $V_{T}^{Z^1_{n}}\frac{\alpha_{n}}{\alpha}\wedge 1$ which satisfies $H_n \geq -\alpha_{n}$, and $P(H_n = 1) > \epsilon$. Note that the $H_{n}$ are self-financing portfolios. Then, Proposition 1 provides us with a sequence of convex combinations of $H_n$ converging a.s. to $H\in [0, 1]$ with $E(H) \geq \epsilon$, $P(H > 0) = \beta\geq \epsilon > 0$, therefore a contradiction. 

To prove the final result it is then sufficient to show that the boundedness in $L^0$ of $\mathcal{K}_{0}$, restricted to continuous of bounded variation trading strategies, implies that the set $\mathcal{K}_{0}$ is also bounded for the other trading strategies. Now, suppose by contradiction that $\mathcal{K}_{0}$ restricted to c\'adl\'ag and c\'agl\'ad $\alpha$-admissible trading strategies is not bounded in $L^0$. Then, there is an $\alpha$-admissible $Z^1$, $\epsilon >0$, such that $P(V_{T}^{Z^1}\geq c)> \epsilon$ with $c>0$. By the non-positivity of $-T_{T}^{Z^1}$, it easily follows that also $P(\int_{0}^{T}Z^1_{s}dM(s,0)+\frac{1}{2}\int_{0}^{T}Z^1_{s}d\tilde{P}^{Z^1}(s,0)\geq c)> \epsilon$. We are supposing for simplicity that $Z^1$ is c\'agl\'ad. Nothing changes if instead we assume $Z^1$  is c\'adl\'ag. Now, using the fact that stochastic integrals of the form $\int_{0}^{T}Z^1_{s}dM(s,0)$ and $\frac{1}{2}\int_{0}^{T}Z^1_{s}d\tilde{P}^{Z^1}(s,0)$ can be approximated (uniformly on compacts in probability) by stochastic integrals with continuous and bounded variation integrands (see Bank \& Baum (2004) and \c{C}etin et.al (2004)) of the form $\int_{0}^{T}Z^1_{n, s}dM(s,0)$ and $\frac{1}{2}\int_{0}^{T}Z^1_{n, s}d\tilde{P}^{Z^1_{n}}(s,0)$, with $Z^1_{n, s}=n\int_{s-\frac{1}{n}}^{s}Z^1_{u}du$, $Z^1_{n}\rightarrow Z^1$ a.s. for every $s\geq 0$ and $Z^1_{u}=Z^1_{0}$ for $u<0$, as $n\rightarrow \infty$, we can find a subsequence $Z^1_{n_{k}}$ of bounded variation continuous integrands, $n_{k}\rightarrow \infty$ as $k\rightarrow \infty$, such that $\lim_{k\rightarrow \infty}\int_{0}^{T}Z^1_{n_{k}, s}dM(s,0)=\int_{0}^{T}Z^1_{s}dM(s,0)$ and $\lim_{k\rightarrow \infty}\frac{1}{2}\int_{0}^{T}Z^1_{n_{k}, s}d\tilde{P}^{Z^1_{n_{k}}}(s,0)=\frac{1}{2}\int_{0}^{T}Z^1_{s}d\tilde{P}^{Z^1}(s,0)$ a.s.. Note that $\tilde{P}^{Z^1_{n}}$ converges  under the semimartingale topology to $\tilde{P}^{Z^1}$ and since $Z^1_{n}$ converges a.s. to $Z^1$, we have that $\frac{1}{2}\int_{0}^{T}Z^1_{n, s}d\tilde{P}^{Z^1_{n}}(s,0)$ converges u.c.p. to $\frac{1}{2}\int_{0}^{T}Z^1_{ s}d\tilde{P}^{Z^1}(s,0)$. Obviously, the $T_{T}^{Z^1_{n_{k}}}$ are equal to zero due to the continuity and bounded variation of the trading strategies and the $Z^1_{n_{k}}$ are $\alpha$-admissible for some $\alpha$ and for every $k$. Therefore, $P[\lim_{k\rightarrow \infty}(\int_{0}^{T}Z^1_{n_{k}, s}dM(s,0)+\frac{1}{2}\int_{0}^{T}Z^1_{n_{k}, s}d\tilde{P}^{Z^1_{n_{k}}}(s,0))\geq c]> \epsilon$. Using the fact that an a.s. convergent sequence is always a.s. bounded, we obtain a contradiction to the boundedness of the admissible self-financing portfolios with continuous and bounded variation trading strategies. This ends the proof. 
\end{proof}
Our next goal is to prove the closure, in the Fatou sense, of the following set $\mathcal{C}_{0}=\mathcal{K}_{0}-L^{0}_{+}$. To prove this crucial result we shall need before to state and prove two lemmas. It is worth pointing out here that the lemma above and the following two lemmas hold for general $\alpha$-admissible strategies satisfying Definition \ref{tradin}. 
\begin{lemma}\label{lemmot}
Let $M(\cdot,0)$ satisfy NFLVR. Then  $\{[Z^1,Z^1]^{c}_{T}| Z^1 \quad \text{is}\quad \alpha-\text{admissible}\}$  is bounded in $L^{0}$.
\end{lemma} 
\begin{proof}
This is a consequence of Lemma \ref{lemmbou}. Let $Z^1$ be an $\alpha$-admissible  trading strategy. 
By Assumption \ref{mid}, $M'(\cdot,0)+\frac{1}{2}P'(\cdot,0)\mbox{sgn}(\Delta Z^1_{\cdot})$ is finite on $[0,T]$ and, therefore the infimum is attained. Furthermore, Assumption \ref{assnew} implies that $M'(\cdot,0)+\frac{1}{2}P'(\cdot,0)\mbox{sgn}(\Delta Z^1_{\cdot})$ is non-negative. Therefore,  
\begin{eqnarray}
&&0\leq [Z^1,Z^1]^{c}_{T}\leq \int_{0}^{T}\inf_{s\in[0,T]}[M'(s,0)+\frac{1}{2}P'(s,0)\mbox{sgn}(\Delta Z^1_{s})]d[Z^1,Z^1]^{c}_{s}\nonumber\\&&\leq\int_{0}^{T}[M'(s,0)+\frac{1}{2}P'(s,0)\mbox{sgn}(\Delta Z^1_{s})]d[Z^1,Z^1]^{c}_{s}\nonumber\\&&\leq  T_{T}^{Z^1}\leq \alpha+V_{T}^{Z^1}+T_{T}^{Z^1}\leq(\alpha + g) + V_{T}^{Z^1}
\end{eqnarray}
where $g>0$ gives the boundedness constant of $T_{T}^{Z^1}$ which by \ref{app} is finite. Lemma \ref{lemmbou} then implies that  $[Z^1,Z^1]^{c}_{T}$ is bounded in $L^0$ for every $\alpha$-admissible $Z^1$ trading strategy. 
\end{proof}
As another interesting result, we can deduce from Lemma \ref{lemmbou} that the set $\{||Z^1||_{T}|  \\\text{$Z^1$} {\mbox{\quad is $\alpha$-admissible continuous and of bounded variation}}\}$ is bounded in $L^0$, where $||\cdot||_{\cdot}$ in this case denotes the pathwise total variation (as the supremum taken over all the partitions). To see this, write $V_{T}^{Z^1}=\int_{0}^{T}Z^1_{s}dM(s,0) +\frac{1}{2}\int_{0}^{T}Z^1_{s}d\tilde{P}^{Z^1}(s,0)$ as
\begin{equation}
V_{T}^{Z^1}=-\int_{0}^{T}M(s,0)dZ^1_{s}-\frac{1}{2}\int_{0}^{T}P(s,0)d||Z^1||_{s}+M(T,0)Z^1_{T}-M(0,0)Z^1_{0}
\end{equation}
and we refer to Guasoni \textit{et. al} (2010) for more details on this latter equation. Note  that  in Guasoni \textit{et. al} (2010) it is assumed that $M(T,0)Z^1_{T}-M(0,0)Z^1_{0}$ is equal to zero. Now, since $V_{T}^{Z^1}$ is bounded in $L^0$ we can prove similarly as in the above lemma that also the set containing the elements $||Z^1||_{T}$ with $Z^1$, $\alpha$-admissible, continuous and of bounded variation 
is bounded in $L^0$. 
\begin{remark}
It is noteworthy mentioning here that implicit transaction costs cannot be avoided by following a continuous trading strategy of bounded variation. That is, the transaction costs cannot be  avoided with smarter (smooth) trading strategies.  The best that an investor can do when trading in a market with transaction costs is to use continuous trading strategies of bounded variation to avoid the price impact (Bank \& Baum (2004) and \c{C}etin et.al (2004)). This is because these trading strategies induce no path dependency in the evolution of the stock prices. Hence, splitting a large trade into infinitesimal smaller ones reduces the price impact. 
\end{remark} 
The following compactness lemma for $\alpha$-admissible trading strategies is similar to that obtained by Campi \& Schachermayer (2006) and Guasoni (2002) for predictable of bounded variation processes. 
\begin{lemma}\label{lemmanew}
Let $Z^{1}_{n}$ be a sequence of $\alpha$-admissible trading strategies. Suppose further that $M(\cdot,0)$ satisfies NFLVR. Then, there exists a sequence $\mathbb{Z}^{n}\in convex(Z^1_{n},Z^1_{n+1},...)$ such that $\mathbb{Z}^n$ converges a.s. $\omega$ for each $t\in[0,T]$ to a  process $\hat{W}$ which is adapted, c\'agl\'ad with finitely many jumps, and has bounded variation and zero quadratic variation.
\end{lemma}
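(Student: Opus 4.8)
The plan is to adapt the Koml\'os/Helly-type compactness scheme of Campi--Schachermayer (2006) and Guasoni (2002) from bounded-variation integrands to integrands of bounded quadratic variation, the new work being to extract enough pathwise control of the $Z^1_n$ from the $L^0$-bounds already in hand. It suffices to treat the case in which all $Z^1_n$ are $\alpha$-admissible for one fixed $\alpha>0$ (the general statement then follows by the usual diagonalization). By Lemma \ref{lemmbou} the terminal values $V^{Z^1_n}_T\in\mathcal{K}_0$ are bounded in $L^0$, and by Lemma \ref{lemmot} so is the family $\{[Z^1_n,Z^1_n]_T\}_n$; since, by Lemma \ref{transac} and Assumption \ref{assnew}, $T^{Z^1}_t\ge k\,[Z^1,Z^1]_t$ with $k>0$ and $\int_0^tZ^1_s\,d(M(s,0)+\tfrac12\tilde{P}(s,0))=V^{Z^1}_t+T^{Z^1}_t$, it follows that $\{T^{Z^1_n}_T\}_n$ and the integral parts $\{\int_0^TZ^1_s\,d(M(s,0)+\tfrac12\tilde{P}(s,0))\}_n$ are bounded in $L^0$ as well.

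Next I would apply Proposition \ref{proe} to the non-negative variables $\theta_n:=[Z^1_n,Z^1_n]_T$, combined with the elementary estimate $[\sum_j a_jZ^1_j,\sum_j a_jZ^1_j]_T\le\sum_j a_j[Z^1_j,Z^1_j]_T$ for convex weights (which follows from the pathwise inequality $2[X,Y]\le[X,X]+[Y,Y]$), to produce convex combinations $\mathbb{Z}^{n}\in\mathrm{convex}(Z^1_n,Z^1_{n+1},\dots)$ along which $[\mathbb{Z}^n,\mathbb{Z}^n]_T$ converges a.s.\ to a finite limit. The crucial claim — and the heart of the matter — is that the $\mathbb{Z}^n$ can be chosen so that this limit is $0$, i.e.\ that averaging genuinely annihilates the quadratic variation. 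I would establish this by contradiction using NFLVR: if along every such sequence $[\mathbb{Z}^n,\mathbb{Z}^n]_T$ stayed, on a set of positive probability, bounded away from $0$, then (using $T^{\mathbb{Z}^n}_t\ge k\,[\mathbb{Z}^n,\mathbb{Z}^n]_t$, the identity above, and the fact that the convex combinations retain, up to an additive constant, the admissibility bound — here one must be careful, since $Z^1\mapsto V^{Z^1}$ is not linear, and it is precisely the positivity of $T^{Z^1}$ and Assumption \ref{assnew} that are used) a further application of Proposition \ref{proe} together with Egorov's theorem would yield a sequence with vanishing initial value and terminal value bounded below by a non-trivial non-negative random variable, i.e.\ a free lunch with vanishing risk, contradicting Definition \ref{def4}. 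Granting this, $[\mathbb{Z}^n,\mathbb{Z}^n]_T\to0$ a.s., and since $t\mapsto[\mathbb{Z}^n,\mathbb{Z}^n]_t$ is increasing, also $[\mathbb{Z}^n,\mathbb{Z}^n]_t\to0$ for every $t\in[0,T]$.

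Finally, with the quadratic variations of the convex combinations driven to zero, I would extract a pointwise-in-$t$ a.s.\ limit and identify its regularity. Working $\omega$ by $\omega$, a diagonal procedure over a countable dense set $D\subset[0,T]$ with $0,T\in D$ — replacing subsequences by further convex combinations via Proposition \ref{proe} so as to obtain convergence of the whole sequence, and using the admissibility bound together with the vanishing of $[\mathbb{Z}^n,\mathbb{Z}^n]_T$ to get an $L^0$-bound on the total variations of the $\mathbb{Z}^n$ — produces a process $\hat{W}$ with $\mathbb{Z}^n_t\to\hat{W}_t$ for all $t\in D$; the resulting oscillation control lets one upgrade this to convergence for every $t\in[0,T]$ and shows that $\hat{W}$ admits a c\'agl\'ad, bounded-variation modification with zero quadratic variation. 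Adaptedness of $\hat{W}$ is immediate, each $\mathbb{Z}^n_t$ being $\mathcal{F}_t$-measurable and a.s.\ limits preserving measurability. The step I expect to be genuinely delicate is the NFLVR argument just sketched: forcing the quadratic variation of the convex combinations to $0$ and, simultaneously, squeezing out of it the uniform ($L^0$) bound on total variations needed to run a Helly-type selection whose limit is of bounded variation; everything else is bookkeeping of the kind carried out in Campi--Schachermayer (2006) and in the proof of Lemma \ref{lemmot}.
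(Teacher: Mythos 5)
Your proposal departs from the paper's proof at its very first substantial step, and the departure introduces a gap that I don't see how to close.

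The paper's proof mollifies each $Z^1_n$ by setting $V^m_{n,s}=m\int_{s-1/m}^sZ^1_{n,u}\,du$. These mollifications are \emph{continuous of bounded variation}, hence automatically of zero quadratic variation, and the remark after Lemma~\ref{lemmot} gives that $\|V^m_n\|_T$ is bounded in $L^0$. Proposition~\ref{proe} is then applied to the pair $(\mathbb{W}^{n,m},\mathbb{C}^{n,m})$ of convex combinations of the $V^m_j$ and of their total variations, and a Helly-type selection over a countable dense set extracts a pointwise a.s.\ limit $\hat W^{m}$ of bounded variation. The zero quadratic variation property of the eventual limit $\hat W$ is inherited from the fact that every approximant is a continuous bounded-variation process; at no point does the paper try to argue that the quadratic variations of convex combinations of the \emph{original} $Z^1_n$ tend to zero.

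Your proposal skips the mollification entirely and instead tries to force $[\mathbb{Z}^n,\mathbb{Z}^n]_T\to 0$ by a NFLVR contradiction. This is where the argument breaks. There is simply no free lunch hiding in a sequence of admissible strategies with persistent quadratic variation. The cleanest way to see it: take $Z^1_n=Z^1$ for all $n$, where $Z^1$ is a single admissible strategy with $[Z^1,Z^1]_T>0$ on a set of positive probability (e.g.\ the delta-hedge of a call, as in Section~\ref{european}). Every convex combination of the tail equals $Z^1$, so $[\mathbb{Z}^n,\mathbb{Z}^n]_T=[Z^1,Z^1]_T$ is bounded away from $0$ on a set of positive probability along \emph{every} sequence of convex combinations — yet a single admissible strategy cannot be a FLVR, because $V^{Z^1}_T$ need not be non-negative, and $T^{Z^1}_T>0$ actively subtracts from the terminal wealth. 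Your sketch ("terminal value bounded below by a non-trivial non-negative random variable") has no mechanism by which positivity of terminal wealth is produced out of positivity of quadratic variation; Assumption~\ref{assnew} and Lemma~\ref{transac} push in exactly the opposite direction, since $T^{Z^1}$ is a \emph{cost}. So the "crucial claim" in your second paragraph is not merely delicate — it is false as stated.

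Once that step fails, the rest of your plan has nothing to run on: you need the vanishing quadratic variation to get the $L^0$ bound on total variations that fuels the Helly selection, and the paper obtains that bound instead from the mollified $V^m_n$ being genuinely of bounded variation (with $\|V^m_n\|_T$ bounded in $L^0$ by the remark after Lemma~\ref{lemmot}). If you want to repair the proposal, you should replace your NFLVR-contradiction step with the mollification $V^m_{n,s}=m\int_{s-1/m}^sZ^1_{n,u}\,du$ and run the selection on the $V^m_n$, as the paper does; the convexity inequality you quoted for quadratic variation is correct but does not do the work you ask of it.

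Two smaller remarks. First, your inequality $T^{Z^1}_t\ge k\,[Z^1,Z^1]_t$ and the identity $\int_0^tZ^1\,d(M+\tfrac12\tilde P)=V^{Z^1}_t+T^{Z^1}_t$ are used correctly and match the reasoning in Lemma~\ref{lemmot}. Second, your observation that $V^{Z^1}$ is not linear in $Z^1$ is exactly the reason the paper works with the integral part $\int_0^tZ^1\,d(M+\tfrac12\tilde P)$ (linear) rather than with $V^{Z^1}$ directly when invoking Delbaen--Schachermayer machinery; you flag the issue but do not resolve it, whereas the paper's structure (mollify, then work with the linear integral functional) is precisely what resolves it.
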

\begin{proof}
By assumption $Z^{1}_{n}$ are predictable of bounded quadratic variation processes. We can then use the result in \c{C}etin \textit{et. al} (2004) again, to show that $Z^{1}_{n}$ can be obtained as the a.s. pointwise limit, as $m\rightarrow\infty$, of the following process
\begin{equation}
G^{m}_{n,s} =m\int_{s-\frac{1}{m}}^{s}Z^1_{n,u}du
\end{equation}
for all $s\geq0$ and $Z^1_{n, u}=Z^1_{n, 0}$ for $u<0$ and $n\geq 1$. We use this observation to show that $\mbox{convex}(Z^1_{n,t},Z^1_{n+1,t},...)$ is bounded in $L^0$. 

Now note that $G^{m}_{n}$ are $\alpha$-admissible continuous and of bounded variation for every $m,n\geq 1$, and thus using the NFLVR property the $||G^{m}_{n}||_{T}$ are bounded in $L^0$ by previous results. Hence, this implies that
\begin{equation}\label{lali}
|G^{m}_{n,t}-G^{m}_{n,0}|=|\sum_{i=1}^{n}(G^{m}_{n,t_i}-G^{m}_{n,t_{i-1}})|\leq ||G^{m}_{n}||_{t}\leq||G^{m}_{n}||_{T}
\end{equation} 
for every $t\in[0,T]$ and $0\leq t_0\leq t_1\leq...\leq t_n \leq t$. 

Given a convex combination $(\beta_j)_{j=n}^{\infty}$, we write $\sum_{j\geq n} \beta_{j}Z^1_{j, t} $ as
\begin{eqnarray}\label{erin}
\sum_{j\geq n} \beta_{j}Z^1_{j, t} =\lim_{m\rightarrow\infty}\sum_{j\geq n}\beta_{j}G^{m}_{j,t}
\end{eqnarray}
a.s.. The interchange of the limit is possible since $\lim_{m \rightarrow \infty} G^{m} _{j,t} = Z^1_
{j,t}$ and $G^{m} _{j,t}$ besides being bounded in $L^0$ for every $m, j$ and $t \in [0, T]$ are continuous and of bounded variation, and thus a.s. bounded over $[0, T]$. In addition, using the results of Lemma 4.3 in Guasoni (2002), it can also be easily shown that the set composed with the elements $||G^{m}_{n}||_{T}$ is a.s. bounded for each $m,n\geq 1$. The final result follows then from (\ref{lali}). Then, since $G^{m}_{j,t}$ are uniformly bounded for every $m$, $j$ and $t\in[0,T]$ so is the limit. It follows that $\mbox{convex}(Z^1_{n,t}, Z^1_{n+1,t},...)$ is bounded in $L^0$ (also a.s. bounded) for each $t\in[0,T]$. Therefore by Proposition \ref{proe} together with a diagonalization argument  there exist convex weights $(\beta_j)_{j=n}^{\infty}$ such that 
\begin{eqnarray}\label{mu}
\mathbb{W}^{n,m}_{t}&=&\sum_{j\geq n} \beta_{j}G^{m}_{j,t}\\\mathbb{C}^{n,m}_{t}&=&\sum_{j\geq n} \beta_{j}||G^{m}_{j}||_{t}
\end{eqnarray}
converge a.s. as $n\rightarrow\infty$, for every $t\in D:=([0,T]\cap\mathbb{Q})\cup\{T\}$ and $m\geq1$, to a random variable $\hat{W}^{m}_{t}$ and $\hat{C}^{m}_{t}$.  

Let $m_0\in\mathbb{N}$. Denote by $\tilde{\Omega}$ the event where both $\mathbb{W}^{n,m_0}_{t}\rightarrow\hat{W}^{m_0}_{t}$ and $\mathbb{C}^{n,m_0}_{t}\rightarrow \hat{C}^{m_0}_{t}$ a.s., $t\in D$, are true, so that $\mathbb{P}(\tilde{\Omega})=1$. Clearly, $q \rightarrow \hat{C}^{m_0}_{q}(w)$ is increasing over $D$, so that we can set $C^{m_0}_{0,t}=\inf\{\hat{C}^{m_0}_{q}|q\in\mathbb{Q}, q>t\}$ for every $t\in[0, T)$ and $C^{m_0}_{0,T}=\hat{C}^{m_0}_{T}$ on $\tilde{\Omega}$. Obviously, $C^{m_0}_{0}$ is right-continuous and non-decreasing.

Let  $\omega\in\tilde{\Omega}$ and $t\in ]0,T[$ be a point of continuity of $C^{m_0}_{0}$. By definition of $C^{m_0}_{0}$, for any $\epsilon>0$ there exist $p_1, p_2 \in\mathbb{Q}$ such that $p_1<t<p_2$ and $C^{m_0}_{0,p_1}(w)> C^{m_0}_{0,t}(w)-\epsilon$, $C^{m_0}_{0,p_2}(w)< C^{m_0}_{0,t}(w)+\epsilon$.  Then, using again the definition of $C^{m_0}_{0}$, there exist rationals $r_1$, $r_2$ and $N\in\mathbb{N}$ with  $p_1<r_1<t<r_2<p_2$ such that for all $n\geq N$, $\mathbb{C}^{n,m_0}_{r_1}(w)> C^{m_0}_{0,t}(w)-2\epsilon$ and $\mathbb{C}^{n,m_0}_{r_2}(w)< C^{m_0}_{0,t}(w)+2\epsilon$. This implies that $\mathbb{C}^{n,m_0}_{r_2}(w)-\mathbb{C}^{n,m_0}_{r_1}(w)= (\mathbb{C}^{n,m_0}_{r_2}(w)-C^{m_0}_{0,t}(w))+(C^{m_0}_{0,t}(w)-\mathbb{C}^{n,m_0}_{r_1}(w)) < 4\epsilon$ and 
\begin{equation}
|\mathbb{W}^{n,m_0}_{t}(w)-\mathbb{W}^{n,m_0}_{r_1}(w)|\leq \mathbb{C}^{n,m_0}_{t}(w)-\mathbb{C}^{n,m_0}_{r_1}(w)\leq\mathbb{C}^{n,m_0}_{r_2}(w)-\mathbb{C}^{n,m_0}_{r_1}(w)< 4\epsilon
\end{equation}
where the second inequality follows by the increasing property of the total variation. Since $\epsilon$ was arbitrary and $\mathbb{W}^{n,m_0}(w)$ converges on the rationals $\mathbb{Q}\cap[0,T]$, it follows that $\mathbb{W}^{n,m_0}_{t}(w)$ converges a.s. to a limit denoted by $\hat{W}^{m_0}_{t}(w)$ for each $t$ point of continuity of $C^{m_0}_{0}$ and $w\in\tilde{\Omega}$.  Therefore, the point of discontinuities of $C^{m_0}_{0}$ are a countable set. By taking convex combinations one can assume w.l.o.g. that $\mathbb{W}^{n,m_0}_{\tau_k}$ converges a.s. to $\hat{W}^{m_0}_{\tau_k}$ for each point of discontinuity $\tau_k$ of $C^{m_0}_{0}$. Thus, $\mathbb{W}^{n,m_0}_{t}$ converges a.s. to $\hat{W}^{m_0}_{t}$ for each $t\in[0,T]$. 

Obviously, $\mathbb{W}^{n,m_0}$ has bounded variation and zero quadratic variation for each $n\geq1$ and $t\in[0,T]$. The first property  is easy to prove. The second follows by the definition of the  quadratic variation, that is
\begin{eqnarray}
&&\lim_{k\rightarrow\infty}\sum_{i\geq1}[\sum_{j\geq n}\beta_{j}(G_{j, \tau_{i}^{k}}^{m_0}-G_{j, \tau_{i-1}^{k}}^{m_0})]^{2}=\sum_{j\geq n}\beta_{j}^{2}[G_{j}^{m_0},G_{j}^{m_0}]_{t}\nonumber\\&&+\lim_{k\rightarrow\infty}\sum_{i\geq1}\sum_{\substack{j, h\geq n\\j\neq h}}\beta_{j}\beta_{h}(G_{j, \tau_{i}^{k}}^{m_0}-G_{j, \tau_{i-1}^{k}}^{m_0})(G_{h, \tau_{i}^{k}}^{m_0}-G_{h, \tau_{i-1}^{k}}^{m_0})
\end{eqnarray}
where $\sigma_{k}: 0=\tau_{0}^k\leq\tau_{1}^k\leq...\leq \tau_{i_k}^k =t$ is a sequence of random partitions tending to the identity. Then, it is easy to see that also the limit $\hat{W}^{m_0}$ has bounded variation and zero quadratic variation. Since $m$  was arbitrary, we conclude that
\begin{equation}
\lim_{n\rightarrow\infty}\lim_{m\rightarrow\infty}\sum_{j\geq n}\beta_{j}G^{m}_{j,t}=\lim_{n\rightarrow\infty}\sum_{j\geq n}\beta_{j}Z^1_{j, t}
\end{equation}
exists a.s., and is equal to a limit $\hat{W}_{t}$, for each $t\in[0,T]$, process of bounded variation, and of zero quadratic variation. 

Clearly, the a.s. limit $\hat{W}$ is adapted and c\'agl\'ad. This is easy seen since the limit process $\hat{W}_{t}$ is obtained as $\hat{W}_{t}=\lim_{\substack{s<t, s\in\mathbb{Q}\\s\rightarrow t}} \lim_{n\rightarrow\infty} \lim_{m\rightarrow\infty}\mathbb{W}^{n,m}_{s}$. Then, one easily finds that $\hat{W}$ is adapted, has right limit, and is continuous from the left. Moreover, $\hat{W}$ is a process with finitely many jumps. This completes the proof.
\end{proof}
\begin{remark}\label{rem4}
Lemma \ref{lemmanew} allows us to assume, up to a sequence of convex combinations, that the sequence $Z^1_{n}$ converges a.s. to the random variable $\hat{W}$. However, it provides a more important result than the one stated. In fact, one can prove in the same vein that there exists a subsequence of $\alpha$-admissible strategies $Z^1_{n_{k}}$, $n_{k}\rightarrow \infty$ as $k\rightarrow \infty$, that converges a.s., as $k\rightarrow \infty$, to a c\'agl\'ad adapted process, still denoted by $\hat{W}$, which has finitely many jumps, bounded variation, and zero quadratic variation. The result can be proved by using the fact that $G^{m}_{n,t}$ are a.s. bounded for every $m,n\geq 1$ and $t\in[0,T]$, and that the set with elements $||G^{m}_{n}||_{T}$ is a.s. bounded for every $m, n\geq 1$. An application of Helly's theorem gives then the desired result. Another consequence of Lemma \ref{lemmanew} is that $[Z^1_{n_{k}},Z^1_{n_{k}}]^{c}_{s}$ converges pointwise to $0$ as $k\rightarrow\infty$, for every $s\in[0,T]$, when $Z^1_{n_{k}}$ converges a.s. to $\hat{W}$. This follows directly from the properties of $[Z^1_{n_{k}},Z^1_{n_{k}}]^{c}$, and  the fact that the $[Z^1_{n_{k}},Z^1_{n_{k}}]^{c}_{T}$ are bounded in $L^0$ and a.s. bounded when $M(\cdot,0)$ satisfies NFLVR. The proof that the set with elements $[Z^1_{n_{k}},Z^1_{n_{k}}]^{c}_{T}$ is a.s. bounded for every $n$ follows the same lines as that of Lemma 4.3 in Guasoni (2002).    
\end{remark}
We have as a corollary to Lemma \ref{lemmanew} the following result.  
\begin{corollary}\label{mjaft}
Let $Z^{1}_{n}$ be a sequence of continuous, $\alpha$-admissible and bounded variation trading strategies. Suppose in addition that $M(\cdot,0)$ satisfies NFLVR. Then, there exists a sequence $\mathbb{Z}^{n}\in convex(Z^1_{n},Z^1_{n+1},...)$ such that $\mathbb{Z}^n$ converges a.s. $\omega$ for each $t\in[0,T]$ to a  process $\hat{W}$ which is c\'agl\'ad, adapted with finitely many jumps, and has bounded variation and zero quadratic variation. 
\end{corollary}
\begin{theorem}\label{themu}
Let $\mathcal{K}^{r}_{0}=\{V_{T}^{Z^1}| Z^1 \quad \mbox{is} \quad \mbox{c\'agl\'ad}\quad  and \quad \alpha-\text{admissible}\}$. If $M(\cdot,0)$ satisfies NFLVR, then the set $\mathcal{C}^{r}_{0}=\mathcal{K}^{r}_{0}-L^{0}_{+}$ is Fatou-closed. 
\end{theorem}
\begin{proof}
To prove that $\mathcal{C}^{r}_{0}$ is Fatou-closed, it is sufficient to consider $(g_n)_{n\geq 1}$ in $\mathcal{C}^{r}_{0}$ such that $g_n\geq -\alpha$, $\alpha> 0$, and $\lim_{n\rightarrow \infty}g_n=g$ a.s.. The theorem  is then proved by showing that $g\in\mathcal{C}^{r}_{0}$. This in turn means finding an element $f\in\mathcal{K}^{r}_{0}$ such that $g\leq f$ a.s..  

By assumption, $g_n=\int_{0}^{T}Z^1_{n,s}dM(s,0)+\frac{1}{2}\int_{0}^{T}Z^1_{n,s}d\tilde{P}^{Z^1_{n}}(s,0)-T_{T}^{Z^1_{n}}\geq-\alpha$ for $\alpha$-admissible processes $Z^1_{n}$. Since the implicit transaction cost economy satisfies NFLVR and the $Z^1_{n}$ are $\alpha$-admissible, the set $\mathcal{K}^{r}_{0}$ and $([Z^1_{n},Z^1_{n}]^{c}_{T})_{n\geq1}$ for every $n\geq1$ are bounded in $L^0$ by Lemma \ref{lemmbou} and Lemma \ref{lemmot}. 

Next, note that also $\int_{0}^{t}Z^1_{n,s}dM(s,0)\geq -\alpha$ for every $n$ and $t\in[0,T]$. Then, the proof of Theorem 4.2 in Delbaen \& Schachermayer (1994) shows the existence of $\alpha$-admissible integrands, $L^{n}\in \mbox{convex}(Z^1_{n},Z^1_{n+1},...)$, in the standard economy, such that $\int_{0}^{t}L^{n}_{s}dM(s,0)$ converges in the semimartingale topology for all $t\in[0,T]$. Furthermore, the limit must necessarily be of the form $\int_{0}^{t}L_sdM(s,0)$. It follows that $L$ is $\alpha$-admissible in the standard economy and that the a.s. limit of $\int_{0}^{T}L_{s}^{n}dM(s,0)$ equals the a.s. limit of $\int_{0}^{T}Z^1_{n,s}dM(s,0)$.

Moreover, Lemma \ref{lemmanew} shows that there exists a convex combination $\mbox{convex}$ $(Z^1_{n},Z^1_{n+1},...)$ converging to $\hat{W}$. On the other hand, this lemma combined with Remark \ref{rem4} show that there exists a subsequence $Z^1_{n_{k}}$, $n_{k}\rightarrow \infty$ as $k\rightarrow \infty$, such that  $Z^1_{n_{k}}$ converges a.s. to an adapted c\'agl\'ad process with bounded variation, zero quadratic variation and finitely many jumps. We denote this process still by $\hat{W}$. Again, Lemma \ref{lemmanew} and Remark \ref{rem4} show that $\sup_{s\in[0,T]}(|Z^1_{n_{k},s}|)$ is finite a.s. for every $k\geq 1$. Using the dominated convergence theorem for stochastic integrals, we have that $\int_{0}^{T}Z^1_{n_{k}}dM(s,0)\rightarrow \int_{0}^{T}\hat{W}_{s}dM(s,0)$ in u.c.p. as $k\rightarrow \infty$. Then, by the uniqueness of the limit, $\int_{0}^{T}Z^{1}_{n,s}dM(s,0)$ converges a.s. to $\int_{0}^{T}\hat{W}_{s}dM(s,0)$.

Using again Lemma \ref{lemmanew}, there exists thus a subsequence of $Z^1_{n}$, denoted by $Z^1_{n_{k}}$,  which converges a.s., as $k\rightarrow \infty$, to the c\'agl\'ad adapted process $\hat{W}$ of zero quadratic variation and finitely many jumps. Then, by Remark \ref{rem4}, $[Z^1_{n_{k}},Z^1_{n_{k}}]^{c}$ converges to $0$ a.s. for every $t\in[0,T]$.  Lemma \ref{lemmanew} and Remark \ref{rem4} together with Lemma \ref{iku}, Lemma \ref{iku2} and Remark \ref{iku3} in  \ref{app2} show then that $T_{T}^{Z^1_{n_{k}}}$ converges a.s. to $T_{T}^{\hat{W}}$. Note that this holds for all $t\in[0,T]$. 

Take now $\frac{1}{2}\int_{0}^{T}Z^1_{n,s}d\tilde{P}^{Z^1_{n}}(s,0)$. Since these are non-positive random variables, by Proposition \ref{proe} we can assume that $\frac{1}{2}\int_{0}^{T}Z^1_{n,s}d\tilde{P}^{Z^1_{n}}(s,0)$ converges up to a sequence of convex combinations to a random variable $\eta$ with  $\eta<\infty$ a.s..  Then, given that $Z^1_{n_{k}}$ converges a.s. to $\hat{W}$ and that $\tilde{P}^{Z^1_{n_{k}}}$ converges under the semimartingale topology to $\tilde{P}^{\hat{W}}$, we get that also $\frac{1}{2}\int_{0}^{T}Z^1_{n,s}d\tilde{P}^{Z^1_{n}}(s,0)$ converges a.s. to $\frac{1}{2}\int_{0}^{T}\hat{W}_{s}d\tilde{P}^{\hat{W}}(s,0)$. 

Therefore, we have found a trading strategy $\hat{W}$ and a subsequence $Z^1_{n_{k}}$ such that
\begin{eqnarray}\label{fund}
f&=&\int_{0}^{T}\hat{W}_{s}dM(s,0)+\frac{1}{2}\int_{0}^{T}\hat{W}_{s}d\tilde{P}^{\hat{W}}(s,0)-T_{T}^{\hat{W}}\nonumber\\&=& \lim_{k\rightarrow\infty}[\int_{0}^{T}Z^1_{n_{k},s}dM(s,0)+\frac{1}{2}\int_{0}^{T}Z^1_{n_{k},s}d\tilde{P}^{Z^1_{n_{k}}}(s,0)- T_{T}^{Z^1_{n_{k}}}]\nonumber\\&=& g \quad \mbox{a.s.}
\end{eqnarray}
The fact that $\int_{0}^{T}Z^1_{n_{k},s}dM(s,0)+\frac{1}{2}\int_{0}^{T}Z^1_{n_{k},s}d\tilde{P}^{Z^1_{n_{k}}}(s,0)- T_{T}^{Z^1_{n_{k}}}$ converges a.s. to $g$ as $k\rightarrow \infty$ implies that this limit equals the a.s. limit of $g_{n}$.    

Obviously, $\int_{0}^{T}\hat{W}_{s}dM(s,0)+\frac{1}{2}\int_{0}^{T}\hat{W}_{s}d\tilde{P}^{\hat{W}}(s,0)-T_{T}^{\hat{W}}$ belongs to the space $\mathcal{K}_{0}^{r}$. This ends the proof.
\end{proof}
\begin{corollary}\label{mjaft1}
If we restrict the set $\mathcal{K}^{r}_{0}$ in Theorem \ref{themu} to continuous of bounded variation $\alpha$-admissible trading strategies $Z^1$, the set $\mathcal{C}^{r}_{0}$ becomes a convex cone. Moreover, it is Fatou-closed and hence by  a simple application of the Krein-Smulian Theorem and the dominated convergence theorem, $\mathcal{C}^{r}_{0}\cap L^{\infty}$ is $\sigma(L^{\infty},L^{1})$-closed. The fact that $\mathcal{C}^{r}_{0}$ is Fatou-closed can be proven by contradiction. Indeed, suppose $\mathcal{C}^{r}_{0}$ restricted to continuous and bounded variation trading strategies is not Fatou-closed. This means that there exists a sequence $(V_{T}^{Z^1_{n}})_{n\geq 1}$ with $V_{T}^{Z^1_{n}}\geq -\alpha$, $\alpha>0$, and $\lim_{n\rightarrow \infty}V_{T}^{Z^1_{n}}=f_{0}$ a.s. such that $f_{0}$ is not in $\mathcal{C}^{r}_{0}$. However, as in Lemma \ref{lemmanew}, we can use the results obtained by Bank \& Baum (2004) and \c{C}etin et.al (2004) to approximate  u.c.p. stochastic integrals of the form $\int_{0}^{T}Z^1_{s}dM(s,0)+\frac{1}{2}\int_{0}^{T}Z^1_{s}d\tilde{P}^{Z^1}(s,0)$ with stochastic integrals of continuous and bounded variation integrands. This means we can find a sequence of $\alpha$-admissible trading strategies $(K^{m}_{n})_{m\geq 1}$, continuous and of bounded variation in the implicit transaction cost economy, such that $V_{T}^{K^{m}_{n}}$ tends to $V_{T}^{Z^1_{n}}$ as $m\rightarrow \infty$ for any $n\geq 1$. It follows that we can find a subsequence of  $V_{T}^{K^{m}_{n}}$ tending a.s. to $V_{T}^{Z^1_{n}}$. Therefore, as $n\rightarrow \infty$, $\lim_{m\rightarrow \infty}V_{T}^{K^{m}_{n}}$ will be equal to the limit of $V_{T}^{Z^1_{n}}$. Using the proof in Theorem \ref{themu}, this limit will be necessarily of the form $\int_{0}^{T}F_{s}dM(s,0)+\frac{1}{2}\int_{0}^{T}F_{s}d\tilde{P}^{F}(s,0)$, where $F$ is a predictable process. By assumption $f_0$ is not in $\mathcal{C}^{r}_{0}$ restricted to continuous of bounded variation trading strategies. This implies that $F$ is a non-continuous process, and this contradicts the fact that $\mathcal{C}^{r}_{0}$ is Fatou-closed in the implicit transaction cost economy. 
\end{corollary}
Let us prove another simple lemma before proving the FFTAP under implicit transaction costs. 
\begin{lemma}\label{lemmnflvr}
Suppose there exists an equivalent $\mathbb{Q}$-local martingale measure for $M(\cdot,0)$. Then the standard economy in \ref{app1} satisfies the NFLVR property. As a result, the implicit transaction cost economy has NFLVR. 
\end{lemma}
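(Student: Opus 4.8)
The statement has two independent halves, and the plan is to treat them in turn.

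\textbf{(i) The standard economy has NFLVR.} Here I would simply invoke the ``trivial'' half of the Delbaen and Schachermayer (1994) first fundamental theorem for the locally bounded semimartingale $M(\cdot,0)$ (it is locally bounded by Assumption (\ref{mid})). Let $\mathbb{Q}$ be the equivalent local martingale measure for $M(\cdot,0)$. For an $\alpha$-admissible strategy $Z^1$ the standard-economy wealth $V_{t}^{Z^1}=V_{0}^{Z^1}+\int_{0}^{t}Z^1_{s}\,dM(s,0)$ is a $\mathbb{Q}$-local martingale bounded below by $-\alpha$, hence --- exactly as in the proof of Theorem (\ref{th1}) --- a $\mathbb{Q}$-supermartingale, so $E_{\mathbb{Q}}[V_{T}^{Z^1}]\le V_{0}^{Z^1}$. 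This excludes arbitrage, and it settles clause (b1) of Definition (\ref{def4}) as well: if $Z^1_{n}$ satisfies $V_{0}^{Z^1_{n}}=0$, $V_{T}^{Z^1_{n}}\ge-\frac{1}{n}$ and $V_{T}^{Z^1_{n}}\to V\ge 0$ a.s., then the non-negative random variables $V_{T}^{Z^1_{n}}+\frac{1}{n}$ have $\mathbb{Q}$-expectation at most $\frac{1}{n}$, so Fatou's lemma gives $E_{\mathbb{Q}}[V]\le 0$, forcing $V=0$ $\mathbb{Q}$-a.s. and hence $\mathbb{P}$-a.s. by equivalence.

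\textbf{(ii) Transfer to the implicit transaction cost economy.} The key step is the pathwise domination $V_{t}^{Z^1}\le V_{t}^{Z^1,\mathrm{std}}$, valid for every admissible $Z^1$ and every $t$, where $V^{Z^1,\mathrm{std}}$ is the wealth of the \emph{same} strategy run from the same initial capital in the standard economy. From the c\'agl\'ad self-financing portfolio (Equation (\ref{zari1})),
\[
V_{t}^{Z^1}=V_{0}^{Z^1}+\int_{0}^{t}Z^1_{s}\,dM(s,0)+\frac{1}{2}\int_{0}^{t}Z^1_{s}\,d\tilde{P}(s,0)-T_{t}^{Z^1},
\]
where $T_{t}^{Z^1}\ge 0$ by Lemma (\ref{transac}) (this is where Assumption (\ref{assnew}) enters), and the bid--ask spread gain $\frac{1}{2}\int_{0}^{t}Z^1_{s}\,d\tilde{P}(s,0)$ is non-positive and non-increasing in $t$: it is precisely the cumulative implicit transaction cost (cf.\ Remark (\ref{value}) and Equation (\ref{vl})), the loss incurred by buying at the ask $M(\cdot,0)+\frac{1}{2}P(\cdot,0)$ and selling at the bid $M(\cdot,0)-\frac{1}{2}P(\cdot,0)$ rather than at $M(\cdot,0)$, and at the level of its approximating Riemann sums, with the convention $Z^1_{0}=Z^1_{T}=0$, it equals $-\frac{1}{2}\sum_{i}P(\tau_{i},0)|\Delta Z^1_{\tau_{i}}|\le 0$ because $P(\cdot,0)\ge 0$. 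Consequently $V_{t}^{Z^1}\le V_{t}^{Z^1,\mathrm{std}}$ pathwise; in particular an $\alpha$-admissible strategy of the implicit economy is $\alpha$-admissible in the standard economy, so $V^{Z^1,\mathrm{std}}$ is a $\mathbb{Q}$-supermartingale with $E_{\mathbb{Q}}[V_{T}^{Z^1,\mathrm{std}}]\le V_{0}^{Z^1}$. NFLVR then transfers verbatim: if $Z^1_{n}$ were a free lunch with vanishing risk in the implicit economy, then $0\le V_{T}^{Z^1_{n}}+\frac{1}{n}\le V_{T}^{Z^1_{n},\mathrm{std}}+\frac{1}{n}$ with $E_{\mathbb{Q}}[V_{T}^{Z^1_{n},\mathrm{std}}+\frac{1}{n}]\le\frac{1}{n}$, so Fatou's lemma again forces the a.s.\ limit to vanish, and the pure-arbitrage case is excluded in the same way.

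I expect the one genuinely delicate point to be the rigorous non-positivity of the \emph{continuous-time} bid--ask spread gain $\frac{1}{2}\int_{0}^{t}Z^1_{s}\,d\tilde{P}(s,0)$ for strategies of infinite variation: the discrete bound $-\frac{1}{2}\sum_{i}P(\tau_{i},0)|\Delta Z^1_{\tau_{i}}|\le 0$ cannot be used verbatim, since that sum diverges when $Z^1$ has unbounded variation, so the sign has to be extracted from the semimartingale $\tilde{P}(\cdot,0)=\mbox{sgn}(\Delta Z^1)P(\cdot,0)$ together with the summation-by-parts and limiting arguments of Appendix (\ref{app}). Everything else --- the passage from ``local martingale bounded below'' to ``supermartingale'' via Fatou, and the bookkeeping against Definition (\ref{def4}) --- is routine and parallels the proof of Theorem (\ref{th1}).
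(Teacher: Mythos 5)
Your proof follows the same overall route as the paper's: apply the easy half of Delbaen--Schachermayer (the paper's Theorem (\ref{std})) to conclude NFLVR for the frictionless wealth $\int_{0}^{\cdot}Z^1_{s}\,dM(s,0)$, observe that the implicit-economy wealth is dominated pathwise by this frictionless wealth, and deduce NFLVR for the implicit economy. You are, however, considerably more careful than the paper at both stages where the argument is actually nontrivial, and this is worth recording.

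First, the paper simply asserts $V_{T}^{Z^1}\leq \int_{0}^{T}Z^1_{s}\,dM(s,0)$ with no justification, and uses it to say that an $\alpha$-admissible strategy for the implicit economy is $\alpha$-admissible for the standard one. You make explicit what this assertion reduces to once $V_{0}^{Z^1}=0$, namely $T_{T}^{Z^1}\geq 0$ (which is Lemma (\ref{transac}), via Assumption (\ref{assnew})) \emph{and} $\frac{1}{2}\int_{0}^{T}Z^1_{s}\,d\tilde{P}(s,0)\leq 0$, and you correctly flag that the non-positivity of the spread term is delicate for strategies of unbounded variation, since the elementary identity $\sum_{i}Z^1_{\tau_{i-1}}\Delta\tilde{P}_{\tau_i}=-\sum_{i}P(\tau_i,0)|\Delta Z^1_{\tau_i}|$ (with zero initial and terminal holdings) has a divergent right-hand side in the refinement limit. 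This gap is present in the paper too, which gives no argument for its inequality; it is not a defect introduced by your proof, but a shared soft spot in the whole chain that you at least identify.

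Second, the paper closes with ``The assertion then follows'' after noting the admissibility transfer. That is not immediate: a sequence $Z^1_{n}$ forming a free lunch with vanishing risk in the implicit economy does not automatically yield a free lunch with vanishing risk in the standard economy, because the frictionless terminal payoffs $\int_{0}^{T}Z^1_{n,s}\,dM(s,0)$ need not converge, only dominate. Your finishing argument --- each $\int_{0}^{\cdot}Z^1_{n,s}\,dM(s,0)$ is a $\mathbb{Q}$-local martingale bounded below, hence a $\mathbb{Q}$-supermartingale, so $E_{\mathbb{Q}}[V_{T}^{Z^1_{n}}]\leq E_{\mathbb{Q}}[\int_{0}^{T}Z^1_{n,s}\,dM(s,0)]\leq 0$, and then Fatou applied to $V_{T}^{Z^1_{n}}+\frac{1}{n}\geq 0$ forces $V=0$ a.s. --- is the right way to make that step rigorous, and is evidently the argument the paper has in mind but does not write down.
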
 
\begin{proof}
Assume there exists an equivalent $\mathbb{Q}$-local martingale measure. Theorem \ref{std} in \ref{app1} shows the standard economy has NFLVR. Then, it can be easily deduced that NFLVR in the standard economy implies NFLVR in the implicit transaction cost economy. In fact, $V_{T}^{Z^1}\leq \int_{0}^{T}Z^1_{s}dM(s,0)$ a.s. for every $\alpha$-admissible trading strategy $Z^1$ in the implicit transaction cost economy. Moreover, from $V_{T}^{Z^1}\geq-\alpha$ it follows that $\int_{0}^{T}Z^1_{s}dM(s,0)\geq-\alpha$. That is, every $Z^1$ $\alpha$-admissible trading strategy  in the implicit transaction cost economy is also  $\alpha$-admissible in the standard economy. The assertion then follows.
\end{proof}
\begin{theorem}\label{ftap}
The implicit transaction cost economy satisfies NFLVR if and only if there exists an equivalent $\mathbb{Q}$-local martingale measure to $\mathbb{P}$ such that $M(\cdot,0)$ is a $\mathbb{Q}$-local martingale. 
\end{theorem}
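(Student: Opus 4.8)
The plan is to follow the Delbaen--Schachermayer scheme, exploiting the closedness results already established. For the easy implication, suppose an equivalent $\mathbb{Q}$ with the stated property exists. For any $\alpha$-admissible $Z^1$, the representation \eqref{zari1} gives $V_t^{Z^1}=\int_0^t Z^1_s\,d\!\left(M(s,0)+\tfrac12\tilde P(s,0)\right)-T_t^{Z^1}$, and since $T^{Z^1}\ge 0$ (Lemma~\ref{transac}) the stochastic integral is bounded below by $-\alpha$; being a $\mathbb{Q}$-local martingale bounded below it is a $\mathbb{Q}$-supermartingale, and subtracting the non-negative non-decreasing $T^{Z^1}$ keeps $V^{Z^1}$ a $\mathbb{Q}$-supermartingale. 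Absence of arbitrage is then Theorem~\ref{th1}, and for a sequence $Z^1_n$ with $V_0^{Z^1_n}=0$, $V_T^{Z^1_n}\ge-1/n$ and $V_T^{Z^1_n}\to V\ge 0$ a.s., Fatou applied to $V_T^{Z^1_n}+1/n\ge 0$ gives $E_{\mathbb{Q}}[V]\le\liminf_n\bigl(E_{\mathbb{Q}}[V_T^{Z^1_n}]+1/n\bigr)\le 0$, so $V=0$ a.s.; NFLVR holds. (This also follows from Lemma~\ref{lemmnflvr}.)

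For the converse, assume NFLVR. The engine is Corollary~\ref{mjaft1}: writing $\mathcal{K}_0^{c}$ for the terminal wealths of admissible \emph{continuous bounded-variation} strategies with zero initial capital, the set $C:=(\mathcal{K}_0^{c}-\mathcal{L}^0_+)\cap L^\infty$ is a $\sigma(L^\infty,L^1)$-closed convex cone containing $-L^\infty_+$, and NFLVR (in particular NA) forces $C\cap L^\infty_+=\{0\}$, since a non-zero element of $C\cap L^\infty_+$ would be dominated by the terminal wealth of a zero-capital strategy, an arbitrage. By the Kreps--Yan (Hahn--Banach) separation theorem there is $g\in L^1$ with $g>0$ a.s.\ and $E[fg]\le 0$ for all $f\in C$; normalising, $d\mathbb{Q}=(g/E[g])\,d\mathbb{P}$ defines $\mathbb{Q}\sim\mathbb{P}$ with $E_{\mathbb{Q}}[f]\le 0$ for every such bounded terminal wealth. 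A continuous bounded-variation strategy has zero quadratic variation, so $T_T^{Z^1}=0$, and no jumps, so the spread terms in \eqref{zari1} vanish; its terminal wealth is therefore exactly $\int_0^T Z^1_s\,dM(s,0)$, and applying the inequality to both $Z^1$ and $-Z^1$ whenever this integral is bounded yields $E_{\mathbb{Q}}\!\left[\int_0^T Z^1_s\,dM(s,0)\right]=0$. Approximating a simple integrand $h\mathbf{1}_{(s,t]}$ by continuous bounded-variation ramps as in the proof of Lemma~\ref{lemmanew} and localising $M(\cdot,0)$ by the stopping times that make it bounded (Assumption~\ref{mid}) then shows $M(\cdot,0)$ is a $\mathbb{Q}$-local martingale.

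It remains to upgrade this to the strategy-dependent integrator $M(\cdot,0)+\tfrac12\mbox{sgn}(\Delta Z^1_\cdot)P(\cdot,0)$ in the statement. Here I would re-run the separation against the larger set $\mathcal{C}_0$, which is Fatou-closed by Theorem~\ref{themu}, so that $\mathcal{C}_0\cap L^\infty$ is $\sigma(L^\infty,L^1)$-closed by Krein--Smulian exactly as in Corollary~\ref{mjaft1}; testing the same density $g$ against $\pm\int_0^\cdot Z^1_s\,d(M(s,0)+\tfrac12\tilde P(s,0))$ for admissible $Z^1$ with bounded integral, and using that $T^{Z^1}\ge 0$ is non-decreasing, gives that this integral is a $\mathbb{Q}$-martingale after localisation, i.e.\ $M(\cdot,0)+\tfrac12\mbox{sgn}(\Delta Z^1_\cdot)P(\cdot,0)$ is a $\mathbb{Q}$-local martingale for each admissible $Z^1$. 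The main obstacle is precisely this last step: $\mathcal{C}_0$ for general c\`agl\`ad strategies is only Fatou-closed rather than norm-closed, so the separating measure must first be extracted from the weak*-closed sub-cone of continuous bounded-variation strategies and then bootstrapped; moreover, since $\tilde P$ depends on $Z^1$ through $\mbox{sgn}(\Delta Z^1_\cdot)$, the martingale property cannot be read off a single fixed semimartingale but must be checked integrand by integrand, and one has to carry the local-boundedness localisations for $M(\cdot,0)$, $P(\cdot,0)$, $M'(\cdot,0)$ and $P'(\cdot,0)$ through the limiting arguments.
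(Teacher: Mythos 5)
Your forward implication is correct and essentially coincides with the paper's (which goes through Theorem~\ref{std} and Lemma~\ref{lemmnflvr}; your Fatou argument is an equivalent way of phrasing the same supermartingale estimate). Your converse begins correctly as well: Kreps--Yan applied to the weak*-closed cone of Corollary~\ref{mjaft1} (continuous bounded-variation strategies) produces an equivalent $\mathbb{Q}$ with $\mathbb{E}_{\mathbb{Q}}[f]\le 0$ on that cone, and testing ramps against $M(\cdot,0)$ shows $M(\cdot,0)$ is a $\mathbb{Q}$-local martingale. This is the same $\mathbb{Q}$ the paper extracts.

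The genuine gap is exactly the one you flag at the end: how to make the strategy-dependent spread term $\tfrac12\mbox{sgn}(\Delta Z^1_\cdot)P(\cdot,0)$ appear. Your proposed repair --- re-running Kreps--Yan against $\mathcal{C}_0$ --- does not work, for the reason you yourself identify: Theorem~\ref{themu} only gives Fatou-closedness of $\mathcal{C}_0$, and the Krein--Smulian/weak*-closedness upgrade in Corollary~\ref{mjaft1} is stated only for the continuous bounded-variation sub-cone; without a convex-cone structure and weak*-closedness for the full $\mathcal{C}_0$ the separation step is not available. The paper does not need it. Instead it stays inside the weak*-closed continuous-BV cone but chooses the test strategy to be the \emph{discontinuous} simple integrand $\psi^c=c\mathbf{1}_{A_u}\mathbf{1}_{(u,t]}$, whose own $\tilde{P}=\mbox{sgn}(\Delta\psi^c_\cdot)P(\cdot,0)$ is non-trivial precisely because $\psi^c$ jumps at $u$ and $t$. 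It approximates $\psi^c$ by ramps $\psi^c_n$, notes that $V_T^{\psi^c_n}$ belong to the weak*-closed cone so $\mathbb{E}_{\mathbb{Q}}[V_T^{\psi^c_n}]\le 0$, and passes that inequality to the limit via the stochastic and ordinary dominated convergence theorems applied to $\int_0^T\psi^c_{n,s}\,d\bigl(M(s,0)+\tfrac12\tilde{P}(s,0)\bigr)$, arriving at
\[
\mathbb{E}_{\mathbb{Q}}\Bigl(c\mathbf{1}_{A_u}\bigl[(M(t,0)+\tfrac12\tilde{P}(t,0))-(M(u,0)+\tfrac12\tilde{P}(u,0))\bigr]\Bigr)\le 0,
\]
with the reverse inequality from $-c$, hence equality; unboundedness is then handled by Delbaen--Schachermayer's Corollary 1.2. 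So the mechanism that closes your gap is not a second separation argument on a larger cone but the transfer of the expectation inequality from the approximating continuous-BV strategies to a simple discontinuous one, which is where the $\tilde{P}$ term enters. You should also note that since $\tilde P$ depends on $Z^1$ through its jumps, the conclusion is read integrand-by-integrand, exactly as you anticipate, but without needing any weak*-closedness beyond Corollary~\ref{mjaft1}. The remaining delicate point in this transfer --- that the terminal wealth of the continuous ramp $\psi^c_n$ carries no $\tilde P$-contribution while the limiting expression does --- is something the paper glosses over and which you would still need to address carefully if you pursue this route.
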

\begin{proof}
Suppose there exists  an equivalent $\mathbb{Q}$-local martingale measure. Using Lemma \ref{lemmnflvr}, we have NFLVR in the standard economy. More generally, for every function $h\in\mathcal{C}_{s}$, $\mathbb{E}_{\mathbb{P}}(h)\leq0$, where $\mathcal{C}_{0}^{s}=(\mathcal{K}_{0}^{s}-L_{+}^{0})\cap L^{\infty}$ and $\mathcal{K}_{0}^{s}=\{\int_{0}^{T}Z^1_{s}dM(s,0): \text{$Z^1$  is $\alpha$-admissible and predictable}\}$. This of course is true for the closure of $\mathcal{C}_{0}^{s}$ with respect to the norm topology of $L^{\infty}$, and we have that $\mathcal{\overline{C}}_{s}\cap L_{+}^{\infty}=\{0\}$. Invoking again Lemma \ref{lemmnflvr}, we conclude that the implicit transaction cost economy has NFVLR. In view of this result,  $\mathcal{\overline{C}}^{r}_{0}\cap L_{+}^{\infty}=\{0\}$, where $\mathcal{\overline{C}}^{r}_{0}$ is the closure set of $\mathcal{C}^{r}_{0}\cap L^{\infty}$ with respect to the norm topology of $L^{\infty}$. Note that the last equality is coherent with Definition \ref{def4}. 

For the converse, assume the NFLVR holds. Since $M(\cdot,0)$ satisfies the NFLVR property, $(\mathcal{C}^{r}_{0}\cap L^{\infty})\cap L_{+}^{\infty}=\{0\}$. 

Assume for a moment that $M(\cdot,0)$ is bounded. Next, define for each $u<t$, $A_{u}\in\mathcal{F}_{u}$, $c\in\mathbb{R}$ the elementary process $\psi^{c}(s)=c\textbf{1}_{A_{u}}\textbf{1}_{(u,t]}(s)$, which is  predictable, c\'agl\'ad with finitely many jumps, and trivially of bounded quadratic variation. The  portfolio value of the strategy $\psi^{c}$ is then given by
\begin{equation}
V_{T}^{\psi^{c}}=c\textbf{1}_{A_{u}}(M(t,0)-M(u,0))+\frac{1}{2}c\textbf{1}_{A_{u}}(\tilde{P}^{\psi^{c}}(t,0)-\tilde{P}^{\psi^{c}}(u,0))-T_{T}^{\psi^{c}}
\end{equation}
By assumption, $T_{T}^{\psi^{c}}$ and $M(\cdot,0)$, thus also $\tilde{P}^{\psi^{c}}(\cdot,0)$, are bounded, thus $V_{T}^{\psi^{c}}$ is bounded and  belongs to $\mathcal{C}^{r}_{0}\cap L^{\infty}$. 

As in Lemma \ref{lemmanew}, stochastic integrals of the form $\int_{0}^{T}Z^1_{s}dM(s,0)+\frac{1}{2}\int_{0}^{T}Z^1_{s}d\tilde{P}^{Z^1}(s,0)$, with predictable integrands $Z^1$, can be approximated u.c.p. by stochastic integrals with continuous and bounded variation integrands. Consequently, there exists a sequence of continuous and bounded variation  strategies $\psi^{c}_{n}$ such that $\int_{0}^{T}\psi^{c}_{n,s}dM(s,0)+\frac{1}{2}\int_{0}^{T}\psi^{c}_{n,s}d\tilde{P}^{\psi^{c}_{n}}(s,0)$  tends u.c.p to $\int_{0}^{T}\psi^{c}_{s}dM(s,0)+\frac{1}{2}\int_{0}^{T}\psi^{c}_{s}d\tilde{P}^{\psi^{c}}(s,0)$.
One can easily see this by taking trading strategies of the following form 
\begin{equation}
\psi^{c}_{n,s} =n\int_{s-\frac{1}{n}}^{s}\psi^{c}_{u}du
\end{equation}
which are continuous of bounded variation and $\psi^{c}_{u}=\psi^{c}_{0}$ for $u<0$. Then, $\psi^{c}_{n}$ tends a.s. to $\psi^{c}$, and $\psi^{c}_{n}$ are uniformly bounded. The result then follows by the dominated convergence theorem for stochastic integrals.  
  
Using Corollary \ref{mjaft1}, $\mathcal{C}^{r}_{0}$ restricted to continuous $\alpha$-admissible processes of bounded variation is a convex cone, Fatou-closed, and $\mathcal{C}^{r}_{0}\cap L^{\infty}$ is $\sigma(L^{\infty},L^1)$-closed. Then, Kreps-Yan Separation Theorem (see Kreps (1981) and Yan (1980)) implies that there exists a probability  $\mathbb{Q}$ equivalent to $\mathbb{P}$  such that $\mathbb{E}_{\mathbb{Q}}(h)\leq0$ for every $h\in\mathcal{C}^{r}_{0}\cap L^{\infty}$.  

A look at $V_{T}^{\psi^{c}_{n}}$ shows that $V_{T}^{\psi^{c}_{n}}$ are in $\mathcal{C}^{r}_{0}\cap L^{\infty}$ restricted to continuous $\alpha$-admissible processes of bounded variation, $T_{T}^{\psi^{c}_{n}}=0$ for all $\psi^{c}_{n}$. 
   
Another application of the ordinary dominated convergence theorem yields
\begin{eqnarray}
&&\lim_{n\rightarrow\infty}\mathbb{E}_{\mathbb{Q}}(\int_{0}^{T}\psi^{c}_{n,s}dM(s,0)+\frac{1}{2}\int_{0}^{T}\psi^{c}_{n,s}d\tilde{P}^{\psi^{c}_{n}}(s,0))\nonumber\\&&=\mathbb{E}_{\mathbb{Q}}(\int_{0}^{T}\psi^{c}_{s}dM(s,0)+\frac{1}{2}\int_{0}^{T}\psi^{c}_{s}d\tilde{P}^{\psi^{c}}(s,0))\nonumber\\&&=\mathbb{E}_{\mathbb{Q}}(c\textbf{1}_{A_{u}}[(M(t,0)+\frac{1}{2}\tilde{P}^{\psi^{c}}(t,0))-(M(u,0)+\frac{1}{2}\tilde{P}^{\psi^{c}}(u,0))]
\end{eqnarray}
By previous considerations we have that $\mathbb{E}_{\mathbb{Q}}(V_{T}^{\psi^{c}_{n}})\leq0$ for all $n$ and so 
\begin{eqnarray}
\lim_{n\rightarrow\infty}\mathbb{E}_{\mathbb{Q}}(V_{T}^{\psi^{c}_{n}})&=&\mathbb{E}_{\mathbb{Q}}(c\textbf{1}_{A_{u}}[(M(t,0)+\frac{1}{2}\tilde{P}^{\psi^{c}}(t,0))-(M(u,0)+\frac{1}{2}\tilde{P}^{\psi^{c}}(u,0))])\nonumber\\&=&\mathbb{E}_{\mathbb{Q}}(c\textbf{1}_{A_{u}}[A(t,0)-A(u,0)])\leq0
\end{eqnarray} 
and
\begin{eqnarray}
\lim_{n\rightarrow\infty}\mathbb{E}_{\mathbb{Q}}(V_{T}^{\psi^{-c}_{n}})&=&\mathbb{E}_{\mathbb{Q}}(-c\textbf{1}_{A_{u}}[(M(t,0)+\frac{1}{2}\tilde{P}^{\psi^{c}}(t,0))-(M(u,0)+\frac{1}{2}\tilde{P}^{\psi^{c}}(u,0))])\nonumber\\&=&\mathbb{E}_{\mathbb{Q}}(-c\textbf{1}_{A_{u}}[A(t,0)-A(u,0)])\leq0
\end{eqnarray} 
for $c>0$. 

By the same argument, $c<0$ implies that $\mathbb{E}_{\mathbb{Q}}[c\textbf{1}_{A_{u}}(B(t,0)-B(u,0))]=0$. 
Equipped with the above facts, one can easily note that $\mathbb{E}_{\mathbb{Q}}[c\textbf{1}_{A_{u}}(M(t,0)-M(u,0))]=0$, thus $M(\cdot,0)$ is a $\mathbb{Q}$-martingale. 

For the general case, we may apply Corollary 1.2 in Delbaen \& Schachermayer (1994) to show that $\mathbb{Q}$ is a local martingale measure for $M(\cdot,0)$. This concludes the proof of the theorem.
\end{proof}
\begin{remark}
Note that similarly as in the Appendix A.4 in the paper of \c{C}etin \textit{et al.} (2004) one can prove a num\'eraire invariance theorem for the implicit transaction cost economy. More importantly, $M(\cdot,0)$ being a $\mathbb{Q}$-local martingale implies that also the economy with only implicit transaction costs as measured by the bid-ask spread and with continuous of bounded variation trading strategies (Guasoni \textit{et al.} (2010)) is free of arbitrage opportunities. 
\end{remark}
\section{Hedging in the implicit transaction cost economy}\label{sec2}  
In this section we  assume that an equivalent $\mathbb{Q}$-local martingale measure exists and hence both the standard and the implicit transaction cost economy in the previous section are free of arbitrage opportunities.  The arbitrage-free property is then used to determine whether or not the implicit transaction cost economy is complete. In addition, for this section, we suppose that $M(\cdot,0)$ is a special semimartingale (see Protter (2004) for a definition) belonging to the space $H^2$, with finite norm under $\mathbb{Q}$ given by $||[M,M]_{\infty}^{1/2}||_{L^2}+||\int_{0}^{\infty}|dA_{s}|||_{L^2}<\infty$.  
\begin{definition}\label{cc}
\begin{itemize}
\item [(i)] A contingent claim is any $\mathcal{F}_{T}$-measurable random variable $X$ with $E_{\mathbb{Q}}(X^2)<\infty$. A contingent claim $X$ is called hedgeable (or replicable or attainable) if there exists an admissible self-financing portfolio $Z=(Z^0, Z^1)$  such that $ \mathbb{Q}$ a.s.
\begin{eqnarray}\label{cc}
X&=&V_{T}^{Z^1}
\end{eqnarray}
\item[(ii)] The implicit transaction cost economy is called $\mathcal{F}_{T}$-complete if every contingent claim is hedgeable. 
\end{itemize}
\end{definition}
We also assume that the equivalent $\mathbb{Q}$-local martingale measure is unique, so that the  standard economy satisfies the SFTAP. That is, the uniqueness of the equivalent  local martingale measure $\mathbb{Q}$  implies the standard economy is complete.

In other words, $X=b + \int_{0}^{T}Z^1_{s}dM(s,0)$ for some admissible predictable trading strategies $Z^1$ such that $E_{\mathbb{Q}}(\int_{0}^{T}(Z^1_{s})^{2}d[M(s,0),M(s,0)]_{s})<\infty$. Note that $E_{\mathbb{Q}}(X)=b$ gives the risk-neutral price of the  contingent claim $X$ in the standard economy. 

Combining this result, for each $X$, there exists an admissible predictable trading strategy $D$, in the standard economy, with $E_{\mathbb{Q}}(\int_{0}^{T}$ $(D_{s})^{2}d[M(s,0),M(s,0)]_{s})^{1/2}<\infty$ such that Equation (\ref{cc}) can be written as
\begin{equation}
X=b +\int_{0}^{T}D_{s}dM(s,0)=V_{T}^{Z^1}
\end{equation}
with $b=Z^{0}_{0}+D_{0}M(0,0)=E_{\mathbb{Q}}(X)$. From Lemma 4.1 in \c{C}etin \textit{et al.} (2004), $\int_{0}^{T}D_{s}dM(s,0)$ can be approximated in $L^2(\mathbb{Q})$-space by $\int_{0}^{T}D_{n,s}dM(s,0)$, with $D_{n}$ being admissible, continuous of bounded variation trading strategies with $E_{\mathbb{Q}}(\int_{0}^{T}(D_{n,s})^{2}d[M(s,0),M(s,0)]_{s})<\infty$. Now put $Z^{0}_{n, 0}=E_{\mathbb{Q}}(X)$ and $D_{n,0}=0$ for all $n$. Then, $Z^{0}_{n, 0}+D_{n,0}M(0,0)+\int_{0}^{T}D_{n,s}dM(s,0)+\frac{1}{2}\int_{0}^{T}D_{n,s}d\tilde{P}^{D_{n}}(s,0)-T^{D_{n}}_{T}$ converges in $L^2(\mathbb{Q})$ to $E_{\mathbb{Q}}(X)+\int_{0}^{T}D_{s}dM(s,0)+\frac{1}{2}\int_{0}^{T}D_{s}d\tilde{P}^{D}(s,0)$ as $n\rightarrow \infty$. 

We can thus come to the conclusion that the implicit transaction cost economy is incomplete in the sense of Definition \ref{cc}, but at least the investor could obtain a better approximation of the contingent claim's value (in the $L^2(\mathbb{Q})$-sense) by using continuous of bounded variation strategies. In a few words, it says that the use of continuous and of bounded variation trading strategies is the best choice to make when dealing with the problem of hedging in markets with implicit transaction costs.
\section{Examples of implicit transaction cost economies}\label{bs}
The goal of this section is to apply the results from the previous sections to an implicit transaction cost economy that is linear and non-linear in the order size. 
\subsection{Linear implicit transaction cost economy}
The linear form of the implicit transaction cost economy is motivated by the work of Engle \& Patton (2004) and Hasbrouck (1991), which we discussed in Sec. \ref{the model}. Other studies also find a linear relationship between the trade price and the traded volume. For example, Blais (2005) and Blais \& Protter (2010) show by using a linear regression model that for liquid stocks this relationship is linear with  time varying slope and intercept. The proposed linear supply curve is of the form
\begin{equation}
S(t,y)=N_ty+S(t,0)
\end{equation}
where $N$ is a stochastic process with continuous paths and $S(\cdot,0)$ the marginal price process. Note that the sensitivity of the price  to the order size, $N$, is the same both for the buy and  sell orders. Blais (2005) shows that the hypothesis $N_t=0$ can be rejected with a significance level of $0.9999$.
 
Going back to our framework, one trivially notes that we must define two processes for the marginal ask and the bid price, i.e. $A(\cdot,0)$ and $B(\cdot,0)$.  We assume that these processes are of the following form
\begin{eqnarray}\label{ask1}
A(t,0)&=&M(t,0)+\gamma_t\\B(t,0)&=&M(t,0)-\gamma_t
\end{eqnarray}
where  $\gamma_t>0$ is a continuous stochastic process. Note that $A(t,0)>B(t,0)$ and that $P(t,0)=2\gamma_t$.  

Using the equations above, we define $M(t, y)$ and $P(t, y)$ in the implicit transaction cost economy as follows 
 \begin{eqnarray}
M(t,y) &=& \frac{1}{2}[(\beta_t+\lambda_t)y + 2M(t,0)]=\frac{1}{2}[(\beta_t+\lambda_t)y] + M(t,0)\\
P(t,y) &=& A(t,y)- B(t,y)=(\beta_t-\lambda_t)y+P(t,0)=(\beta_t-\lambda_t)y+2\gamma_t
\end{eqnarray}
where $\beta_t>0$, $\lambda_t>0$, with $\beta_t\geq \lambda_t$, are again continuous stochastic processes.

We can use $M(t,y)$ and $P(t,y)$ to derive the ask and the bid price, which are respectively given by
\begin{eqnarray}\label{ask2}
A(t,y)&=&A(t,0)+\beta_ty\\B(t,y)&=&B(t,0)+\lambda_ty
\end{eqnarray}
or 
\begin{eqnarray}\label{ask}
A(t,y)&=&M(t,0)+\gamma_t+\beta_ty\\B(t,y)&=&M(t,0)-\gamma_t+\lambda_ty\label{askik}
\end{eqnarray}
Eqs. (\ref{ask}) and (\ref{askik}) may be negative from a mathematical point of view, but this is practically impossible since the ask and the bid price are positive. These equations should be read as the supply curve equations of the ask and the bid price, which means that they give the price of a share of the stock when the investor wants to buy or to sell $|y|$ shares of a given stock. 

Fix now $t$. The coefficients  $\beta_t$ and $\lambda_t$ determine how the ask and the bid price respond to a change in the order size. One would expect these coefficients to be very small for liquid stocks. Then,  (\ref{ask}) and (\ref{askik}) give the equations of two lines, with positive intercepts and slopes. It thus follows that the ask and the bid prices are an increasing  function of the order size $y$. Another way of saying this is that the ask and the bid price are high for  high positive orders and low for high negative orders. 

One can easily check that the linear implicit transaction cost economy satisfies the assumptions made in the previous sections, and so the value of the portfolio for every admissible trading strategy can be derived as in Subsec. \ref{cont}. 
\subsection{Non-linear implicit transaction cost economy}\label{european}
In the  same spirit of \c{C}etin \textit{et. al} (2002) and \c{C}etin \textit{et. al} (2004)  we consider an extension of the BS economy that includes implicit transaction costs. To this purpose, we suppose that as in the linear case the marginal prices are governed by the following equations
\begin{eqnarray}\label{ec}
A(t,0)&=&M(t,0)+\gamma_t\\
B(t,0)&=&M(t,0)-\gamma_t
\end{eqnarray}
with $\gamma_t>0$, and that $M(t,y)$ and $P(t,y)$  are given as
\begin{eqnarray}\label{ec1}
M(t,y)&=&e^{\alpha y}M(t,0)\\
P(t,y)&=&2e^{\alpha y}\gamma_t\label{Ec}
\end{eqnarray}
with $\alpha>0$.  Given $M(t,y)$ and $P(t,y)$, we readily compute that the prices in the implicit transaction cost economy have the following expressions
\begin{eqnarray}\label{pa}
A(t,y)&=&e^{\alpha y}M(t,0)+e^{\alpha y}\gamma_t  \\\label{pa1}
B(t,y)&=&e^{\alpha y}M(t,0)-e^{\alpha y}\gamma_t
\end{eqnarray}
\subsection{Black-Scholes  model}
Suppose $M(\cdot,0)$ is a semimartingale similar to that used by BS. In the BS model the stock price evolves according to a geometric Brownian motion with constant drift and volatility. It follows that, under the probability measure $\mathbb{P}$, the stock price dynamic is given by
\begin{equation}
dM(t,0)=\mu M(t,0)dt + \sigma M(t,0)dW_{t}
\end{equation}
where $W$ is a standard Brownian motion zero at $t=0$ and $\mu$, $\sigma$ are constants. 

As is well-known the BS model is free of arbitrage opportunities and complete, so that  every contingent claim is hedgeable. In particular, this means that the theory presented in the previous sections is well adapted to the BS model, and thus the stock price in the BS model is an ideal candidate for playing the role of $M(\cdot, 0)$. 
\subsection{European call option under transaction costs} 
Consider an European call option with maturity date $T$ and strike price $K$. We suppose the European call option contains a physical delivery feature. This means that in order to avoid implicit transaction costs, the holder of the call would buy the stock at price $K$, and sell it at the highest price. The payoff at time $T$ of the call is thus supposed to be $X=\mbox{max}[M(T,0)-K, 0]$. Note that by the Eqs. (\ref{ask}) and (\ref{askik}), (\ref{pa}) and (\ref{pa1}), $B(T,0)$ is greater than $B(T,y)$ for every $y<0$ which in turn is smaller than $M(T,0)$.

Under the BS model, the price of the European call option at time $t$ is given by
\begin{eqnarray}
\phi(t,M(t,0))&=&M(t,0)\Phi\left(\frac{1}{\sigma\sqrt{T-t}}(\log{\frac{M(t,0)}{K}}+\frac{1}{2}\sigma^2(T-t))\right)\nonumber\\&-&K\Phi\left(\frac{1}{\sigma\sqrt{T-t}}(\log{\frac{M(t,0)}{K}}-\frac{1}{2}\sigma^2(T-t))\right)
\end{eqnarray}
where $\Phi(\cdot)$ is the standard cumulative normal distribution function, and
\begin{equation}\label{imp}
\Phi(d_t) = \Phi\left(\frac{1}{\sigma\sqrt{T-t}}(\log{\frac{M(t,0)}{K}}+\frac{1}{2}\sigma^2(T-t))\right)
\end{equation}
gives the delta of the European call option or, equivalently, the replicating trading strategy of the European call option in the BS economy. 

This trading strategy is non-negative, continuous of unbounded variation, and has finite quadratic variation. However, this strategy is allowed in the model illustrated previosly. Indeed, plugging this trading strategy in Eq. (\ref{zari}) or (\ref{zari1}) and supposing that the implicit transaction cost economy is linear with $M(\cdot,0)$ as in the BS model, the value of the portfolio amounts to
\begin{eqnarray}\label{zari2}
V_T^{\Phi}&=&Z^0_{0} + \Phi(d_0) M(0,0)+ \int_{0}^{T}\Phi(d_s)dM(s,0)+\frac{1}{2}\int_{0}^{T}\Phi(d_s)d\tilde{P}^{\Phi}(s,0) \nonumber\\&-& \frac{1}{2}\int_{0}^{T}(\beta_s+\lambda_s)ds-\frac{1}{2}\int_{0}^{T}(\beta_s-\lambda_s)\mbox{sgn}(\Delta \Phi(d_s))ds
\end{eqnarray}
with
\begin{equation}
T_{T}^{\Phi}=\frac{1}{2}\int_{0}^{T}(\beta_s+\lambda_s)ds+\frac{1}{2}\int_{0}^{T}(\beta_s-\lambda_s)\mbox{sgn}(\Delta \Phi(d_s))ds
\end{equation}
The BS economy being complete implies that 
\begin{eqnarray}\label{1}
V_T^{\Phi}&=&E_{\mathbb{Q}}(X)+ \int_{0}^{T}\Phi(d_s)dM(s,0) +\frac{1}{2}\int_{0}^{T}\Phi(d_s)d\tilde{P}^{\Phi}(s,0)\nonumber\\&-&\frac{1}{2}\int_{0}^{T}(\beta_s+\lambda_s)ds-\frac{1}{2}\int_{0}^{T}(\beta_s-\lambda_s)\mbox{sgn}(\Delta \Phi(d_s))ds\nonumber\\&=&X+\frac{1}{2}\int_{0}^{T}\Phi(d_s)d\tilde{P}^{\Phi}(s,0)- \frac{1}{2}\int_{0}^{T}(\beta_s+\lambda_s)ds-\frac{1}{2}\int_{0}^{T}(\beta_s-\lambda_s)\mbox{sgn}(\Delta \Phi(d_s))ds\nonumber\\
\end{eqnarray}
The presence of the transaction costs, as one can see from Eq. (\ref{1}), have a considerable effect on $V_T^{\Phi}$. Also, increases in $\beta_s$ and $\lambda_s$ produce decreases in the value of the portfolio.  

However, referring to Sec. \ref{sec2} there exists a sequence $\Phi_{n}$ of admissible continuous bounded variation trading strategies such that the impact of price on the portfolio's value can be removed, in an $L^2(\mathbb{Q})$-sense. 

The results for the European put option case are completely analogous.

Under the non-linear implicit transaction cost economy and $M(\cdot,0)$ as in the BS model, the replicating trading strategy of the European call of Subsec. \ref{european} is the same and, the value of the portfolio is given by
\begin{eqnarray}\label{111}
V_T^{\Phi}&=&E_{\mathbb{Q}}(X)+ \int_{0}^{T}\Phi(d_s)dM(s,0) +\frac{1}{2}\int_{0}^{T}\Phi(d_s)d\tilde{P}^{\Phi}(s,0)\nonumber\\&-&\frac{1}{2}\int_{0}^{T}\alpha M(s,0)ds-\int_{0}^{T}\alpha\gamma_{s}\mbox{sgn}(\Delta \Phi(d_s))ds\nonumber\\&=&X+\frac{1}{2}\int_{0}^{T}\Phi(d_s)d\tilde{P}^{\Phi}(s,0)- \frac{1}{2}\int_{0}^{T}\alpha M(s,0)ds\nonumber\\&-&\int_{0}^{T}\alpha\gamma_{s}\mbox{sgn}(\Delta \Phi(d_s))ds
\end{eqnarray}
\section{Conclusion}\label{conc}
The goal of this paper has been the construction of an arbitrage pricing model with implicit transaction costs. To facilitate the understanding of the model, the paper illustrates the model by using a linear and a non-linear implicit transaction cost economy. 

With respect to the existing literature, the paper contributes by incorporating general implicit transaction costs into the standard arbitrage pricing theory. This is achieved by letting the ask and the bid price to depend on the traded volume. And, more importantly, it is proved that the first fundamental theorem of asset pricing (FFTAP) as stated in the standard arbitrage pricing theory is still valid. In addition, contrary to the financial literature, the proposed model shows that continuous trading is not only restricted to trading strategies of bounded variation, but also to trading strategies of bounded quadratic variation. Last but not least, the proposed model shows that the use of continuous and bounded variation trading strategies removes price impact resulting from the trading activities. In other words, investors afford only the traditional implicit transaction costs given by half the difference between the ask and the bid price when they trade with these kinds of trading strategies.

The self-financing portfolios differ from those proposed in \c{C}etin \textit{et al.} (2004) by the fact that prices are different for a buy  and a sell order, and are different from those of Guasoni \textit{et al.} (2010) given that prices depend on the order size. Furthermore, when the trading strategies are continuous and of bounded variation the self-financing portfolios are similar to those of Guasoni \textit{et al.} (2010). It follows that the FFTAP  restricted to these trading strategies holds also for the model in Guasoni \textit{et al.} (2010). 

Future directions for research may include: a) proving the FFTAP under more general trading strategies, here the FFTAP is proved for c\'agl\'ad (left-continuous with right limits) adapted of bounded quadratic variation trading strategies with finitely many jumps, b) developing the continuous version of the self-financing portfolio for general forms of the marginal mid-price and marginal bid-ask spread, here we assume that these are c\'adl\'ag locally bounded semimartingales, c) studying in greater details the problem of hedging under implicit transaction costs. 
\appendix
\section{Self-financing portfolio}\label{app}
The aim of this appendix is to derive the continuous version of Eq. (\ref{port}) for every $t\in[0,T]$. Let  $\sigma_{n}: 0=\tau_{0}^n\leq\tau_{1}^n\leq...\leq \tau_{i_n}^n =t$ be a sequence of random partitions tending to the identity. Take the first and the third sum of Eq. (\ref{port}), and evaluate these as $n\rightarrow \infty$. As it is easily seen, these are just It\'o integrals. The second  result follows by Definition \ref{tradin} and Assumption \ref{mid}. 

With regards to the other terms, the proof follows the pattern of that given for Theorem A.3 in \c{C}etin \textit{et al.} (2004). Before we begin this proof, let us recall an important lemma which in our case holds for all jumps of $Z^1$ since by assumption $Z^1$ has finitely many jumps. 
\begin{lemma}\label{lem}
Let $f$ be a c\'adl\'ag (c\'agl\'ad) function on $[a,b]$. Define  $J^{f}_{\epsilon}=\{y\in[a,b]: |f(y+) - f(y-)|>\epsilon\}$. Then, $J^{f}_{\epsilon}$ is finite for every $\epsilon>0$. 
\end{lemma}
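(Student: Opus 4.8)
The statement is the classical fact that a regulated (in particular c\'adl\'ag or c\'agl\'ad) function on a compact interval has only finitely many jumps exceeding any fixed threshold, and the plan is to prove it by a compactness/covering argument. I would treat the c\'adl\'ag case in detail, the c\'agl\'ad case being obtained by interchanging the roles of the left and right limits. Fix $\epsilon>0$. For every $t\in[a,b]$ the one-sided limits $f(t-)$ and $f(t+)$ exist (at the endpoints only one of them is meaningful, which causes no trouble, since $J^f_\epsilon$ only involves points at which both one-sided limits are defined; alternatively one may argue on the open interval $(a,b)$ and append the two endpoints at the end). Hence there is $\delta_t>0$ with $|f(s)-f(t-)|<\epsilon/2$ for all $s\in(t-\delta_t,t)$ and $|f(s)-f(t+)|<\epsilon/2$ for all $s\in(t,t+\delta_t)$.

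The key step is then to observe that the punctured interval $(t-\delta_t,t+\delta_t)\setminus\{t\}$ contains no point of $J^f_\epsilon$: if $y\in(t-\delta_t,t)$, letting $s\downarrow y$ and $s\uparrow y$ within $(t-\delta_t,t)$ in the inequality $|f(s)-f(t-)|<\epsilon/2$ gives $|f(y+)-f(t-)|\le\epsilon/2$ and $|f(y-)-f(t-)|\le\epsilon/2$, so $|f(y+)-f(y-)|\le\epsilon$ and thus $y\notin J^f_\epsilon$; the case $y\in(t,t+\delta_t)$ is handled the same way using $f(t+)$. This is precisely why one works with $\epsilon/2$ in the first step and why the definition of $J^f_\epsilon$ is stated with a strict inequality. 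Now the family $\{(t-\delta_t,t+\delta_t):t\in[a,b]\}$ is an open cover of the compact set $[a,b]$; extracting a finite subcover with centres $t_1,\dots,t_k$, each of these intervals meets $J^f_\epsilon$ in at most the single point $t_j$, whence $J^f_\epsilon\subseteq\{t_1,\dots,t_k\}$ is finite.

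The only delicate point --- the one I would be most careful about --- is the bookkeeping at the centres of the covering intervals: one must keep track of the fact that the bound produced in the previous step is $\le\epsilon$ rather than $<\epsilon$, so that a point may well fail to lie in $J^f_\epsilon$ even if it is the centre $t_j$ of a covering interval, and hence each covering interval contributes at most one point to $J^f_\epsilon$. An equivalent, perhaps shorter, route is the contrapositive: if $J^f_\epsilon$ were infinite it would, by Bolzano--Weierstrass, contain a monotone sequence of distinct points $y_n\to y_0\in[a,b]$, and the existence of $f(y_0-)$ (if $y_n\uparrow y_0$) or of $f(y_0+)$ (if $y_n\downarrow y_0$) would force $|f(y_n+)-f(y_n-)|\to0$, contradicting $y_n\in J^f_\epsilon$ for large $n$; this is the same estimate packaged differently.
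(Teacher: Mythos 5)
Your argument is correct. Note, however, that the paper does not actually prove this lemma: its ``proof'' is a one-line reference to Folland (1999), so there is no in-paper argument to compare against. What you give is a complete, self-contained proof of the classical regulated-function fact that the paper merely cites. The covering argument is carefully done --- the passage from the strict bound $<\epsilon/2$ on the open one-sided neighbourhoods to the non-strict bound $\le\epsilon$ after taking one-sided limits is exactly the step that most informal write-ups gloss over, and you handle it correctly, as well as the bookkeeping that each covering interval can contribute at most its centre to $J^f_\epsilon$. Your remark about the endpoints (or, equivalently, arguing on $(a,b)$ and appending at most two points) is the right way to deal with the fact that only one one-sided limit is meaningful at $a$ and at $b$. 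The Bolzano--Weierstrass contrapositive you sketch at the end is the other standard route and is equally valid; it trades the explicit finite subcover for a sequential compactness argument but uses the same local estimate near an accumulation point of $J^f_\epsilon$. Either version would serve as a legitimate proof where the paper has only a citation.
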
 
\begin{proof}
See, for example,  Folland (1999).
\end{proof}
For the sake of simplicity and to save space, we will often denote  
\begin{equation}
M(\tau_{i}^{n},\Delta Z^1_{\tau_{i}^{n}})+\frac{1}{2}P(\tau_{i}^{n},\Delta Z^1_{\tau_{i}^{n}})\mbox{sgn}(\Delta Z^1_{\tau_{i}^{n}})\Delta Z^1_{\tau_{i}^{n}}
\end{equation}
by $W(\tau_{i}^{n},\Delta Z^1_{\tau_{i}^{n}})$. 

Write now the limit of $\sum_{i=1}^{n}[W(\tau_{i},\Delta Z^1_{\tau_{i}})-W(\tau_{i},0)]\Delta Z^1_{\tau_{i}}$ as $\lim_{n\rightarrow \infty}\sum_{i\geq1}$ $[W(\tau_{i}^{n},\Delta Z^1_{\tau_{i}^{n}})-W(\tau_{i}^{n},0)]\Delta Z^1_{\tau_{i}^{n}}$.

Then, one can easily verify that this limit is also equal to 
\begin{eqnarray}
&&\lim_{n\rightarrow \infty}\sum_{i\geq1}[W(\tau_{i}^{n},\Delta Z^1_{\tau_{i}^{n}})-W(\tau_{i}^{n},0)]\Delta Z^1_{\tau_{i}^{n}}\nonumber\\&&=\lim_{n\to\infty}\sum_{|Z^1_{\tau_{i}^{n}} - Z^1_{\tau_{i-1}^{n}}|\leq \epsilon}[W(\tau_{i}^{n},\Delta Z^1_{\tau_{i}^{n}})-W(\tau_{i}^{n},0)]\Delta Z^1_{\tau_{i}^{n}}\nonumber\\&&+\lim_{n\to\infty}\sum_{|Z^1_{\tau_{i}^{n}} - Z^1_{\tau_{i-1}^{n}}|> \epsilon}[W(\tau_{i}^{n},\Delta Z^1_{\tau_{i}^{n}})-W(\tau_{i}^{n},0)]\Delta Z^1_{\tau_{i}^{n}}
\end{eqnarray}
Therefore, we have to prove that the terms on the right-hand side exist as $n\rightarrow\infty$. Let's focus on the second term and suppose $Z^1$ is a c\'adl\'ag process. By the above lemma, $J^{Z^1}_{\epsilon}$ is finite and therefore the second limit can be written as
\begin{equation}\label{neweq}
\sum_{s\in J^{Z^1}_{\epsilon}}[W(s,Z^1_{s}-Z^1_{s-})-W(s,0)](Z^1_{s}-Z^1_{s-})
\end{equation}
As regards with the first term we apply Taylor approximation to each $M(\tau_{i}^{n},\cdot)$ and $P(\tau_{i}^{n},\cdot)$. Then,
\begin{eqnarray}
&&\lim_{n\to\infty}\sum_{|Z^1_{\tau_{i}^{n}} - Z^1_{\tau_{i-1}^{n}}|\leq \epsilon}[(M(\tau_{i}^{n},\Delta Z^1_{\tau_{i}^{n}})-M(\tau_{i}^{n},0))\nonumber\\&&+\frac{1}{2}(P(\tau_{i}^{n},\Delta Z^1_{\tau_{i}^{n}})\mbox{sgn}(\Delta Z^1_{\tau_{i}^{n}})-P(\tau_{i}^{n},0)\mbox{sgn}(\Delta Z^1_{\tau_{i}^{n}}))]\Delta Z^1_{\tau_{i}^{n}}\nonumber\\&&=\lim_{n\to\infty}\sum_{|Z^1_{\tau_{i}^{n}} - Z^1_{\tau_{i-1}^{n}}|\leq \epsilon} [M'(\tau_{i}^{n},0)+\frac{1}{2}P'(\tau_{i}^{n},0)\mbox{sgn}(\Delta Z^1_{\tau_{i}^{n}})](\Delta Z^1_{\tau_{i}^{n}})^{2}\nonumber\\&&+\lim_{n\to\infty}\sum_{|Z^1_{\tau_{i}^{n}} - Z^1_{\tau_{i-1}^{n}}|\leq \epsilon} [o^{M}(\tau_{i}^{n},\Delta Z^1_{\tau_{i}^{n}})+\frac{1}{2}o^{P}(\tau_{i}^{n},\Delta Z^1_{\tau_{i}^{n}})\mbox{sgn}(\Delta Z^1_{\tau_{i}^{n}})]\Delta Z^1_{\tau_{i}^{n}}\nonumber\\&&=\lim_{n\to\infty}\sum_{i\geq1}[M'(\tau_{i-1}^{n},0)+\frac{1}{2}P'(\tau_{i-1}^{n},0)\mbox{sgn}(\Delta Z^1_{\tau_{i}^{n}})](\Delta Z^1_{\tau_{i}^{n}})^{2}\nonumber\\&&+\lim_{n\to\infty}\sum_{i\geq1}[(M'(\tau_{i}^{n},0)+\frac{1}{2}P'(\tau_{i}^{n},0)\mbox{sgn}(\Delta Z^1_{\tau_{i}^{n}}))\nonumber\\&&-(M'(\tau_{i-1}^{n},0)+\frac{1}{2}P'(\tau_{i-1}^{n},0)\mbox{sgn}(\Delta Z^1_{\tau_{i}^{n}}))](\Delta Z^1_{\tau_{i}})^{2}\nonumber\\&&-\lim_{n\to\infty}\sum_{|Z^1_{\tau_{i}^{n}} - Z^1_{\tau_{i-1}^{n}}|> \epsilon}[M'(\tau_{i}^{n},0)+\frac{1}{2}P'(\tau_{i}^{n},0)\mbox{sgn}(\Delta Z^1_{\tau_{i}^{n}})](\Delta Z^1_{\tau_{i}^{n}})^{2}\nonumber\\&&+\lim_{n\to\infty}\sum_{|Z^1_{\tau_{i}^{n}} - Z^1_{\tau_{i-1}^{n}}|\leq \epsilon}[o^{M}(\tau_{i}^{n},\Delta Z^1_{\tau_{i}^{n}})+\frac{1}{2}o^{P}(\tau_{i}^{n},\Delta Z^1_{\tau_{i}^{n}})\mbox{sgn}(\Delta Z^1_{\tau_{i}^{n}})]\Delta Z^1_{\tau_{i}^{n}}\nonumber\\
\end{eqnarray}
Let's focus on the first limit. This limit converges to $\int_{0}^{t}[M'(s,0)+\frac{1}{2}P'(s,0)\mbox{sgn}(\Delta Z^1_{s})]$ $d[Z^1,Z^1]_{s}$. To see this, note that $Z^1$ has bounded quadratic variation, i.e.
\begin{equation}
\lim_{n\to\infty}  \sum_{i\geq 1} (Z^1_{\tau_{i}^{n}}-Z^1_{\tau_{i-1}^{n}})^{2}<\infty
\end{equation}
Therefore, we may approach Lebesgue-Stieltjes integral to show that the first limit above exists and converges to $\int_{0}^{t}[M'(s,0)+\frac{1}{2}P'(s,0)\mbox{sgn}(\Delta Z^1_{s})]d[Z^1,Z^1]_{s}$.   

The Lebesgue-Stieltjes  integral is well-defined when the integrator is of bounded variation on $[0,t]$ and when the integrand is a bounded measurable function. Therefore, $[Z^1,Z^1]_{s}$ denoting the quadratic variation of $Z^1$ for every $s\in[0,t]$ is positive, increasing monotone and thus belongs to the bounded variation functions. Thus, when $M'(\cdot,0)+\frac{1}{2}P'(\cdot,0)\mbox{sgn}(\Delta Z^1_{\cdot})$ is bounded on $[0,t]$, and it is since $M'(\cdot,0)$ and $P'(\cdot,0)$ are c\'adl\'ag locally bounded by assumption, one may define a positive measure $\mu$ on $[0,t]$ as $\mu((s,h])=[Z^1,Z^1]_{h}-[Z^1,Z^1]_{s}$ for every $s,h\in[0,t]$, with $\mu(\{0\})=0$, such that 
 \begin{equation}
\int_{0}^{t}[M'(s,0)+\frac{1}{2}P'(s,0)\mbox{sgn}(\Delta Z^1_{s})]d\mu(s)=\int_{0}^{t}[M'(s,0)+\frac{1}{2}P'(s,0)\mbox{sgn}(\Delta Z^1_{s})]d[Z^1,Z^1]_{s}
\end{equation}
exists as a Lebesgue-Stieltjes integral. Therefore, the convergence
\begin{equation}
\lim_{n\to\infty}  \sum_{i\geq 1}[M'(\tau_{i}^{n},0)+\frac{1}{2}P'(\tau_{i}^{n},0)\mbox{sgn}(\Delta Z^1_{\tau_{i}^{n}})] (Z^1_{\tau_{i}^{n}}-Z^1_{\tau_{i-1}^{n}})^{2}
\end{equation}
can be seen as the convergence of
\begin{eqnarray}
&&\sum_{i\geq 1} [M'(\tau_{i}^{n},0)+\frac{1}{2}P'(\tau_{i}^{n},0)\mbox{sgn}(\Delta Z^1_{\tau_{i}^{n}})](Z^1_{\tau_{i}^{n}}-Z^1_{\tau_{i-1}^{n}})^{2}\nonumber\\&=&\int_{0}^{t}[M'(s,0)+\frac{1}{2}P'(s,0)\mbox{sgn}(\Delta Z^1_{s})]d\mu_{n}(s)
\end{eqnarray}
to 
\begin{eqnarray}
&&\int_{0}^{t}[M'(s,0)+\frac{1}{2}P'(s,0)\mbox{sgn}(\Delta Z^1_{s})]d\mu(s)\nonumber\\&=&\int_{0}^{t}[M'(s,0)+\frac{1}{2}P'(s,0)\mbox{sgn}(\Delta Z^1_{s})]d[Z^1,Z^1]_{s}
\end{eqnarray}
Note that we are defining $\mu_{n}$  as 
\begin{equation}
\mu_{n}((s,h])=\sum_{\substack{\tau_{i}^{n}\in\sigma_{n},\\  \tau_{i}^{n}\leq h}}(\Delta Z^1_{\tau_{i}^{n}})^{2}\delta_{\tau_{i-1}^{n}}-\sum_{\substack{\tau_{i}^{n}\in\sigma_{n},\\  \tau_{i}^{n}\leq s}}(\Delta Z^1_{\tau_{i}^{n}})^{2}\delta_{\tau_{i-1}^{n}}
\end{equation}
and $\delta_{\tau_{i-1}^{n}}$ gives the Dirac measure with mass on $t=\tau_{i-1}^{n}$. Note also that we are assuming that $Z^1_{0-}=Z^1_{0}$.

The c\'agl\'ad case works in the same manner, but now we have to deal with intervals of the form $[s,h)$. We have thus shown the following theorem.
\begin{lemma}\label{thm}
Let $M'(\cdot,0)$ and $P'(\cdot,0)$ be c\'adl\'ag locally bounded stochastic processes and let $Z^1$ be a predictable stochastic process of quadratic bounded variation as in Definition \ref{tradin}. Then  the limit as $n\rightarrow\infty$ of $\sum_{i\geq 1} [M'(\tau_{i-1}^{n},0)+\frac{1}{2}P'(\tau_{i-1},0)\mbox{sgn}(\Delta Z^1_{\tau_{i}^{n}})(\Delta Z^1_{\tau_{i}^{n}})^{2}$ exists and is given by
\begin{equation}\label{eq6}
\int_{0}^{t}[M'(s,0)+\frac{1}{2}P'(s,0)\mbox{sgn}(\Delta Z^1_{s})]d[Z^1,Z^1]_{s}
\end{equation} 
\end{lemma}
Take now the second limit and, note that 
\begin{eqnarray}
&&\lim_{n\to\infty}\sum_{i\geq1}[(M'(\tau_{i}^{n},0)+\frac{1}{2}P'(\tau_{i}^{n},0)\mbox{sgn}(\Delta Z^1_{\tau_{i}^{n}}))\nonumber\\&&-(M'(\tau_{i-1}^{n},0)+\frac{1}{2}P'(\tau_{i-1}^{n},0)\mbox{sgn}(\Delta Z^1_{\tau_{i}^{n}}))](\Delta Z^1_{\tau_{i}})^{2}\nonumber\\&&\leq\lim_{n\to\infty} \sum_{i\geq 1}2K(\Delta Z^1_{\tau_{i}^{n}})^{2}\leq 2K[Z^1,Z^1]_{t}<\infty
\end{eqnarray} 
where $K=\sup_{s\in[0,t]}|M'(s,0)+\frac{1}{2}P'(\tau_{i}^{n},0)\mbox{sgn}(\Delta Z^1_{\tau_{i}^{n}})|$. It is then easy to check that the second limit is given by  
\begin{eqnarray}\label{neweq2}
&&\sum_{s<0\leq t}[(M'(s,0)+\frac{1}{2}P'(s,0)\mbox{sgn}(Z^1_{s}-Z^1_{s-}))\nonumber\\&&-(M'(s-,0)+\frac{1}{2}P'(s-,0)\mbox{sgn}(Z^1_{s}-Z^1_{s-}))](Z^1_{s}-Z^1_{s-})^{2}
\end{eqnarray}
Since $J^{Z^1}_{\epsilon}$ is finite, the third limit also converges to
\begin{equation}\label{neweq3}
\sum_{s\in J^{Z^1}_{\epsilon} }[M'(s,0)+\frac{1}{2}P'(s,0)\mbox{sgn}(Z^1_{s}-Z^1_{s-})](Z^1_{s}-Z^1_{s-})^{2}
\end{equation} 
Assume for the moment that $M''(t,y)$ and $P''(t,y)$ are uniformly bounded in $t$ and $y$ by $L_{1}$ and $L_{2}$ so that their sums is bounded by a constant $L$. Therefore, following the proof of Theorem A.3 in \c{C}etin \textit{et al.} (2004),  the last limit satisfies the following inequality
\begin{eqnarray}
&&\lim_{n\to\infty}|\sum_{|Z^1_{\tau_{i}^{n}} - Z^1_{\tau_{i-1}^{n}}|\leq \epsilon}[o^{M}(\tau_{i}^{n},\Delta Z^1_{\tau_{i}^{n}})+\frac{1}{2}o^{P}(\tau_{i}^{n},\Delta Z^1_{\tau_{i}^{n}})\mbox{sgn}(\Delta Z^1_{\tau_{i}^{n}})]\Delta Z^1_{\tau_{i}^{n}}|\nonumber\\&&\leq L\lim_{n\to\infty}\sup_{|Z^1_{\tau_{i}^{n}} - Z^1_{\tau_{i-1}^{n}}|\leq \epsilon}|Z^1_{\tau_{i}^{n}} - Z^1_{\tau_{i-1}^{n}}|\sum_{|Z^1_{\tau_{i}^{n}} - Z^1_{\tau_{i-1}^{n}}|\leq \epsilon}(\Delta Z^1_{\tau_{i}^{n}})^{2}\leq L\epsilon[Z^1,Z^1]_{t}\nonumber\\
\end{eqnarray}
The quantity $\epsilon$ is arbitrary and therefore the last limit converges to zero as $\epsilon\rightarrow0$. It follows then that $\lim_{n\to\infty}\sum_{i\geq1}[W(\tau_{i}^{n},\Delta Z^1_{\tau_{i}^{n}})-W(\tau_{i}^{n},0)]\Delta Z^1_{\tau_{i}^{n}}$ converges to  
\begin{eqnarray}\label{m}
&&\sum_{0<s\leq t}[(M(s,Z^1_{s}-Z^1_{s-})+\frac{1}{2}P(s,Z^1_{s}-Z^1_{s-})\mbox{sgn}(Z^1_{s}-Z^1_{s-}))\nonumber\\&&-(M(s,0)+\frac{1}{2}P(s,0)\mbox{sgn}(Z^1_{s}-Z^1_{s-}))](Z^1_{s}-Z^1_{s-})\nonumber\\&&+\int_{0}^{t}[M'(s,0)+\frac{1}{2}P'(s,0)\mbox{sgn}(\Delta Z^1_{s})]d[Z^1,Z^1]_{s}\nonumber\\&&+\sum_{s<0\leq t}[(M'(s,0)+\frac{1}{2}P('s,0)\mbox{sgn}(Z^1_{s}-Z^1_{s-}))\nonumber\\&&-(M'(s-,0)+\frac{1}{2}P'(s-,0)\mbox{sgn}(Z^1_{s}-Z^1_{s-}))](Z^1_{s}-Z^1_{s-})^{2}\nonumber\\&&-\sum_{0<s\leq t}[M'(s,0)+\frac{1}{2}P('s,0)\mbox{sgn}(Z^1_{s}-Z^1_{s-})](Z^1_{s}-Z^1_{s-})^{2}\nonumber\\&&=\int_{0}^{t}[M'(s,0)+\frac{1}{2}P'(s,0)\mbox{sgn}(\Delta Z^1_{s})]d[Z^1,Z^1]^{c}_{s}\nonumber\\&&+\sum_{0<s\leq t}[(M(s,Z^1_{s}-Z^1_{s-})+\frac{1}{2}P(s,Z^1_{s}-Z^1_{s-})\mbox{sgn}(Z^1_{s}-Z^1_{s-}))\nonumber\\&&-(M(s,0)+\frac{1}{2}P(s,0)\mbox{sgn}(Z^1_{s}-Z^1_{s-}))](Z^1_{s}-Z^1_{s-})
\end{eqnarray}
as $\epsilon\rightarrow0$. 

Note that Eq. (\ref{neweq}) converges to a finite non-negative quantity  since by assumption $M(t,y)+\frac{1}{2}P(t,y)\mbox{sgn}(y)$ is non-decreasing in $y$ and the number of jumps of $Z^1$ are finite. It then follows that Eq. (\ref{m}) is a finite quantity.  

We can do the same thing with $Z^1$ being a c\'agl\'ad process. Using the same argument as above, $\lim_{n\to\infty}\sum_{i\geq1}[W(\tau_{i}^{n},\Delta Z^1_{\tau_{i}^{n}})-W(\tau_{i}^{n},0)]\Delta Z^1_{\tau_{i}^{n}}$ converges to
\begin{eqnarray}\label{m1}
&&\int_{0}^{t}[M'(s,0)+\frac{1}{2}P'(s,0)\mbox{sgn}(\Delta Z^1_{s})]d[Z^1,Z^1]^{c}_{s}\nonumber\\&&+\sum_{0\leq s< t}[(M(s,Z^1_{s+}-Z^1_{s})+\frac{1}{2}P(s,Z^1_{s+}-Z^1_{s})\mbox{sgn}(Z^1_{s+}-Z^1_{s}))\nonumber\\&&-(M(s,0)+\frac{1}{2}P(s,0)\mbox{sgn}(Z^1_{s+}-Z^1_{s}))](Z^1_{s+}-Z^1_{s})
\end{eqnarray}
For the general case, it is sufficient to define $V_{l_{1}}^{y}=\mbox{inf}\{t>0: |M''(t,y)|>l_{1}\}$, $\tilde{M}(t,y)=M''(t,y)\textbf{1}_{[0,V_{l_{1}}^{y})}$ and, $V_{l_{2}}^{y}=\mbox{inf}\{t>0: |P''(t,y)|>l_{2}\}$, $\tilde{P}(t,y)=P''(t,y)\textbf{1}_{[0,V_{l_{2}}^{y})}$. Therefore, the previous results hold for $\tilde{M}(t,y)$ and $\tilde{P}(t,y)$ for every $l_1$ and $l_2$. To prove the general result it is thus sufficient to consider processes taking values in $[-l_{1}, l_{1}]$ and $[-l_{2}, l_{2}]$. The proof is similar to the proof of It\'o Lemma in Protter (2004). 

What's important to note here is that Eqs. (\ref{m}) and (\ref{m1}) are finite. This follows from the properties of $M$ and $P$, and most importantly from the properties of $Z^1$.

Finally, the last terms of Eq. (\ref{port}) are equal to $-\frac{1}{2}P(t,0)\mbox{sgn}(Z^1_{t}-Z^1_{t-})Z^1_{t}$ for the c\'adl\'ag case and $0$ for the c\'agl\'ad case. 

We have thus proven the following theorem.
\begin{theorem}
For $Z^1$ c\'adl\'ag and $Z^1$ c\'agl\'ad, the value of the self-financing portfolio assumes the following form 
\begin{eqnarray}
&&V_{t}^{Z^1}=V_{0}^{Z^1}+\int_{0}^{t}Z^1_{s}dM(s,0)+\frac{1}{2}\int_{0}^{t}Z^1_{s}d\tilde{P}^{Z^1}(s,0) \nonumber\\&&-\int_{0}^{t}M'(s,0)d[Z^1,Z^1]^{c}_{s}-\sum_{0<s\leq t}[M(s,Z^1_{s}-Z^1_{s-})-M(s,0)](Z^1_{s}-Z^1_{s-}) \nonumber\\&&-\frac{1}{2}\int_{0}^{t}P'(s,0)\mbox{sgn}(\Delta Z^1_{s})d[Z^1,Z^1]^{c}_{s}\nonumber\\&&-\frac{1}{2}\sum_{0 <s \leq t}[P(s,Z^1_{s}-Z^1_{s-})-P(s,0)]\mbox{sgn}(Z^1_{s}-Z^1_{s-})(Z^1_{s}-Z^1_{s-})\nonumber\\&&-\frac{1}{2}P(t,0)\mbox{sgn}(Z^1_{t}-Z^1_{t-})Z^1_{t}
\end{eqnarray}
and
\begin{eqnarray}
&&V_{t}^{Z^1}=V_{0}^{Z^1}+\int_{0}^{t}Z^1_{s}dM(s,0)+\frac{1}{2}\int_{0}^{t}Z^1_{s}d\tilde{P}^{Z^1}(s,0) -\int_{0}^{t}M'(s,0)d[Z^1,Z^1]^{c}_{s}\nonumber\\&&-\sum_{0\leq s< t}[M(s,Z^1_{s+}-Z^1_{s})-M(s,0)](Z^1_{s+}-Z^1_{s}) \nonumber\\&&-\frac{1}{2}\int_{0}^{t}P'(s,0)\mbox{sgn}(\Delta Z^1_{s})d[Z^1,Z^1]^{c}_{s}\nonumber\\&&-\frac{1}{2}\sum_{0\leq s< t}[P(s,Z^1_{s+}-Z^1_{s})-P(s,0)]\mbox{sgn}(Z^1_{s+}-Z^1_{s})(Z^1_{s+}-Z^1_{s})
\end{eqnarray}
\end{theorem} 
\section{Convergence properties of the portfolio's value}\label{app2}
The following section establishes some convergence results for the portfolio's value. In particular, we show the a.s. convergence of $T_{T}^{Z^1_{n}}$  to $T_{T}^{Z^1}$, for any $Z^1_{n}$, $Z^1$ c\'agl\'ad adapted processes satisfying Definition \ref{tradin}, as long as $n\rightarrow\infty$ and $Z^1_{n}$ tends a.s. to $Z^1$,  and $[Z^1_{n},Z^1_{n}]^{c}_{s}$ a.s. to $[Z^1,Z^1]^{c}_{s}$. 

The following lemma holds for general trading strategies satisfying Definition \ref{tradin}.
\begin{lemma}\label{iku}
Let $M'(\cdot,0)$ satisfy Assumption \ref{mid}, and $Z^1_{n}$, $n\geq1$, $Z^1$ predictable processes satisfying Definition \ref{tradin}. Suppose further that $\sup_{n\geq1}[Z^1_{n},Z^1_{n}]^{c}_{T}<\infty$. If $[Z^1_{n},Z^1_{n}]^{c}_{s}$ tends a.s. to  $[Z^1,Z^1]^{c}_{s}$ for all $s\in[0,T]$, then we obtain
\begin{equation}
\lim_{n\rightarrow\infty}\int_{0}^{T}M'(s,0)d[Z^1_{n},Z^1_{n}]^{c}_{s}=\int_{0}^{T}M'(s,0)d[Z^1,Z^1]^{c}_{s}\quad \text{a.s.}
\end{equation} 
\end{lemma}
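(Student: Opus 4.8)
The plan is to treat the statement as a pathwise assertion. First I would fix an $\omega$ outside the $\mathbb{P}$-null set on which either $\sup_{n}[Z^1_n,Z^1_n]_T(\omega)=\infty$ or $[Z^1_n,Z^1_n]_s(\omega)\not\to[Z^1,Z^1]_s(\omega)$ for some $s$, and argue with the fixed sample path. For such an $\omega$ the maps $s\mapsto[Z^1_n,Z^1_n]_s(\omega)$ and $s\mapsto[Z^1,Z^1]_s(\omega)$ are non-decreasing and right-continuous on $[0,T]$, hence are the distribution functions of finite positive Borel measures $\mu_n^\omega,\mu^\omega$ on $[0,T]$, with $\mu_n^\omega((a,b])=[Z^1_n,Z^1_n]_b-[Z^1_n,Z^1_n]_a$ and $\mu^\omega(\{0\})=0$, exactly as in Definition~\ref{def}. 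By Assumption~\ref{mid} the path $s\mapsto M'(s,0)(\omega)$ is c\`adl\`ag, hence bounded on the compact interval $[0,T]$, so both quantities in the statement are well-defined Lebesgue--Stieltjes integrals $\int_0^T M'(s,0)\,d\mu_n^\omega$ and $\int_0^T M'(s,0)\,d\mu^\omega$. The hypothesis then gives $\mu_n^\omega((a,b])\to\mu^\omega((a,b])$ for every $0\le a\le b\le T$; in particular the total masses converge, $\mu_n^\omega([0,T])=[Z^1_n,Z^1_n]_T\to[Z^1,Z^1]_T$, and $C_\omega:=\sup_n\mu_n^\omega([0,T])<\infty$.

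The core of the argument is a uniform approximation of the c\`adl\`ag integrand $h:=M'(\cdot,0)(\omega)$ by step functions, combined with a Helly-type passage to the limit. Fix $\varepsilon>0$. By Lemma~\ref{lem} the set $J_\varepsilon=\{s:|h(s+)-h(s-)|>\varepsilon\}$ is finite; using the existence of one-sided limits and a routine compactness argument one builds a partition $0=t_0<t_1<\dots<t_m=T$ whose points are taken, whenever possible, at continuity points of $[Z^1,Z^1]_\cdot(\omega)$ (which are all but countably many) and which is fine enough that $h$ oscillates by at most $\varepsilon$ on each open subinterval $(t_{j-1},t_j)$. Let $\phi$ be the step function equal to $h(t_{j-1}+)$ on $(t_{j-1},t_j)$ and to $h(t_j)$ at $t_j$, so that $\sup_{[0,T]}|h-\phi|\le\varepsilon$. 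Then $\bigl|\int_0^T h\,d\mu_n^\omega-\int_0^T\phi\,d\mu_n^\omega\bigr|\le\varepsilon C_\omega$ and $\bigl|\int_0^T h\,d\mu^\omega-\int_0^T\phi\,d\mu^\omega\bigr|\le\varepsilon\,\mu^\omega([0,T])$, uniformly in $n$. Since $\phi$ is a finite linear combination of indicators of intervals with endpoints among the $t_j$, $\int_0^T\phi\,d\mu_n^\omega$ is a finite linear combination of the increments $[Z^1_n,Z^1_n]_{t_j}-[Z^1_n,Z^1_n]_{t_{j-1}}$ and of the one-point masses $[Z^1_n,Z^1_n]_{t_j}-[Z^1_n,Z^1_n]_{t_j-}$; the former converge to the corresponding increments of $[Z^1,Z^1]$ by hypothesis, and the latter converge to the correct limiting masses because at a continuity point $t_j$ of $[Z^1,Z^1]_\cdot(\omega)$ the pointwise limit of the non-decreasing functions $[Z^1_n,Z^1_n]_\cdot(\omega)$ is also continuous from the left, whence $[Z^1_n,Z^1_n]_{t_j-}\to[Z^1,Z^1]_{t_j}$. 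This gives $\int_0^T\phi\,d\mu_n^\omega\to\int_0^T\phi\,d\mu^\omega$.

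Assembling the three estimates yields $\limsup_n\bigl|\int_0^T M'(s,0)\,d[Z^1_n,Z^1_n]_s-\int_0^T M'(s,0)\,d[Z^1,Z^1]_s\bigr|\le\varepsilon\,(C_\omega+\mu^\omega([0,T]))$, and letting $\varepsilon\downarrow0$ gives the convergence for this $\omega$; since the exceptional set is $\mathbb{P}$-null, the a.s.\ statement follows. If $M'(\cdot,0)$ fails to be uniformly bounded one first localizes with the stopping times $V_l^y$ introduced in Appendix~\ref{app}, reducing to processes taking values in $[-l,l]$ exactly as there.

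The step I expect to be the main obstacle is the interaction between the jumps of $M'(\cdot,0)$ and the atoms of the measure $d[Z^1,Z^1]$. The convergence $\mu_n^\omega\rightharpoonup\mu^\omega$ transfers automatically to integrals only against \emph{continuous} test functions, whereas $M'(\cdot,0)$ is merely c\`adl\`ag, so the partition above must simultaneously resolve the (finitely many, large) jumps of $M'(\cdot,0)$ and be placed at continuity points of $[Z^1,Z^1]_\cdot(\omega)$; this is the delicate point and is precisely where one uses both that the distribution functions converge at \emph{every} $s\in[0,T]$ (not merely at continuity points of the limit) and that $\sup_n[Z^1_n,Z^1_n]_T<\infty$.
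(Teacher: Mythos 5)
Your approach --- uniformly approximate the c\`adl\`ag integrand $M'(\cdot,0)$ by a step function, bound the two error terms by $\varepsilon$ times the total masses using $\sup_n[Z^1_n,Z^1_n]_T<\infty$, and pass to the limit in the step-function integral via the pointwise convergence hypothesis --- is essentially the paper's own. The paper realizes the step approximation with the stopping times $s_m^{\varepsilon}=\inf\{t>s_{m-1}^{\varepsilon}:|M'(t,0)-M'(s_{m-1}^{\varepsilon},0)|>\varepsilon\}$ and $M'^{\varepsilon}=\sum_i M'(s_{i-1}^{\varepsilon},0)\mathbf{1}_{[\![s_{i-1}^{\varepsilon},s_i^{\varepsilon}[\![}$, while you construct the partition by hand; the skeleton of the argument is the same.

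The issue is the ``delicate point'' you flag at the end: it is a genuine gap and, under the Lebesgue--Stieltjes reading of the integral that you adopt, it cannot be closed with the stated hypotheses. When a large jump of $M'(\cdot,0)$ sits exactly at an atom of $d[Z^1,Z^1]$, you are forced to put a partition point there, but then the atom contribution requires $[Z^1_n,Z^1_n]_{t_j-}\to[Z^1,Z^1]_{t_j-}$, which pointwise convergence at every $s$ does not deliver. Concretely, with $M'(s,0)=\mathbf{1}_{\{s\ge 1/2\}}$, $Z^1_n=\mathbf{1}_{\{s\ge 1/2-1/n\}}$ and $Z^1=\mathbf{1}_{\{s\ge 1/2\}}$ (all c\`adl\`ag, predictable, of bounded quadratic variation), one has $[Z^1_n,Z^1_n]_s=\mathbf{1}_{\{s\ge 1/2-1/n\}}\to\mathbf{1}_{\{s\ge 1/2\}}=[Z^1,Z^1]_s$ for \emph{every} $s$ and $\sup_n[Z^1_n,Z^1_n]_T=1$, yet $\int_0^T M'(s,0)\,d[Z^1_n,Z^1_n]_s=M'(1/2-1/n,0)=0$ for all $n$ while $\int_0^T M'(s,0)\,d[Z^1,Z^1]_s=M'(1/2,0)=1$. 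Be aware that the paper's own proof hides the same problem: it writes $\int_0^T M'^{\varepsilon}\,d[Z^1,Z^1]=\sum_i M'(s_{i-1}^{\varepsilon},0)\bigl([Z^1,Z^1]_{s_i^{\varepsilon}}-[Z^1,Z^1]_{s_{i-1}^{\varepsilon}}\bigr)$, but this is not the Lebesgue--Stieltjes integral of the left-closed right-open step $M'^{\varepsilon}$ (which would bring in the left limits $[Z^1,Z^1]_{s_i^{\varepsilon}-}$), and this sleight of hand is exactly where the common-jump case is swept under the rug. The lemma is therefore correct only under an extra restriction (e.g.\ $[Z^1,Z^1]$ atomless, $M'(\cdot,0)$ continuous, or $M'(\cdot,0)$ and $[Z^1,Z^1]$ sharing no jump time), or under the left-endpoint convention of Definition~\ref{def}, which effectively replaces $M'(s,0)$ by $M'(s-,0)$. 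None of this propagates to the paper's use of the lemma in Theorem~\ref{themu}, because there the limit $\hat{W}$ has zero quadratic variation, so $\int_0^T M'(s,0)\,d[Z^1_n,Z^1_n]_s\to 0$ follows immediately from $[Z^1_n,Z^1_n]_T\to 0$ and local boundedness of $M'(\cdot,0)$; but the lemma as stated, and your proof of it, both contain this flaw.
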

\begin{proof}
Let $Z^1_{n}, Z^1$ satisfy Definition \ref{tradin}. For $\epsilon>0$, let $s_0^{\epsilon}=0$ and  define the sequence of stopping times $(s_{m-1}^{\epsilon})_{m\geq1}$ as $s_{m}^{\epsilon}=\inf\{t>s_{m-1}^{\epsilon}:|M'(t,0)-M'(s_{m-1}^{\epsilon},0)|>\epsilon\}$. Set $M'^{\epsilon}$ equal to $M'^{\epsilon}=\sum_{i=1}^{\infty}M'(s_{i-1}^{\epsilon},0)\textbf{1}_{[\![ s_{i-1}^{\epsilon}, s_{i}^{\epsilon}[\![}$. Note that $|M'(t,0)-M'^{\epsilon}(t)|\leq\epsilon$ and $|M'^{\epsilon}(t)-M'(t,0)|\leq\epsilon$ for every $t\in[0,T]$. By construction $M'^{\epsilon}$ is c\'adl\'ag, and piecewise constant.
This means that the Lebesgue-Stieltjes integrals of $M'^{\epsilon}$ with respect to $[Z^1_{n},Z^1_{n}]$ and $[Z^1,Z^1]$ exist, and we can thus write down the following inequality
\begin{eqnarray}
&&|\int_{0}^{T}M'(s,0)d[Z^1_{n},Z^1_{n}]^{c}_{s}-\int_{0}^{T}M'(s,0)d[Z^1,Z^1]^{c}_{s}|\nonumber\\&&\leq |\int_{0}^{T}(M'(s,0)-M'^{\epsilon}(s))d[Z^1_{n},Z^1_{n}]^{c}_{s}|+|\int_{0}^{T}(M'^{\epsilon}(s)-M'(s,0))d[Z^1,Z^1]^{c}_{s}|\nonumber\\&&+|\int_{0}^{T}M'^{\epsilon}(s)d[Z^1,Z^1]^{c}_{s}-\int_{0}^{T}M'^{\epsilon}(s)d[Z^1_{n},Z^1_{n}]^{c}_{s}|
\end{eqnarray}
Then, the right-hand side is bounded by
\begin{eqnarray}
&&\epsilon([Z^1_{n},Z^1_{n}]^{c}_{T}+[Z^1,Z^1]^{c}_{T})+|\int_{0}^{T}M'^{\epsilon}(s)d[Z^1,Z^1]^{c}_{s}-\int_{(0,T]}M'^{\epsilon}(s)d[Z^1_{n},Z^1_{n}]^{c}_{s}|\nonumber\\&\leq& \epsilon(H+[Z^1,Z^1]^{c}_{T})+|\int_{0}^{T}M'^{\epsilon}(s)d[Z^1,Z^1]^{c}_{s}-\int_{0}^{T}M'^{\epsilon}(s)d[Z^1_{n},Z^1_{n}]^{c}_{s}|
\end{eqnarray}
where the last inequality follows from the fact that  $\sup_{n\geq1}[Z^1_{n},Z^1_{n}]^{c}_{T}<\infty$. 
Now, note that 
\begin{eqnarray}
&&\int_{0}^{T}M'^{\epsilon}(s)d[Z^1,Z^1]^{c}_{s}=\sum_{i=1}^{\infty} M'(s_{i-1}^{\epsilon}, 0)([Z^1,Z^1]_{s_{i}^{\epsilon}}-[Z^1,Z^1]^{c}_{s_{i-1}^{\epsilon}})\\&&\int_{0}^{T}M'^{\epsilon}(s)d[Z^1_{n},Z^1_{n}]^{c}_{s}=\sum_{i=1}^{\infty} M'(s_{i-1}^{\epsilon}, 0)([Z^1_{n},Z^1_{n}]^{c}_{s_{i}^{\epsilon}}-[Z^1_{n},Z^1_{n}]^{c}_{s_{i-1}^{\epsilon}})
\end{eqnarray}
Thus, using the convergence of $[Z^1_{n},Z^1_{n}]^{c}$ 
\begin{equation}
|\int_{0}^{T}M'^{\epsilon}(s)d[Z^1,Z^1]_{s}-\int_{0}^{T}M'^{\epsilon}(s)d[Z^1_{n},Z^1_{n}]_{s}|
\end{equation}
tends to zero as $n\rightarrow\infty$. 

The result then follows easily since $\epsilon$ was arbitrary.  
\end{proof}

Let's now show the a.s. convergence of the term 
\begin{eqnarray}
B_{T}^{Z^1_{n}}&=&\sum_{0\leq s< T}[(M(s,Z^1_{n,s+}-Z^1_{n,s})+\frac{1}{2}P(s,Z^1_{n,s+}-Z^1_{n,s})\mbox{sgn}(Z^1_{n,s+}-Z^1_{n,s}))\nonumber\\&-&(M(s,0)+\frac{1}{2}P(s,0)\mbox{sgn}(Z^1_{n,s+}-Z^1_{n,s}))](Z^1_{n,s+}-Z^1_{n,s})
\end{eqnarray}
to 
\begin{eqnarray}\label{mbaroi}
B_{T}^{Z^1}&=&\sum_{0\leq s< T}[(M(s,Z^1_{s+}-Z^1_{s})+\frac{1}{2}P(s,Z^1_{s+}-Z^1_{s})\mbox{sgn}(Z^1_{s+}-Z^1_{s}))\nonumber\\&-&(M(s,0)+\frac{1}{2}P(s,0)\mbox{sgn}(Z^1_{s+}-Z^1_{s}))](Z^1_{s+}-Z^1_{s})
\end{eqnarray}
a.s., with $Z^1_{n}$, $Z^1$ c\'agl\'ad adapted processes of bounded quadratic variation  over $[0,T]$.    

By Definition \ref{tradin}, $Z^1_{n}$ have finite number of jumps, and by Assumption \ref{assnew} $M(t,y)+\frac{1}{2}P(t,y)\mbox{sgn}(y)$ is non-decreasing in $y$. Therefore, 
\begin{eqnarray}
B_{T}^{Z^1_{n}}&=&\sum_{0\leq s< T}[(M(s,Z^1_{n,s+}-Z^1_{n,s})+\frac{1}{2}P(s,Z^1_{n,s+}-Z^1_{n,s})\mbox{sgn}(Z^1_{n,s+}-Z^1_{n,s}))\nonumber\\&-&(M(s,0)+\frac{1}{2}P(s,0)\mbox{sgn}(Z^1_{n,s+}-Z^1_{n,s}))](Z^1_{n,s+}-Z^1_{n,s})
\end{eqnarray}
converges as $n\rightarrow\infty$ to (\ref{mbaroi}).  

We have thus proved the following lemma. 
\begin{lemma}\label{iku2}
Let $M(t,y)$ and $P(t,y)$ satisfy Assumption \ref{mid}, $M(t,y)+\frac{1}{2}P(t,y)\mbox{sgn}(y)$ satisfy Assumption \ref{assnew}, and let $Z^1_{n}$ be a sequence of c\'agl\'ad processes, satisfying Definition \ref{tradin} over $[0,T]$, converging a.s. to a predictable c\'agl\'ad process $Z^1$ which also satisfies Definition \ref{tradin} over $[0,T]$. Then, we have the a.s. convergence of 
\begin{equation}
B_{T}^{Z^1_{n}}
\end{equation}
to
\begin{equation}
B_{T}^{Z^1} 
\end{equation}
\end{lemma}
\begin{remark}\label{iku3}
The convergence of the other terms of $T_{T}^{Z^1_{n}}$  can be shown in the same manner. One can simply   interchange $M'(\cdot,0)$ with $P'(\cdot,0)$ for the first lemma. 
\end{remark}
\section{Standard economy}\label{app1}
\begin{definition}\label{def5}
Let $V_{0}^{Z^1}=0$. The value of the portfolio in the standard economy is given by $\int_{0}^{T}Z^1_{s}dM(s,0)$ for predictable trading strategies $Z^1$. A portfolio is called admissible if it satisfies Equation (\ref{eqport2}), thus  trading strategies that require unlimited capital are not allowed, in particular doubling strategies have to be excluded. An admissible strategy $Z^1$ is an arbitrage on $[0, T]$ if  $\int_{0}^{T}Z^1_{s}dM(s,0) \geq 0$ a.s. $\mathbb{P}$, and $\mathbb{P} (\int_{0}^{T}Z^1_{s}dM(s,0) >0)>0$.  A FLVR is a sequence of random variables  $Z^1_{n}$ such that $\int_{0}^{T}Z^1_{n,s}dM(s,0)\geq-\frac{1}{n}$  and $V_{T}^{Z^1_{n}}$ converges a.s. to some limit $V\in[0,\infty]$ a.s., with $V$ not identically zero. The standard economy has the NFLVR property if FLVR condition fails. 
\end{definition}
\begin{theorem}\label{std}
Let $S$ be a locally bounded positive semimartingale. Then, there is NFLVR in the standard economy if and only if there exists a $\mathbb{Q}$-local martingale measure equivalent to $\mathbb{P}$ such that $S$ is a $\mathbb{Q}$-local martingale.
\end{theorem}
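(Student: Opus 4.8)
The plan is to treat Theorem \ref{std} as the Delbaen--Schachermayer first fundamental theorem (Delbaen and Schachermayer (1994)) transcribed into the trading-strategy class of Definition \ref{tradin}, proving the two implications separately. Throughout, $S=M(\cdot,0)$ is the stock price and the portfolio value of $Z^1$ is $\int_0^\cdot Z^1_s\,dM(s,0)$ as in Definition \ref{def5}.

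For ``existence of an equivalent local martingale measure $\Rightarrow$ NFLVR'' I would argue directly. Let $\mathbb{Q}\sim\mathbb{P}$ with $S$ a $\mathbb{Q}$-local martingale. Any admissible $Z^1$ of Definition \ref{tradin} is predictable and, having c\'adl\'ag, c\'agl\'ad or continuous paths, is locally bounded, hence $S$-integrable, so $N_t:=\int_0^t Z^1_s\,dM(s,0)$ is a $\mathbb{Q}$-local martingale; by $\alpha$-admissibility $N_t\ge-\alpha$, and a local martingale bounded below is a $\mathbb{Q}$-supermartingale (Fatou), so $E_{\mathbb{Q}}(N_T)\le N_0=0$. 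This excludes arbitrage: if $N_T\ge0$ a.s. and $\mathbb{P}(N_T>0)>0$, then $\mathbb{Q}(N_T>0)>0$ by equivalence, forcing $E_{\mathbb{Q}}(N_T)>0$. For part (b) of NFLVR, given $Z^1_n$ with $\int_0^T Z^1_{n,s}\,dM(s,0)\ge-\tfrac1n$ and $\int_0^T Z^1_{n,s}\,dM(s,0)\to V\ge0$ a.s., Fatou applied to $\int_0^T Z^1_{n,s}\,dM(s,0)+\tfrac1n\ge0$ gives $E_{\mathbb{Q}}(V)\le\liminf_n E_{\mathbb{Q}}(\int_0^T Z^1_{n,s}\,dM(s,0))\le0$, hence $V=0$ a.s..

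For the converse ``NFLVR $\Rightarrow$ existence of an equivalent local martingale measure'' I would run the Delbaen--Schachermayer scheme. Continuous predictable processes of bounded variation form a subclass of the strategies of Definition \ref{tradin} (they have zero quadratic variation), so NFLVR holds a fortiori for this subclass. Rerunning the proofs of Theorem \ref{themu} and Corollary \ref{mjaft1} in the standard economy --- where the transaction-cost and bid--ask terms vanish, so the value process is simply $\int_0^\cdot Z^1_s\,dM(s,0)$ --- shows that the convex cone $\mathcal{C}_s$ generated by $\{\int_0^T Z^1_s\,dM(s,0):Z^1\ \text{admissible, continuous, predictable of bounded variation}\}$ minus $L^0_+$ is Fatou-closed, hence $\mathcal{C}_s\cap L^\infty$ is $\sigma(L^\infty,L^1)$-closed by the Krein--Smulian theorem; NFLVR gives $\mathcal{C}_s\cap L^\infty_+=\{0\}$, so the Kreps--Yan separation theorem (Kreps (1981), Yan (1980)) yields a probability $\mathbb{Q}\sim\mathbb{P}$ with $E_{\mathbb{Q}}(f)\le0$ for all $f\in\mathcal{C}_s\cap L^\infty$. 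Assuming first $M(\cdot,0)$ bounded, I would test against the elementary integrand $\psi^c(s)=c\mathbf{1}_{A_u}\mathbf{1}_{(u,t]}(s)$ with $A_u\in\mathcal{F}_u$, $u<t$, $c\in\mathbb{R}$, approximating it uniformly on compacts in probability by the continuous bounded-variation strategies $\psi^c_n(s)=n\int_{s-1/n}^s\psi^c_v\,dv$ and passing to the limit by the dominated convergence theorems for stochastic integrals and for expectations (as in the proof of Theorem \ref{ftap}); this gives $E_{\mathbb{Q}}(c\mathbf{1}_{A_u}(M(t,0)-M(u,0)))\le0$, and using $-c$ too yields equality, so $M(\cdot,0)$ is a $\mathbb{Q}$-martingale. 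For a general locally bounded positive $S$, a stopping-time localization together with Corollary (1.2) of Delbaen and Schachermayer (1994) promotes this to $S$ being a $\mathbb{Q}$-local martingale.

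The hard part will be the Fatou- and weak$^*$-closedness of $\mathcal{C}_s$, the technical core of Delbaen and Schachermayer (1994); here it is supplied by the compactness machinery of the main text (Proposition \ref{proe}, Lemma \ref{lemmanew}, Theorem \ref{themu}, Corollary \ref{mjaft1}) restricted to continuous bounded-variation integrands. A secondary subtlety is that Definition \ref{tradin} does not admit all $S$-integrable predictable integrands, so one must check the continuous bounded-variation strategies are already rich enough both to force $\mathcal{C}_s\cap L^\infty_+=\{0\}$ under NFLVR and to separate $M(\cdot,0)$ from its increments --- this is handled by the Bank--Baum (2004) and \c{C}etin et al. (2004) approximation of stochastic integrals by integrals against continuous bounded-variation integrands.
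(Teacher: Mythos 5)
The paper's own ``proof'' of Theorem~\ref{std} is a single sentence: \emph{``See Delbaen and Schachermayer (1994) for a formal proof.''} Your proposal is a genuine, self-contained reconstruction of the argument the citation gestures at, so in form it is a very different object from what the paper provides, while in substance it follows the Delbaen--Schachermayer (DS94) template (supermartingale argument for the easy direction; Fatou-closedness, Krein--Smulian, Kreps--Yan, and an approximation by elementary integrands for the hard direction). Both your directions are correct, and your use of Theorem~\ref{themu} and Corollary~\ref{mjaft1} (restricted to the standard economy, where the $T^{Z^1}$ and $\tilde P$ terms vanish) to supply the Fatou/weak$^*$-closedness is not circular: neither of those results, nor Lemma~\ref{lemmanew} and Lemma~\ref{lemmbou} that feed into them, invoke Theorem~\ref{std}; only Theorem~\ref{ftap} and Lemma~\ref{lemmnflvr} do.

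One point worth making explicit: you correctly flag a subtlety that the paper silently passes over. The ``standard economy'' of Definition~\ref{def5} only admits the restricted strategy class of Definition~\ref{tradin} (c\'adl\'ag/c\'agl\'ad/continuous, bounded quadratic variation), whereas DS94's theorem is stated for all bounded-below $S$-integrable predictable integrands. Restricting the strategy class weakens NFLVR, so the implication ``NFLVR for the restricted class $\Rightarrow$ ELMM'' is not literally DS94's theorem and needs the Bank--Baum/\c{C}etin approximation of stochastic integrals by continuous bounded-variation integrands, which you invoke. In that sense your proof is not just a reconstruction but actually fills a logical gap the paper's bare citation leaves open. A small observation: for the specific sub-cone you need (continuous bounded-variation integrands, zero quadratic variation) you could cite Corollary~\ref{mjaft} directly rather than re-running the full Lemma~\ref{lemmanew}; this would streamline the argument slightly.
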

\begin{proof}
See Delbaen \& Schachermayer (1994) for a formal proof. 
\end{proof}
\begin{theorem}\label{sftap}
Let $M(\cdot,0)$ be as in Sec. \ref{sec2}. Suppose the equivalent $\mathbb{Q}$-local martingale measure is unique. Then the standard economy is complete.
\end{theorem}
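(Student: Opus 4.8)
The plan is to establish the second fundamental theorem for the standard economy by the classical Hilbert-space (predictable representation) argument of Harrison and Pliska, in the $H^2$-form used by \c{C}etin \textit{et al.} (2004). Since $\mathbb{Q}$ is an equivalent local martingale measure for $M(\cdot,0)$ and $M(\cdot,0)$ lies in $H^2$ under $\mathbb{Q}$ by the standing assumption of Section (\ref{sec2}), $M(\cdot,0)$ is in fact a square-integrable $\mathbb{Q}$-martingale. Let $\mathcal{H}^2_0(\mathbb{Q})$ denote the Hilbert space of $\mathbb{Q}$-martingales $N$ with $N_0=0$ and $E_{\mathbb{Q}}(N_T^2)<\infty$, equipped with $\|N\|^2=E_{\mathbb{Q}}(N_T^2)$, and let $\mathcal{I}\subset\mathcal{H}^2_0(\mathbb{Q})$ be the set of stochastic integrals $\int_0^{\cdot}Z^1_sdM(s,0)$ with $Z^1$ predictable and $E_{\mathbb{Q}}\big(\int_0^T(Z^1_s)^2d[M,M]_s\big)<\infty$; by the Kunita--Watanabe theory $\mathcal{I}$ is a closed (stable) subspace of $\mathcal{H}^2_0(\mathbb{Q})$. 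In the $L^2(\mathbb{Q})$-sense of Definition (\ref{cc}), completeness of the standard economy is precisely the statement that $\{N_T:N\in\mathcal{I}\}$ coincides with $L^2_0(\mathbb{Q}):=\{X\in L^2(\mathbb{Q}):E_{\mathbb{Q}}(X)=0\}$, since a replicating strategy for $X$ consists of $b=E_{\mathbb{Q}}(X)$ together with the integrand $Z^1$ giving $X-b=\int_0^TZ^1_sdM(s,0)$.

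I would then argue by contraposition. If the standard economy is not complete, $\mathcal{I}$ is a proper closed subspace of $\mathcal{H}^2_0(\mathbb{Q})$, so there is a non-zero $N\in\mathcal{H}^2_0(\mathbb{Q})$ strongly orthogonal to $M(\cdot,0)$, i.e.\ with $[M(\cdot,0),N]$ a $\mathbb{Q}$-martingale, equivalently $\langle M(\cdot,0),N\rangle\equiv0$. After the standard reduction (cf.\ Delbaen and Schachermayer (1994)) to the case where $M(\cdot,0)$ is bounded, and a truncation/stopping of $N$, I may assume that $N$ is bounded with $\|N_T\|_{\infty}<1$; this is the classical Harrison--Pliska step producing a bounded functional orthogonal to the replicable claims and with $E_{\mathbb{Q}}(N_T)=0$. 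Set $\Lambda=1+N$, a strictly positive $\mathbb{Q}$-martingale with $\Lambda_0=1$, and define $\mathbb{Q}'$ by $\tfrac{d\mathbb{Q}'}{d\mathbb{Q}}=\Lambda_T=1+N_T$. Then $\mathbb{Q}'$ is a probability measure equivalent to $\mathbb{Q}$, hence to $\mathbb{P}$, and $\mathbb{Q}'\neq\mathbb{Q}$ because $N\not\equiv0$. Since $M(\cdot,0)N$ is a $\mathbb{Q}$-martingale (indeed $M(\cdot,0)N-\langle M(\cdot,0),N\rangle$ is a $\mathbb{Q}$-martingale and $\langle M(\cdot,0),N\rangle\equiv0$) and $M(\cdot,0)$ is itself a $\mathbb{Q}$-martingale, the product $M(\cdot,0)\Lambda=M(\cdot,0)+M(\cdot,0)N$ is a $\mathbb{Q}$-martingale; by the Bayes/change-of-measure rule this says exactly that $M(\cdot,0)$ is a $\mathbb{Q}'$-martingale, and undoing the localisation, a $\mathbb{Q}'$-local martingale. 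Thus $\mathbb{Q}'$ is an equivalent local martingale measure for $M(\cdot,0)$ distinct from $\mathbb{Q}$, contradicting uniqueness. Hence $\mathcal{I}=\mathcal{H}^2_0(\mathbb{Q})$ and every $X\in L^2(\mathbb{Q})$ admits the representation $X=E_{\mathbb{Q}}(X)+\int_0^TZ^1_sdM(s,0)$ with $Z^1$ predictable and $E_{\mathbb{Q}}\big(\int_0^T(Z^1_s)^2d[M,M]_s\big)<\infty$.

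It remains to match this with the admissibility requirement of Definition (\ref{def5}). Given a claim $X\in L^2(\mathbb{Q})$ bounded below, say $X\geq-c$, the replicating integral process $\int_0^tZ^1_sdM(s,0)=E_{\mathbb{Q}}(X\mid\mathcal{F}_t)-E_{\mathbb{Q}}(X)$ is bounded below by $-c-|E_{\mathbb{Q}}(X)|$, so $Z^1$ is $\alpha$-admissible with $\alpha=c+|E_{\mathbb{Q}}(X)|$; taking $V_0^{Z^1}=E_{\mathbb{Q}}(X)$ gives $X=V_T^{Z^1}$, exactly as in the hedging discussion preceding Theorem (\ref{com}), and a general $L^2(\mathbb{Q})$-claim is handled by the approximation already used there. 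I expect the principal obstacle to be precisely the passage from abstract $L^2(\mathbb{Q})$-orthogonality to a genuinely \emph{bounded} martingale $N$, $\|N_T\|_\infty<1$, strongly orthogonal to $M(\cdot,0)$, so that $1+N$ is a legitimate strictly positive density, together with the companion identity $\langle M(\cdot,0),N\rangle\equiv0$ that makes $M(\cdot,0)$ survive as a $\mathbb{Q}'$-local martingale; the localisation to the unbounded case and the admissibility bookkeeping are routine.
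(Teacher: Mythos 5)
Your proposal is the classical Harrison--Pliska (1981) Hilbert-space / predictable-representation argument---pass to $\mathcal{H}_0^2(\mathbb{Q})$, show the stable subspace generated by $M(\cdot,0)$ is proper if completeness fails, produce a bounded strongly orthogonal martingale $N$, and use $1+N$ as a density for a second equivalent local martingale measure---and this is exactly what the paper invokes, since it gives no argument of its own and simply cites Harrison and Pliska (1981). The approach is correct; the localisation/truncation needed to make $N$ bounded, which you flag as the principal obstacle, is the standard step in that reference and the subsequent literature (e.g.\ Jacod--Shiryaev), and your admissibility bookkeeping via $b=E_{\mathbb{Q}}(X)$ matches how the paper actually uses the result in the discussion preceding Theorem~\ref{com}.
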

\begin{proof}
See Harrison \& Pliska (1981) and Protter (2001) for the proof.
\end{proof}



\end{document}